\newcommand{\weg}[1]{}
\newcommand{\ignore}[1]{}
\theoremstyle{plain}
\newtheorem{theorem}[thm]{Theorem} %
\newtheorem{corollary}[thm]{Corollary}
\newtheorem{lemma}[thm]{Lemma}
\newtheorem{proposition}[thm]{Proposition}
\theoremstyle{definition}
\newtheorem{definition}[thm]{Definition}
\newtheorem{remark}[thm]{Remark}
\newtheorem{example}[thm]{Example}
\newtheorem{algorithm}[thm]{Algorithm}
\renewcommand{\paragraph}[1]{\vspace{2ex}\noindent\textbf{#1.} \hspace{.2em}}
\newcommand{\uRl}[1]{\texttt{{#1}}}
\def\moverlay{\mathpalette\mov@rlay}
\def\mov@rlay#1#2{\leavevmode\vtop{%
   \baselineskip\z@skip \lineskiplimit-\maxdimen
   \ialign{\hfil$#1##$\hfil\cr#2\crcr}}}
\newcommand{\appsection}[1]{
  \let\oldthesection\thesection
  \renewcommand{\thesection}{Appendix \oldthesection}
  \section{#1}
  \let\thesection\oldthesection
}
\newcommand{\llbracket}{{[\![}}
\newcommand{\rrbracket}{{]\!]}}
\newcommand{\mit}{\mathit}
\newcommand{\mcl}{\mathcal}
\newcommand{\msf}{\mathsf}
\newcommand{\SN}{\msf{SN}}
\newcommand{\SNr}[1]{\SN_{#1}}
\newcommand{\SNrs}[1]{\funap{\SNr{#1}}}
\newcommand{\rel}[2]{\mathrel{#1/#2}}
\newcommand{\SNrel}[2]{\SN_{#1/#2}}
\newcommand{\SNrels}[2]{\funap{\SNrel{#1}{#2}}}
\newcommand{\sred}{{\rightarrow}}
\newcommand{\red}{\mathrel{\sred}}
\newcommand{\smred}{{\twoheadrightarrow}}
\newcommand{\mred}{\mathrel{\smred}}
\newcommand{\pairlft}{{\langle}}                %
\newcommand{\pairrgt}{{\rangle}}                %
\newcommand{\pairsep}{{,\,}}                    %
\newcommand{\pairstr}[1]{\pairlft#1\pairrgt}    %
\newcommand{\pair}[2]{\pairstr{#1\pairsep#2}}   %
\newcommand{\triple}[2]{\pair{#1\pairsep#2}}    %
\newcommand{\quadruple}[2]{\triple{#1\pairsep#2}} 
\newcommand{\tuple}{\pairstr}
\newcommand{\sjoin}{{\cup}}
\newcommand{\join}{\mathbin{\sjoin}}
\newcommand{\smeet}{{\cap}}
\newcommand{\meet}{\mathbin{\smeet}}
\newcommand{\sfunin}{{:}}
\newcommand{\funin}{\mathrel{\sfunin}}
\newcommand{\powerset}[1]{{2}^{#1}}
\newcommand{\setemp}{{\varnothing}}%
\newcommand{\srelcomp}{\cdot}
\newcommand{\relcomp}{\binap{\srelcomp}} %
\newcommand{\setsize}{\lstlength}
\newcommand{\mybind}[3]{#1#2.\:#3}
\newcommand{\myall}{\mybind{\forall}}
\newcommand{\mylam}[2]{\lambda#1.#2}
\newcommand{\nat}{\mathbb N}
\newcommand{\asig}{\Sigma}
\newcommand{\avars}{\mathcal{X}}
\newcommand{\ster}{\msf{Ter}}
\newcommand{\ter}{\funap{\ster}}
\newcommand{\svar}{\mit{Var}}
\newcommand{\vars}{\funap{\svar}}
\newcommand{\atrs}{R}
\newcommand{\btrs}{S}
\newcommand{\ctrs}{U}
\newcommand{\termset}{T}
\newcommand{\apos}{p}
\newcommand{\pos}{\funap{\mathcal{P}\!os}}
\newcommand{\asubst}{\sigma}
\newcommand{\subst}[2]{#2#1}
\newcommand{\ahole}{{[\hspace*{0.75pt}]}}%
\newcommand{\acontext}{C}
\newcommand{\contexthole}{\ahole}
\newcommand{\contextfill}[2]{#1[#2]}
\newcommand{\aalg}{\mcl{A}}
\newcommand{\balg}{\mcl{B}}
\newcommand{\calg}{\mcl{C}}
\newcommand{\aset}{A}
\newcommand{\bset}{B}
\newcommand{\cset}{C}
\newcommand{\arel}{\mathrel{R}}
\newcommand{\sinterpret}{\llbracket \cdot \rrbracket}
\newcommand{\interpret}[1]{\llbracket #1 \rrbracket}
\newcommand{\interpreta}[2]{\llbracket #1,#2 \rrbracket}
\newcommand{\alggt}{\succ}
\newcommand{\algge}{\sqsupseteq}
\newcommand{\algnu}{\leadsto}
\newcommand{\qge}{\ge}
\newcommand{\binap}[3]{#2\mathbin{#1}#3}
\newcommand{\funap}[2]{#1(#2)}
\newcommand{\bfunap}[3]{\funap{#1}{#2,#3}}
\newcommand{\where}{\mathrel{|}}
\newcommand{\sdefdby}{{=}}%
\newcommand{\defdby}{\mathrel{\sdefdby}}
\renewcommand{\implies}{\;\Rightarrow\;}
\newcommand{\en}{\;\;\&\;\;}
\newcommand{\sarity}{\sharp}%
\newcommand{\arity}{\funap{\sarity}}
\newcommand{\lstemp}{\bot}
\newcommand{\lstlength}[1]{|#1|}
\newcommand{\lstconcat}[2]{#1 #2}
\newcommand{\wordemp}{\varepsilon}
\newcommand{\punc}[1]{\:\text{#1}}
\newcommand{\samumap}{\mu}
\newcommand{\amumap}{\funap{\samumap}}
\newcommand{\combS}{\msf{S}}
\newcommand{\combK}{\msf{K}}
\newcommand{\combOwl}{\delta} %
\newcommand{\scombAt}{\msf{@}}
\newcommand{\combAt}{\bfunap{\scombAt}}
\newcommand{\famil}[1]{\funap{\mcl{F}\!am_{#1}}}
\newcommand{\family}{\famil{}}
\newcommand{\lang}{\funap{\mcl{L}}}
\newcommand{\slAbel}[2]{#1^{#2}}
\newcommand{\sdolAbel}[1]{\mit{lab}_{\!#1}}
\newcommand{\dolAbel}[1]{\bfunap{\sdolAbel{#1}}}
\newcommand{\lAbelsig}[1]{\funap{\sdolAbel{#1}}}
\newcommand{\lAbeltrs}[1]{\funap{\sdolAbel{#1}}}
\newcommand{\core}[1]{#1_{c}}
\newcommand{\nf}[2]{#1_{\funap{\mit{nf}}{#2}}}
\newcommand{\nfamumap}[1]{\samumap^{\funap{\mit{nf}}{#1}}}
\newcommand{\pto}{\rightharpoonup}
\newcommand{\defd}[1]{\left . #1 \right \downarrow}
\newcommand{\undefd}[1]{\left .#1\right\uparrow}
\newcommand{\peq}{\simeq}
\newcommand{\undefelt}{\bot}
\newcommand{\prog}{\mit}
\newcommand{\factorial}[1]{#1!}
\def\doi{6 (3:20) 2010}
\begin{document}

\title{Local Termination: theory and practice}

\author[J.~Endrullis]{J\"{o}rg Endrullis\rsuper a}
\address{{\lsuper{a,b}}%
    Vrije Universiteit Amsterdam,
    De Boelelaan 1081a,
    1081 HV Amsterdam,
    The Netherlands
}
\email{joerg@few.vu.nl,rdv@cs.vu.nl}

\author[R.~de Vrijer]{Roel de Vrijer\rsuper b}
\address{\vskip-6 pt}

\author[J.~Waldmann]{Johannes Waldmann\rsuper c}
\address{{\lsuper c}%
    Hochschule f\"ur Technik, Wirtschaft und Kultur Leipzig,
    Fakult\"at IMN, PF 30 11 66, D-04251 Leipzig, Germany
}
\email{waldmann@imn.htwk-leipzig.de}
\keywords{local termination, monotone algebras, combinatory logic, recursive program schemes}
\subjclass{D.3.1, F.4.1, F.4.2, I.1.1, I.1.3}

\begin{abstract}

The characterisation of termination using well-founded monotone algebras
has been a milestone on the way to automated termination techniques, 
of which we have seen an extensive development over the past years. 
Both the semantic characterisation and most known termination methods 
are concerned with \emph{global} termination, %
uniformly of all the terms of a term rewriting system (TRS). 
In this paper we consider \emph{local} termination, %
of specific sets of terms within a given TRS.

The principal goal of this paper is generalising the semantic
characterisation of global termination to local termination.
This is made possible by admitting the well-founded monotone algebras to be partial.
We also extend our approach to local relative termination.

The interest in local termination naturally arises in program
verification, where one is probably interested only in sensible inputs,
or just wants to characterise the set of inputs for which a program 
terminates.
Local termination will be also be of interest when dealing with 
a specific class of terms within a TRS that is 
known to be non-terminating, such as combinatory logic (CL) or a TRS encoding recursive
program schemes or Turing machines.

We show how some of the
well-known techniques for proving global termination, 
such as stepwise removal of rewrite rules and
semantic labelling, can be adapted to the local case.
We also describe transformations reducing local 
to global termination problems.
The resulting techniques for proving local termination have
in some cases already  been automated.

One of our  applications concerns the
characterisation of the terminating $\combS$-terms in CL as regular language. 
Previously this language had already been found 
via a tedious analysis of the reduction behaviour of $\combS$-terms. These 
findings have now been vindicated by a fully automated and verified proof.

\end{abstract}

\maketitle

\section{Introduction}

\noindent An important contribution to the development
of automated methods for proving termination
has turned out to be
the characterization of termination using well-founded monotone algebras.
Both the semantic characterization and most known termination methods 
are concerned with \emph{global} termination, uniformly of 
all the terms of a TRS. 
This is remarkable, as termination is prima facie a property of individual terms. 
More generally, one may consider the termination problem
for an arbitrary set of terms of a TRS. 
We call this the \emph{local} termination problem.

A typical area where termination techniques are applied is that
of program verification.
The termination problems naturally arising in program verification
are local termination problems:
the central interest is termination of a program when started on a valid input.
A simple example of a program that is not globally terminating 
is the factorial function:
\begin{align*}
  \funap{\prog{fac}}{0} &= 1\\
  \funap{\prog{fac}}{n} &= n \cdot \funap{\prog{fac}}{n-1}
\end{align*}
This function terminates for all integers $n \ge 0$.
However, when started on a negative number this function
is caught in an infinite recursion. (This program will be used as an illustration in
Examples~\ref{ex:fac} and \ref{ex:fac2}.)

In logic programming (e.g. Prolog), 
local termination has been a central field of research over the past years.
Local termination problems of Haskell programs 
have been considered in~\cite{panitz97} and~\cite{Haskell06}.
In~\cite{panitz97}, a tableau calculus is devised to show termination
of sets of terms of the form $f\, a_1\, \ldots \,a_n$ where the $a_i$'s are in normal form.
In~\cite{Haskell06}, a transformation from Haskell programs into dependency 
pair problems~\cite{AG00} is given, which then in turn
are solved using methods for global termination.

Surprisingly, for TRSs not much work is known about local termination.
We mention the method of match-bounded string rewriting \cite{Matchbounds04},
which can be used to prove local termination for sets of strings generated by 
a regular automaton.
Indeed, this method can be viewed as an instance of the semantic framework
we develop in this paper.

Local termination is of special interest when dealing with 
specific classes of terms within a TRS that is known to be non-terminating.
Examples of such TRSs are combinatory logic (CL)~\cite{cl} and
encodings of recursive program schemes or Turing machines.
The well-known halting problem for Turing machines is a local termination problem.
Clearly, this holds for the blank tape halting problem which just
asks for termination on the blank tape.
On the first glance the uniform halting problem -- asking for termination on all
inputs -- might seem to be global.
However, this is a local termination problem as well, since Turing machines
are started in a distinguished initial state and admit only one head to work on the tape.
In this paper we will use CL and the halting problem for Turing machines to illustrate 
some of our results 
(Examples~\ref{ex:S},~\ref{ex:Smodel},~\ref{owl},~\ref{starling},~\ref{ex:SK} and~\ref{ex:TM}).

\paragraph{Outline and Contribution}
In Section~\ref{sec:sn} we generalize the semantic characterization from 
global termination to local termination based on well-founded, monotone partial $\asig$-algebras.
This establishes a first, important step towards the development of 
automatable techniques for proving local termination.
In Section~\ref{sec:snrel} we extend this to relative termination,
obtaining a characterization using extended monotone \mbox{partial $\asig$-algebras}.

For global termination it is common practice to stepwise simplify 
the proof obligation by removing rules.
For local termination (the strictly decreasing) rules cannot simply be removed
as they influence the set of reachable terms.
We need to impose weak conditions on the `removed' rules, see Section~\ref{sec:stepwise}.

Having developed the general framework, in the remaining sections we look for
fruitful instances of partial monotone algebras,
suitable for automation.

In Section~\ref{sec:model} we consider the case that the family of the set of terms 
for which we want to prove local termination can be described by a partial model.
 A variant of semantic labeling~\cite{Z95} can then be used
to transform the local termination problem into a global termination problem,
and the available provers for global termination can be applied.

In Section~\ref{sec:s} 
we consider TRSs with the property that strong and weak normalization coincide.
In particular, this holds for orthogonal, non-erasing TRSs.
In case the language of normalizing terms happens to be regular,
we show how a tree automaton (partial model) can be found 
accepting exactly the  normalizing terms.
Then we label the TRS with the obtained partial model,
and employ the theory developed in Section~\ref{sec:model}
to transform the local termination problem for the set
of normalizing terms to 
global termination of the labeled TRS.
We automated the search for the tree automaton as well as the labeling.

We apply this method to two well-known combinators from CL:
$\combS$ and $\combOwl$ with the rewrite rules 
$\combS\,xyz \to xz(yz)$ and $\combOwl\,xy \to y(xy)$, respectively.
Determining the language $N$ of normalizing $\combS$-terms
has been open until the year 2000~\cite{DBLP:journals/iandc/Waldmann00}.
Using the method from Section~\ref{sec:s}
we can now automatically find the partial model for $N$,
and we obtain a labeled system whose global termination
coincides with local termination on $N$.
Global termination of this labeled system (containing 1800 labeled rules) 
has been proven by TTT2~(1.0)~\cite{TTT09} and the proof has been verified by CeTA~(1.05)~\cite{ceta}.

In Section~\ref{sec:rfc} we demonstrate that the local termination method proposed in Section~\ref{sec:model}
can also be applied for proving global termination.
To that end, we transform the global termination into local termination
for the set of right-hand sides of forward closures~\cite{dershowitz:81}.
Then we transform the obtained system back into a global termination problem
using the transformation from Section~\ref{sec:model}.
We show the applicability of this method by solving an example 
that remained unsolved in the last termination competition~\cite{term:comp:08}.
After the transformation, the system allows
for a simple termination proof using linear polynomial interpretations.

In Section~\ref{sec:qmodel} we combine
the partial variant of the quasi-models of~\cite{Z95} with 
monotone algebras to obtain partial monotone algebras.
Roughly speaking, partial quasi-models are deterministic tree automata~\cite{tata:07}
equipped with a relation $\ge$ on the states which guarantees 
that the language of the automaton is closed under rewriting.
Thereby we obtain partial monotone algebras
that can be applied successfully for proofs of local termination.
Indeed, this method can be automated and, as a matter of fact,
we have devised an implementation.

A preliminary version of this paper has appeared in~\cite{endr:vrij:wald:09};
our additional contribution is as follows:
\begin{enumerate}[$\bullet$]
  \item 
    For TRSs where strong and weak normalization coincide 
    and where the language of normalizing terms $N$ is regular, 
    we describe an algorithm for constructing
    a tree automaton (a partial model) accepting exactly the language $N$.
    For example, this method is applicable for (fully automatically) determining the
    language of normalizing $\combS$-terms.
  \item
    We show that methods for local termination can fruitfully be employed
    for proving global termination, classically the main focus of termination analysis for TRSs.
    Employing the RFC method (right-hand sides of forward closures)
    in combination with the transformation from local to global termination
    from Section~\ref{sec:model},
    we give a new proof for a string rewrite system (SRS) for which
    no proof had been found in the termination competition so far.
\end{enumerate}

\section{Preliminaries}

\subsection*{Term rewriting}

\noindent A \emph{signature} $\asig$ is a non-empty set of symbols,
 each having a fixed \emph{arity},
given by a map $\sarity \funin \asig \to \nat$. 
Let $\asig$ be a signature and $\avars$ a set of variable symbols.
The \emph{set $\ter{\asig,\avars}$ of terms over $\asig$ and $\avars$} 
is the smallest set satisfying:
$\avars \subseteq \ter{\asig,\avars}$, and
$f(t_1,\dots,t_n) \in \ter{\asig,\avars}$ if $f \in \asig$ with arity $n$ 
and $\forall i (1 \leq i \leq n): t_i \in \ter{\asig,\avars}$.
We use $x,y,z,\ldots$ to range over variables.
The set of positions $\pos{t} \subseteq \nat^*$ of a term $t \in \ter{\asig,\avars}$
is defined as follows:
$\pos{\funap{f}{t_1,\ldots,t_n}} = 
 \{\lstemp\} \cup \{\lstconcat{i}{\apos} \where 1 \le i \le \arity{f},\, \apos \in \pos{t_i}\}$
and
$\pos{x} = \{\lstemp\}$ for variables $x \in \avars$.

A substitution $\asubst$ is a map $\asubst : \avars \to \ter{\asig,\avars}$.
For a term $t \in \ter{\asig,\avars}$ %
we define $\subst{\asubst}{t}$ as the result of replacing each $x \in \avars$
in $t$ by $\funap{\asubst}{x}$.
Formally, $\subst{\asubst}{t}$ is 
inductively defined by
$\subst{\asubst}{x} \defdby \funap{\asubst}{x}$ for variables $x \in \avars$
and otherwise
$\subst{\asubst}{\funap{f}{t_1,\ldots,t_n}} \defdby \funap{f}{\subst{\asubst}{t_1},\ldots,\subst{\asubst}{t_n}}$.
Let $\contexthole$ be a fresh %
symbol, $\contexthole \not\in \asig\join\avars$.
A \emph{context} $\acontext$ is a term from $\ter{\asig,\avars\join\{\contexthole\}}$
containing precisely one occurrence of $\contexthole$.
By $\contextfill{\acontext}{s}$ we denote the term $\subst{\asubst}{\acontext}$
where $\funap{\asubst}{\contexthole} = s$ and $\funap{\asubst}{x} = x$ 
for all $x \in \avars$.

A \emph{term rewriting system (TRS)} $R$ over $\asig$ and $\avars$ is a set of
pairs $\pair{\ell}{r} \in \ter{\asig,\avars}$,
called \emph{rewrite rules} and written as $\ell \to r$,
for which 
the \emph{left-hand side} $\ell$ is not a variable ($\ell \not\in \avars$),
and all variables in the \emph{right-hand side} $r$ occur in $\ell$ as well
($\vars{r} \subseteq \vars{\ell}$). 
Let $\atrs$ be a TRS.
For terms $s, t \in \ter{\asig,\avars}$ we write $s \to_\atrs t$ 
 (or briefly $s \to t$)
if there exists a rule $\ell \to r \in \atrs$, a substitution $\asubst$
and a context $\acontext \in \ter{\asig,\avars\join\{\contexthole\}}$
such that
$s = \contextfill{\acontext}{\subst{\asubst}{\ell}}$
and
$t = \contextfill{\acontext}{\subst{\asubst}{r}}$. 
The reflexive--transitive closure of $\to$ is denoted by $\mred$.
We call $\to$  the \emph{one-step rewrite relation} induced by $\atrs$
and  $\mred$ the \emph{many-step rewrite} or \emph{reduction relation}.
If $t \mred t'$ then we call $t'$ an \emph{($R$)-reduct} of $t$.

\begin{definition}\normalfont\label{def:local:family}
  Let $\atrs$ be a TRS over $\asig$ and $\termset \subseteq \ter{\asig,\avars}$ a set of terms.
  The \emph{family $\famil{\atrs}{\termset}$ of $\termset$} 
  is the set of (not necessarily proper) subterms of $R$-reducts of terms $t \in \termset$
  (that is, the least set containing $t$ that is closed under reduction and taking subterms).
\end{definition}

\subsection*{Partial functions}
For partial functions $f \funin \aset_1 \times \ldots \times \aset_n \pto \aset$
and $a_1 \in \aset_1$, \ldots, $a_n \in \aset_n$
we call $\funap{f}{a_1,\ldots,a_n}$ \emph{defined} and write 
$\defd{\funap{f}{a_1,\ldots,a_n}}$ 
whenever $\tuple{a_1,\ldots,a_n}$ is in the domain of $f$.
Otherwise $\funap{f}{a_1,\ldots,a_n}$ is called \emph{undefined} and we write 
$\undefd{\funap{f}{a_1,\ldots,a_n}}$. We use the same terminology and notation 
for composite expressions involving partial functions. Between such expression 
we use Kleene equality: 
\begin{align*}
  \mathit{exp_1} \peq \mathit{exp_2} \;\;\Longleftrightarrow_{\mit{def.}}\;\;
  (\undefd{\mathit{exp_1}} \text{ and } \undefd{\mathit{exp_2}}) \text{ or }
  (\defd{\mathit{exp_1}} \text{ and } \defd{\mathit{exp_2}} \text{ and } \mathit{exp_1} = \mathit{exp_2})
\end{align*}
Note that an expression can only be defined if all its subexpressions are. 

\begin{definition}\normalfont\label{def:monotone}
Let ${\aset}$ be a set and $\arel$ a relation on ${\aset}$.
We define two properties of an $n$-ary partial function $f$ with respect to $\arel$.
\begin{enumerate}[(1)]
\item 
$f$ is \emph{closed}
if for every $a, b \in \aset$ we have:
\[
\defd{\funap{f}{\ldots,a,\ldots}}  \en a \arel b
    \implies 
\defd{\funap{f}{\ldots,b,\ldots}} 
\]
\item
$f$ is \emph{monotone}
if for every $a, b \in \aset$ we have:
\[
\defd{\funap{f}{\ldots,a,\ldots}}  \en \defd{\funap{f}{\ldots,b,\ldots}}  \en a \arel b
    \implies 
\funap{f}{\ldots,a,\ldots} \arel \funap{f}{\ldots,b,\ldots}
\]
\end{enumerate}
\end{definition}

The functions that we  consider will be typically both
closed and monotone, which can be rendered  briefly as:
\[
\defd{\funap{f}{\ldots,a,\ldots}}  \en a \arel b
    \implies 
\funap{f}{\ldots,a,\ldots} \arel \funap{f}{\ldots,b,\ldots}
\]
By writing something like $\mathit{exp_1} \arel \mathit{exp_2}$
we   imply that $\mathit{exp_1}$ and $\mathit{exp_2}$ are defined.

\subsection*{Partial \texorpdfstring{$\boldsymbol{\asig}$}{}-algebras}
We give the definition of a partial algebra:
\begin{definition}\normalfont
  A \emph{partial $\asig$-algebra} $\pair{\aset}{\sinterpret}$ consists of 
  a non-empty set $\aset$
  and for each $n$-ary $f \in \asig$ a partial function $\interpret{f} \funin A^n \pto A$, 
  the \emph{interpretation of $f$}.
  A \emph{$\asig$-algebra} $\pair{\aset}{\sinterpret}$ is a partial $\asig$-algebra $\pair{\aset}{\sinterpret}$
  where all interpretations $\interpret{f}$ for $f \in \asig$ are total.

  Given a partial $\asig$-algebra $\aalg = \pair{\aset}{\sinterpret}$
  and a (partial) assignment of the variables, $\alpha \funin \avars \pto \aset$,
  we can  give an \emph{interpretation} $\interpreta{t}{\alpha}$ 
  of terms $t \in \ter{\asig,\avars}$,
  which,  however, will not always be defined.
  So the interpretation is a partial function from terms and partial assignments to $A$,
  inductively defined by:
  \begin{align*}
    \interpreta{x}{\alpha} &\defdby \funap{\alpha}{x} \\
    \interpreta{\funap{f}{t_1,\ldots,t_n}}{\alpha} &\defdby
         \funap{\interpret{f}}{\interpreta{t_1}{\alpha},\ldots,\interpreta{t_n}{\alpha}}
  \end{align*}
  For ground terms $t \in \ter{\asig,\setemp}$ we write $\interpret{t}$ for short.
\end{definition}
Whenever a term $t$ is defined, that is, $\defd{\interpret{t}}$,
then all subterms of $t$ are defined as well.
This is a consequence of the usual definition of the composition of partial functions (functional relations);
`undefined' is not an element of the domain.
We say that a set of terms $\termset$ is defined if all terms in $\termset$ are defined:

\begin{definition}\normalfont
A set $\termset \subseteq \ter{\asig,\setemp}$ is called \emph{defined} if 
   for all $t \in   \termset$ we have $\defd{\interpret{t}}$.
\end{definition}

\subsection*{Partial models}
First, we generalise the models from~\cite{Z95} to partial models.

\begin{definition}
  A \emph{model} for a TRS $\atrs$ is a $\asig$-algebra $\pair{\aset}{\sinterpret}$
  such that
  $\interpreta{\ell}{\alpha} = \interpreta{r}{\alpha}$ 
  for every rule $\ell \to r \in \atrs$ and every interpretation $\alpha \funin \vars{\ell} \to \aset$ of the variables.
\end{definition}
\begin{definition}\normalfont\label{def:local:model}
  Let $\atrs$ be a TRS over $\asig$. 
  A \emph{partial model $\aalg \defdby \pair{\aset}{\sinterpret}$ for $\atrs$} 
  is a partial $\asig$-algebra $\aalg$,
  such that
  \begin{align*}
  \defd{\interpreta{\ell}{\alpha}} \implies \interpreta{\ell}{\alpha} = \interpreta{r}{\alpha}
  \end{align*}
  for every $\ell \to r \in \atrs$ and $\alpha \funin \vars{\ell} \to \aset$.
\end{definition}
Thus the condition $\interpreta{\ell}{\alpha} = \interpreta{r}{\alpha}$ of models
is only required if
the interpretation of the left-hand side is defined, that is, $\defd{\interpreta{\ell}{\alpha}}$.
Observe that a left-hand side $\ell$ may be undefined while the corresponding right-hand side $r$ is defined;
the other way around is not permitted.
This asymmetry is crucial, 
since rewriting may turn an undefined (non-terminating) term into a defined (terminating) term,
but not the other way around.

\begin{definition}\normalfont\label{def:modellang}
  Let $\aalg = \pair{\aset}{\sinterpret}$ be a partial model.
  The \emph{language $\lang{\aalg}$ of $\aalg$} is:
  \begin{align*}
  \lang{\aalg} = \{t \in \ter{\asig,\setemp} \where \defd{\interpret{t}}\}
  \end{align*}
\end{definition}

We further generalise the concept of partial models to relations:
\begin{definition}\normalfont\label{def:dec}
  Let $\atrs$ be a TRS over $\asig$, $\aalg = \pair{\aset}{\interpret{\cdot}}$ be a partial $\asig$-algebra,
  and ${\alggt} \subseteq \aset \times \aset$ a binary relation.
  We say that $\triple{\aset}{\interpret{\cdot}}{{\alggt}}$ is a \emph{partial model for $\atrs$} if:
  \begin{align*}
   \defd{\interpreta{\ell}{\alpha}} 
   \implies
   \interpreta{\ell}{\alpha} \alggt \interpreta{r}{\alpha}
  \end{align*}
  for all ${\ell \to r} \in \atrs$ and every assignment $\alpha \funin \vars{\ell} \to \aset$.
  If additionally $\pair{\aset}{\interpret{\cdot}}$ is a (total) $\asig$-algebra, then
  $\triple{\aset}{\interpret{\cdot}}{{\alggt}}$ is called a \emph{model for $\atrs$}.
  Whenever the partial $\asig$-algebra $\aalg$ is clear from the context,
  we say `\emph{$\alggt$ is a partial model for $\atrs$}' for short.
\end{definition}

Note that $\pair{\aset}{\interpret{\cdot}}$ is a partial model
if and only if $\triple{\aset}{\interpret{\cdot}}{{=}}$ is a partial model.

\begin{remark}
The notion of partial model $\triple{\aset}{\interpret{\cdot}}{{\alggt}}$
is closely related to that of quasi-model~\cite{Z95}.
In particular,
a quasi-model for a TRS $R$ is a (total)
monotone model $\triple{\aset}{\interpret{\cdot}}{{\ge}}$ for $R$.
\end{remark}

\begin{remark}
  In Definition~\ref{def:local:model} and~\ref{def:dec} we could as well quantify over partial assignments $\alpha \funin \avars \pto \aset$
  in place of total assignment $\alpha \funin \avars \to \aset$.
  This gives rise to an equivalent definition as an undefined value for a variable in the left-hand side $\ell$ (and $\vars{r} \subseteq \vars{\ell}$)
  would result in $\interpreta{\ell}{\alpha}$ being undefined, and
  thereby invalidate the precondition $\defd{\interpreta{\ell}{\alpha}}$ of the implication.
\end{remark}

\section{Local Termination}\label{sec:sn}
\noindent We devise a complete characterization of local termination
based on an extension of the monotone algebra approach of~\cite{EWZ06,Z94}.
The central idea is the use of monotone partial algebras,
that is, the operations of the algebras are allowed to be partial functions.
This idea was introduced in
 \cite{endr:grab:hend:isih:vrij:08}, where these algebras have been employed
to obtain a complete characterization of local infinitary strong normalization.
First we give the definition of local termination:

\begin{definition}\normalfont
  Let $A$ be a set and ${\to} \subseteq {A \times A}$ a binary relation on $A$.
  Then $\to$ is called \emph{terminating on} $B \subseteq A$ if
  no $b \in B$ admits an infinite sequence 
  \begin{align*}
    b = b_1 \to b_2 \to \ldots 
  \end{align*}
  Note that $b = b_1 \in B$. The elements $b_2$, $b_3$, \ldots, however, may or may not be in $B$.
\end{definition}

\begin{definition}\normalfont
  A TRS $\atrs$ over $\asig$
  is called \emph{terminating (or strongly normalizing) on $\termset \subseteq \ter{\asig,\avars}$}, 
  denoted $\SNrs{\atrs}{\termset}$,
  if $\red_{\atrs}$ is terminating on $\termset$.
  We write $\SNr{\atrs}$ for termination on the set of all terms $\ter{\asig,\avars}$.
\end{definition}

We introduce the concept of monotone partial $\asig$-algebras.
In contrast with~\cite{endr:vrij:wald:09} we do not require well-foundedness of $\alggt$.
We think that it is conceptually cleaner 
to distinguish the two concepts.
Monotone partial $\asig$-algebras $\triple{\aset}{\interpret{\cdot}}{{\alggt}}$ for
which well-foundedness of $\alggt$ holds, will be called well-founded.

\begin{definition}\normalfont\label{def:local:alg}
  A \emph{monotone partial $\asig$-algebra $\triple{\aset}{\interpret{\cdot}}{{\alggt}}$}
  is a partial $\asig$-algebra $\pair{\aset}{\sinterpret}$ 
  equipped with a binary relation $\alggt \subseteq \aset \times \aset$ on $\aset$ such that
  for every $f \in \asig$ the function $\interpret{f}$ is closed and monotone with respect to $\alggt$.
  
  A monotone partial $\asig$-algebra $\triple{\aset}{\interpret{\cdot}}{{\alggt}}$ is called \emph{well-founded}
  if  $\alggt$ is well-founded.
\end{definition}

\begin{remark}
One could also work with monotone, \emph{total} algebras instead of partial algebras, 
by adding an ``undefined'' element $\undefelt$ to the domain.
Then defining $\undefelt$ to be maximal, $\undefelt \alggt a$ 
for every $a \in A \setminus \{\undefelt\}$, monotonicity of
a function will automatically entail closedness.
In order to get full correspondence
with our framework of partial algebras,
we would in this  set-up only consider strict functions
(that is, the value of the function is $\undefelt$ whenever one of the arguments is $\undefelt$).
\end{remark}

The following theorem gives a complete characterization of local termination
in terms of monotone partial algebras.
\begin{theorem}\label{thm:sn}
  Let $\atrs$ be a TRS over $\asig$, and $\termset \subseteq \ter{\asig,\setemp}$.
  Then $\SNrs{\atrs}{\termset}$ holds if and only if
  there exists a well-founded monotone partial $\asig$-algebra 
  $\aalg = \triple{\aset}{\sinterpret}{{\alggt}}$
  such that $\termset$ is defined,
  and $\alggt$ is a partial model for $\atrs$.
\end{theorem}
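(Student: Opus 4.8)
The plan is to prove the two directions separately, as is standard for monotone-algebra characterisations of termination. For the ``if'' direction, suppose we have a well-founded monotone partial $\asig$-algebra $\aalg = \triple{\aset}{\sinterpret}{{\alggt}}$ in which $\termset$ is defined and $\alggt$ is a partial model for $\atrs$. The key lemma I would establish first is a \emph{compatibility/interpretation lemma}: if $s \red_\atrs t$ and $\defd{\interpret{s}}$, then $\defd{\interpret{t}}$ and $\interpret{s} \alggt \interpret{t}$. This is proved by induction on the context $\acontext$ in which the redex sits, using closedness of the $\interpret{f}$ to push definedness through the context down to $t$, and monotonicity to lift the relation $\interpret{\subst{\asubst}{\ell}} \alggt \interpret{\subst{\asubst}{r}}$ (which itself follows from the partial-model property applied to the assignment $\alpha(x) = \interpret{\subst{\asubst}{x}}$, together with a routine substitution lemma $\interpret{\subst{\asubst}{u}} \peq \interpreta{u}{\alpha}$) back up through $\acontext$. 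Once this lemma is in hand, any infinite $\red_\atrs$-reduction starting from a term $t \in \termset$ — which is defined, since $\termset$ is defined — would yield, via repeated application of the lemma, an infinite $\alggt$-descending sequence in $\aset$, contradicting well-foundedness. Hence $\SNrs{\atrs}{\termset}$.

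For the ``only if'' direction, assume $\SNrs{\atrs}{\termset}$; I must build a witnessing algebra. The natural choice is the \emph{term algebra restricted to the family of $\termset$}: take $\aset \defdby \famil{\atrs}{\termset}$ (Definition~\ref{def:local:family}), the least set of terms containing $\termset$ that is closed under reduction and taking subterms. For each $n$-ary $f \in \asig$, define $\interpret{f}(t_1,\dots,t_n) \defdby f(t_1,\dots,t_n)$ whenever $t_1,\dots,t_n \in \aset$ \emph{and} $f(t_1,\dots,t_n) \in \aset$, and leave it undefined otherwise; and take $\alggt$ to be $\red_\atrs^{+}$, the transitive closure of one-step rewriting, restricted to $\aset \times \aset$. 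One then checks: (i) $\aset$ is non-empty (assuming $\termset$ non-empty — or else the statement is vacuous and any one-point algebra works); (ii) each $\interpret{f}$ is closed with respect to $\alggt$, because if $f(\dots,a,\dots) \in \aset$ and $a \red_\atrs^+ b$ with $b \in \aset$, then $f(\dots,b,\dots)$ is a reduct of $f(\dots,a,\dots)$ hence lies in $\aset$ by closure under reduction; (iii) each $\interpret{f}$ is monotone, since $a \red_\atrs^+ b$ implies $f(\dots,a,\dots) \red_\atrs^+ f(\dots,b,\dots)$ by closure of rewriting under contexts, and both sides are in $\aset$ by hypothesis; (iv) $\alggt = {\red_\atrs^+}\!\restriction_\aset$ is well-founded, which is exactly the content of $\SNrs{\atrs}{\termset}$ together with the observation that every term reachable by $\red_\atrs$ from an element of $\aset \supseteq \termset$ is again in $\aset$ (so an infinite $\alggt$-chain from any point of $\aset$ would trace back, via $\termset$-membership of the family's generators, to an infinite $\red_\atrs$-reduction from a term of $\termset$). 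Here one needs the precise remark in the definition of local termination that the second and later terms of the reduction need not be in $\termset$ — which is fine, because they \emph{are} in $\aset$. (v) $\termset$ is defined: for $t \in \termset$, a structural induction on $t$ shows $\interpret{t} = t$, using that $\termset \subseteq \aset$ and that $\aset$ is subterm-closed so all subterms of $t$ lie in $\aset$. (vi) $\alggt$ is a partial model: if $\defd{\interpreta{\ell}{\alpha}}$ for some rule $\ell \to r$ and $\alpha \funin \vars{\ell} \to \aset$, then letting $\asubst$ be the substitution with $\asubst(x) = \alpha(x)$, a substitution lemma gives $\interpreta{\ell}{\alpha} = \subst{\asubst}{\ell} \in \aset$ and $\interpreta{r}{\alpha} \peq \subst{\asubst}{r}$; since $\subst{\asubst}{\ell} \red_\atrs \subst{\asubst}{r}$ and $\aset$ is reduction-closed, $\subst{\asubst}{r} \in \aset$, so $\interpreta{r}{\alpha}$ is defined and $\interpreta{\ell}{\alpha} \red_\atrs^+ \interpreta{r}{\alpha}$, i.e.\ $\interpreta{\ell}{\alpha} \alggt \interpreta{r}{\alpha}$, as required. (There is a small subtlety: $\interpreta{\ell}{\alpha}$ being defined as a composite expression already forces every subterm interpretation to be defined and to equal the corresponding substituted subterm — this is the remark about composites of partial functions in the Preliminaries — so the substitution lemma is essentially bookkeeping.)

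The main obstacle, and the point that deserves the most care, is verifying that $\aset = \famil{\atrs}{\termset}$ is genuinely closed in the technical sense of Definition~\ref{def:monotone}(1) \emph{and} that the partial-model condition holds with exactly the right handling of undefinedness — the asymmetry flagged after Definition~\ref{def:local:model} (left-hand side may be undefined while the right-hand side is defined, never conversely). In the term-algebra construction this asymmetry is automatically respected because $\aset$ is closed under $\red_\atrs$ but \emph{not} under ``reverse reduction'': a normalizing reduct of a non-normalizing term can enter $\aset$ without its ancestor being in $\aset$. One must be slightly careful to use the \emph{transitive} closure $\red_\atrs^+$ rather than $\red_\atrs$ itself for $\alggt$, so that monotonicity composes cleanly and so that $\alggt$ is a genuine strict order with the well-foundedness inherited from $\SNrs{\atrs}{\termset}$; using single-step rewriting would still work but makes the monotonicity argument for nested contexts marginally more awkward. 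Everything else — the substitution lemma, the context induction in the interpretation lemma, the structural induction showing $\interpret{t} = t$ for $t \in \aset$ — is routine and parallels the classical proof of the monotone-algebra characterisation of global termination, specialised by restricting attention to the family of $\termset$.
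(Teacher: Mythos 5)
Your proposal is correct and follows essentially the same route as the paper: the paper proves Theorem~\ref{thm:sn} as the special case $\btrs=\setemp$ of Theorem~\ref{thm:snrel}, whose ``only if'' direction uses exactly your construction (carrier $\famil{\atrs}{\termset}$, syntactic interpretations that are undefined outside the family, and $\alggt$ the restriction of $\relcomp{\red_{\atrs}}{\mred_{\atrs}}$, i.e.\ $\red_{\atrs}^{+}$, to the family) and whose ``if'' direction rests on the same compatibility lemma (Lemma~\ref{lem:decr}). Your extra remarks on non-emptiness of $\termset$ and on the asymmetric handling of undefinedness are sound refinements of points the paper leaves implicit.
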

\begin{proof}
  Theorem~\ref{thm:sn} is proved in the same way as Theorem~\ref{thm:snrel} (using Remark~\ref{rem:relterm}).
\end{proof}
To keep the presentation simple, the theorem characterizes local termination for sets
of ground terms $\termset \subseteq \ter{\asig,\setemp}$ only.
Indeed, the theorem can easily be generalized to sets of open terms
by, instead of just a well-founded monotone partial algebra, 
additionally requiring a variable assignment $\alpha$.
A  set of terms $\termset$ is then called  defined if for that $\alpha$ we have
$\defd{\interpreta{t}{\alpha}}$ for every $t \in \termset$.

\begin{remark}
  In case $T = \ter{\asig,\setemp}$ is the set of all ground terms, Theorem~\ref{thm:sn}
  basically coincides with the usual theorem for proving termination using 
  (total) well-founded monotone $\asig$-algebras.
  More precisely, the subalgebra of $\aalg$ containing all elements that are interpretations
  of ground terms (leaving out the junk) is a (total) well-founded monotone $\asig$-algebra
  proving termination of $\atrs$.
\end{remark}

\begin{example}\label{ex:S}
  We consider the $\combS$ combinator with the rewrite rule 
  \begin{align*}
  \combS xyz \to xz(yz)
  \end{align*}
  from combinatory logic. That is:
  \begin{align*}
    \combAt{\combAt{\combAt{\combS}{x}}{y}}{z} \to \combAt{\combAt{x}{z}}{\combAt{y}{z}}
  \end{align*}
  in first order notation.
  The $\combAt{M}{N}$ is abbreviated by $MN$.

  The $\combS$ combinator is known to be globally non-terminating.
  For example the term $\combS (\combS \combS) (\combS \combS) (\combS (\combS \combS) (\combS \combS))$
  admits an infinite reduction, see further~\cite{zachos:78}.
  We have, however, local termination on certain sets of terms,
  for example the set 
  of ``flat'' $\combS$-terms:
  \begin{align*}
    \termset \defdby \{\combS^n \where n \in \nat,\, n \ge 1\}
  \end{align*}
  where $\combS^1 \defdby \combS$ and $\combS^{n+1} \defdby \combAt{\combS^n}{\combS}$.

\newcommand{\ints}{\msf{s}}
  We prove strong normalization on $\termset$ 
  using the well-founded monotone partial $\asig$-algebra $\aalg = \triple{\aset}{\sinterpret}{{\alggt}}$,
  where $\aset \defdby \{\ints\} \cup \nat$ and the interpretation $\sinterpret$ is given by:
  \begin{align*}
    \interpret{\combS} &\defdby \ints &
    \bfunap{\interpret{\scombAt}}{\ints}{\ints} &\defdby 0 &
    \bfunap{\interpret{\scombAt}}{0}{n} &\defdby n + 1&
    \bfunap{\interpret{\scombAt}}{n}{\ints} &\defdby 2\cdot n + 1
  \end{align*}
  for all $n \in \nat$ and  $\undefd{\bfunap{\interpret{\scombAt}}{x}{y}}$
  for all other cases.
  Let $\alggt$ be the natural order on $\nat$;
  that is, $\ints$ is neither source nor target of a $\alggt$ step.
  Then well-foundedness of $\alggt$ and monotonicity of $\interpret{\scombAt}$
  are obvious, and $\termset$ is defined.
  We have
  $\defd{\interpreta{\combS xyz}{\alpha}}$
  only if $\funap{\alpha}{x} = \ints$
  and $\funap{\alpha}{z} = \ints$;
  then we obtain:
  \begin{align*}
    \interpreta{\combS xyz}{\alpha} = 3 &\alggt 1 = \interpreta{xz(yz)}{\alpha} &&\text{for $\funap{\alpha}{y} = \ints$}\\
    \interpreta{\combS xyz}{\alpha} = 2\cdot \funap{\alpha}{y} + 3 &\alggt 2\cdot \funap{\alpha}{y} + 2 
    = \interpreta{xz(yz)}{\alpha} &&\text{for $\funap{\alpha}{y} \in \nat$}
  \end{align*}
  Hence $\alggt$ is a partial model for $\combS xyz \to xz(yz)$ and we conclude termination on $\termset$.
\end{example}

\newcommand{\fs}{\msf{s}}
\newcommand{\fm}{\msf{-}}
\newcommand{\mul}{\prog{mul}}
\newcommand{\add}{\prog{add}}
\newcommand{\inv}{\prog{inv}}
\begin{example}\label{ex:fac}
We recall the Haskell program from the introduction:
\begin{align*}
  \funap{\prog{fac}}{0} &\ {::}\ \msf{Integer} \to \msf{Integer}\\
  \funap{\prog{fac}}{0} &= 1\\
  \funap{\prog{fac}}{n} &= n \cdot \funap{\prog{fac}}{n-1}
\end{align*}
We remark that the standard Haskell data type $\msf{Integer}$ allows for negative numbers.
For this reason the program is not globally terminating, but only locally on non-negative integers.
The usual implementation of the factorial function as TRS
makes use of Peano numerals for encoding natural numbers using a constant `$0$' and a unary symbol `$\fs$' for successor.
Then the problem of negative numbers does not occur.

For the purpose of modeling the Haskell program as close as possible,
we have chosen for a different encoding of the factorial function as TRS.
For encoding negative numbers we extend Peano numerals with a unary symbol `$\fm$'.
Since standard term rewriting does not allow for a priority order on rules,
we need to dissolve ambiguities, that is, overlaps between the rules,
by instantiating the variables; e.g.\ for the factorial function $\prog{fac}$
the variable $n$ needs to be instantiated with $\funap{\fs}{n}$ and $\funap{\fm}{n}$
to match exactly the integers (in this case $0$) not covered by the first rule.
As the result of the translation we obtain the TRS $\atrs$:
\begin{align*}
  \funap{\prog{fac}}{0} &\to \funap{\fs}{0} \tag{$\rho_1$}\label{rule:fac:z}\\
  \funap{\prog{fac}}{\funap{\fs}{x}} &\to \bfunap{\mul}{\funap{\fs}{x}}{\funap{\prog{fac}}{x}} \tag{$\rho_2$}\label{rule:fac:s}\\
  \funap{\prog{fac}}{\funap{\fm}{x}} &\to \bfunap{\mul}{\funap{\fm}{x}}{\funap{\prog{fac}}{\funap{\fm}{\funap{\fs}{x}}}} \tag{$\rho_3$}\label{rule:fac:m}\\[1ex]
  \bfunap{\mul}{x}{0} &\to 0 \tag{$\rho_4$}\label{rule:mul:zr}\\
  \bfunap{\mul}{0}{y} &\to 0 \tag{$\rho_5$}\label{rule:mul:zl}\\
  \bfunap{\mul}{x}{\funap{\fs}{y}} &\to \bfunap{\add}{ \bfunap{\mul}{x}{y} }{x} \tag{$\rho_6$}\label{rule:mul:s}\\
  \bfunap{\mul}{\funap{\fs}{x}}{\funap{\fm}{y}} &\to \funap{\fm}{\bfunap{\mul}{\funap{\fs}{x}}{y}} \tag{$\rho_7$}\label{rule:mul:sm}\\
  \bfunap{\mul}{\funap{\fm}{x}}{\funap{\fm}{y}} &\to \bfunap{\mul}{x}{y} \tag{$\rho_9$}\label{rule:mul:mm}\\[1ex]
  \bfunap{\add}{x}{0} &\to x \tag{$\rho_{10}$}\label{rule:add:zr}\\
  \bfunap{\add}{0}{y} &\to y \tag{$\rho_{11}$}\label{rule:add:zl}\\
  \bfunap{\add}{x}{\funap{\fs}{y}} &\to \funap{\fs}{ \bfunap{\add}{x}{y} } \tag{$\rho_{12}$}\label{rule:add:s}\\
  \bfunap{\add}{\funap{\fs}{x}}{\funap{\fm}{\funap{\fs}{y}}} &\to \bfunap{\add}{x}{\funap{\fm}{y}} \tag{$\rho_{13}$}\label{rule:add:sm}\\
  \bfunap{\add}{\funap{\fs}{x}}{\funap{\fm}{0}} &\to \funap{\fs}{x} \tag{$\rho_{14}$}\label{rule:add:sm0}\\
  \bfunap{\add}{\funap{\fm}{x}}{\funap{\fm}{y}} &\to \funap{\fm}{\bfunap{\add}{x}{y}} \tag{$\rho_{15}$}\label{rule:add:mm}
\end{align*}
This TRS is globally non-terminating due to the rewrite sequence:
\begin{align*}
\funap{\prog{fac}}{\funap{\fm}{x}} 
 \to\ &\bfunap{\mul}{\funap{\fm}{x}}{\funap{\prog{fac}}{\funap{\fm}{\funap{\fs}{x}}}}\\
 \to\ &\bfunap{\mul}{\funap{\fm}{x}}{\bfunap{\mul}{\funap{\fm}{x}}{\funap{\prog{fac}}{\funap{\fm}{\funap{\fs}{\funap{\fs}{x}}}}}}
 \to\ \ldots
\end{align*}
We prove local termination on the set $T = \{\funap{\prog{fac}}{\funap{\fs^n}0} \where n \in \nat\}$.
Let $\aalg = \triple{\nat}{\sinterpret}{{>}}$ where $>$ is the natural order on $\nat$,
and the interpretation $\sinterpret$ is given by:
  \begin{align*}
    \interpret{0} &\defdby 0 &
    \funap{\interpret{\fs}}{n} &\defdby n+1 &
    \funap{\interpret{\prog{fac}}}{n} &\defdby \factorial{(2n+2)} \\
    \bfunap{\interpret{\mul}}{n}{m} &\defdby 2(n+1)(m+1) &
    \bfunap{\interpret{\add}}{n}{m} &\defdby n + 2m + 1&
    &\undefd{\bfunap{\interpret{\fm}}{n}{m}}
  \end{align*}
for all $n,m \in \nat$.
For all left-hand sides $\ell$ 
of $\eqref{rule:fac:m}$, $\eqref{rule:mul:sm}$, $\eqref{rule:mul:mm}$, $\eqref{rule:add:sm}$, $\eqref{rule:add:sm0}$, $\eqref{rule:add:mm}$
and all $\alpha \funin \avars \to \nat$ we have $\undefd{\interpreta{\ell}{\alpha}}$; thus $>$ is a partial model for these rules.
For the remaining rules we have:
\begin{align*}
  \interpreta{\funap{\prog{fac}}{0}}{\alpha} = 2 
    &> 1 = \interpreta{\funap{\fs}{0}}{\alpha}\\
  \interpreta{\funap{\prog{fac}}{\funap{\fs}{x}}}{\alpha} = \factorial{(2\funap{\alpha}{x}+4)} 
    = (2\funap{\alpha}{x}+4) &\cdot \factorial{(2\funap{\alpha}{x}+3)}
    >\\ 2 (\factorial{(2\funap{\alpha}{x}+2)}+1) (\funap{\alpha}{x}+2) &= \interpreta{\bfunap{\mul}{\funap{\prog{fac}}{x}}{\funap{\fs}{x}}}{\alpha}\\
  \interpreta{\bfunap{\mul}{x}{0}}{\alpha} = 2(\funap{\alpha}{x}+1)
    &> 0 = \interpreta{0}{\alpha} \\
  \interpreta{\bfunap{\mul}{0}{y}}{\alpha} = 2(\funap{\alpha}{y}+1)
    &> 0 = \interpreta{0}{\alpha} \\
  \interpreta{\bfunap{\mul}{x}{\funap{\fs}{y}}}{\alpha} = 2(\funap{\alpha}{x}+1)(\funap{\alpha}{y}+2)
    &>\\ 2(\funap{\alpha}{x}+1)(\funap{\alpha}{y}+1) + \funap{\alpha}{x} + 1 &= \interpreta{\bfunap{\add}{ \bfunap{\mul}{x}{y} }{x}}{\alpha}\\
  \interpreta{\bfunap{\add}{x}{0}}{\alpha} = \funap{\alpha}{x}+1 
    &> \funap{\alpha}{x} = \interpreta{x}{\alpha} \\
  \interpreta{\bfunap{\add}{0}{y}}{\alpha} = 2\funap{\alpha}{y}+1 
    &> \funap{\alpha}{y} = \interpreta{y}{\alpha} \\
  \interpreta{\bfunap{\add}{x}{\funap{\fs}{y}}}{\alpha} = \funap{\alpha}{x} + 2(\funap{\alpha}{y}+1) + 1 
    &> \funap{\alpha}{x} + 2\funap{\alpha}{y} + 2 = \interpreta{\funap{\fs}{ \bfunap{\add}{x}{y} } }{\alpha}
\end{align*}
for all $\alpha \funin \avars \to \nat$. Hence $>$ is a partial model for all rules in $\atrs$.
Moreover, $T$ is defined (that is, $\defd{T}$) since $\interpret{\funap{\prog{fac}}{\funap{\fs^n}{0}}} = \factorial{(2n + 2)} \in \nat$.
By Theorem~\ref{thm:sn} we conclude $\SNrs{\atrs}{T}$, that is, $R$ is terminating on $T$.
\end{example}

\section{Local Relative Termination}\label{sec:snrel}

We define local relative termination.
\begin{definition}\normalfont
  Let $A$ be a set and ${\to_1},\,{\to_2} \subseteq {A \times A}$ binary relations.
  Then $\to_1$ is called \emph{terminating relative to $\to_2$ on} $B \subseteq A$,
  denoted $\SNrels{\to_1}{\to_2}{B}$,
  if ${\rel{\to_1}{\to_2}} \defdby {\relcomp{\mred_{2}}{\relcomp{\red_{1}}{\mred_{2}}}}$ 
  is terminating on $B$.
  We write $\SNrel{\to_1}{\to_2}$ for relative termination on $A$.

  Let $\atrs$, $\btrs$ be TRSs over $\asig$, and $\termset \subseteq \ter{\asig,\avars}$.
  Then the TRS $\atrs$ is called 
  \emph{terminating (or strongly normalizing) relative to $\btrs$ on $\termset$}, 
  denoted $\SNrels{\atrs}{\btrs}{\termset}$, if
  $\red_{\atrs}$ is terminating relative to $\red_{\btrs}$ on $\termset$.
  We write $\SNrel{\atrs}{\btrs}$ for relative termination on all terms $\ter{\asig,\avars}$.
\end{definition}
\begin{remark}\label{rem:relterm}
  Termination of $\atrs$ relative to $\btrs$ on $\termset$
  is equivalent to:
  no term $t \in \termset$ that admits an infinite rewrite sequence
  $t = t_1 \red_{\atrs \cup \btrs} t_2 \red_{\atrs \cup \btrs} \ldots $
  containing an infinite number of $\red_{\atrs}$ steps.
  Furthermore we have $\SNrs{\atrs}{\termset}$ if and only if $\SNrels{\atrs}{\setemp}{\termset}$.
\end{remark}

\begin{definition}\normalfont\label{def:local:ralg}
  An \emph{extended well-founded monotone partial $\asig$-algebra 
  $\quadruple{\aset}{\interpret{\cdot}}{{\alggt}}{{\algge}}$}
  consists of monotone partial $\asig$-algebras 
  $\triple{\aset}{\interpret{\cdot}}{{\alggt}}$
  and
  $\triple{\aset}{\interpret{\cdot}}{{\algge}}$
  such that $\SNrel{{\alggt}}{\,{\algge}}$ holds.
\end{definition}

\noindent
Note that $\SNrel{{\alggt}}{\,{\algge}}$ implies that $\triple{\aset}{\interpret{\cdot}}{{\alggt}}$ is well-founded.
The usual condition `${\relcomp{{\alggt}}{{\algge}}} \,\subseteq\, {\alggt}$ and well-foundedness of $\alggt$'
is a special case of our condition `$\SNrel{{\alggt}}{\,{\algge}}$':
\begin{lemma}\label{lem:compat}
  Let $A$ be a set and ${\algge},\, {\alggt} \subseteq A \times A$ binary relations such that $\alggt$ is well-founded.
  Then ${\relcomp{{\alggt}}{{\algge}}} \,\subseteq\, {\alggt}$
  implies
  $\SNrel{{\alggt}}{\,{\algge}}$.
\end{lemma}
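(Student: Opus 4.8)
The plan is to prove the contrapositive: assuming $\SNrel{{\alggt}}{\,{\algge}}$ fails, I will exhibit an infinite $\alggt$-sequence, contradicting well-foundedness of $\alggt$. So suppose there is an infinite sequence witnessing the failure, i.e.\ by Remark~\ref{rem:relterm} an infinite sequence $a_1 \mathrel{({\alggt}\cup{\algge})} a_2 \mathrel{({\alggt}\cup{\algge})} a_3 \cdots$ that uses $\alggt$ infinitely often. I would then consider the subsequence obtained by looking only at the positions $i_1 < i_2 < \cdots$ where the step $a_{i_k} \to a_{i_k+1}$ is a $\alggt$-step.

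The key observation is that between two consecutive $\alggt$-steps there is a (possibly empty) block of $\algge$-steps, so for each $k$ we have a chain of the form $a_{i_k} \alggt a_{i_k+1} \algge a_{i_k+2} \algge \cdots \algge a_{i_{k+1}}$, that is, $a_{i_k} \mathrel{({\relcomp{{\alggt}}{{\algge^*}}})} a_{i_{k+1}}$ where $\algge^*$ is the reflexive–transitive closure of $\algge$. The hypothesis ${\relcomp{{\alggt}}{{\algge}}} \subseteq {\alggt}$ gives, by an immediate induction on the number of trailing $\algge$-steps, that ${\relcomp{{\alggt}}{{\algge^*}}} \subseteq {\alggt}$. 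Hence $a_{i_k} \alggt a_{i_{k+1}}$ for every $k$, producing an infinite $\alggt$-sequence $a_{i_1} \alggt a_{i_2} \alggt a_{i_3} \cdots$ and contradicting well-foundedness of $\alggt$.

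I do not expect any real obstacle here; the only point that requires a little care is the bookkeeping of the infinite sequence and the extraction of the $\alggt$-step positions — in particular making sure that the set $\{i \mid a_i \alggt a_{i+1}\}$ is genuinely infinite, which is exactly what the failure of $\SNrel{{\alggt}}{\,{\algge}}$ (via Remark~\ref{rem:relterm}, with $\atrs,\btrs$ replaced by the abstract relations $\alggt,\algge$) delivers. An alternative, perhaps cleaner, route avoiding the contrapositive is to observe directly that ${\rel{{\alggt}}{{\algge}}} = {\relcomp{{\algge^*}}{\relcomp{{\alggt}}{{\algge^*}}}} \subseteq {\relcomp{{\alggt}}{{\algge^*}}} \subseteq {\alggt}$, so any infinite $\rel{{\alggt}}{{\algge}}$-sequence is an infinite $\alggt$-sequence; termination of $\rel{{\alggt}}{{\algge}}$ then follows from well-foundedness of $\alggt$. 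I would present this second version, since it reduces the whole argument to the single containment ${\relcomp{{\alggt}}{{\algge^*}}} \subseteq {\alggt}$ plus the definitional unfolding of $\rel{{\alggt}}{{\algge}}$.
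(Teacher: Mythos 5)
Your first argument is correct and is essentially the paper's own proof: take an infinite $({\alggt}\cup{\algge})$-sequence with infinitely many $\alggt$-steps, use ${\relcomp{{\alggt}}{{\algge}}}\subseteq{\alggt}$ (hence ${\relcomp{{\alggt}}{{\algge^*}}}\subseteq{\alggt}$ by induction on the trailing block) to contract each $\alggt$-step together with the $\algge$-block that follows it, and read off an infinite $\alggt$-chain on the sources of the $\alggt$-steps. The paper states this in one line (``we can remove all intermediate $\algge$-steps''); your version just makes the bookkeeping explicit.

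The second version, the one you say you would actually present, is not correct as stated. The containment ${\relcomp{{\algge^*}}{\relcomp{{\alggt}}{{\algge^*}}}}\subseteq{\relcomp{{\alggt}}{{\algge^*}}}$ does not follow from the hypothesis: ${\relcomp{{\alggt}}{{\algge}}}\subseteq{\alggt}$ only lets you absorb $\algge$-steps occurring \emph{after} a $\alggt$-step, and says nothing about $\algge$-steps occurring \emph{before} one. Concretely, take $A=\{a,b,c\}$, ${\algge}=\{(a,b)\}$, ${\alggt}=\{(b,c)\}$. Then ${\relcomp{{\alggt}}{{\algge}}}=\setemp\subseteq{\alggt}$ and $\alggt$ is well-founded, yet $a$ is related to $c$ by ${\relcomp{{\algge^*}}{\relcomp{{\alggt}}{{\algge^*}}}}$ while $a$ is not a source of $\alggt$ at all; so ${\rel{{\alggt}}{{\algge}}}\not\subseteq{\alggt}$, and an infinite $\rel{{\alggt}}{{\algge}}$-sequence is not literally an infinite $\alggt$-sequence on the same elements. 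The lemma itself is of course unaffected, but your one-line reduction is broken. The repair is to decompose each step of an infinite $\rel{{\alggt}}{{\algge}}$-sequence $b_1, b_2, \ldots$ as $b_k \mathrel{{\algge^*}} c_k \alggt d_k \mathrel{{\algge^*}} b_{k+1}$ and observe that the trailing $\algge$-block of step $k$ concatenates with the leading $\algge$-block of step $k+1$, so that $c_k \mathrel{({\relcomp{{\alggt}}{{\algge^*}}})} c_{k+1}$ and the $c_k$ (not the $b_k$) form the infinite $\alggt$-chain --- at which point you are back to your first argument. Present the first version.
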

\begin{proof}
  Assume that $\SNrel{{\alggt}}{\,{\algge}}$ would not hold.
  Then there exists an infinite $(\alggt \cup \algge)$-sequence
  containing infinitely many $\alggt$ steps.
  Using ${\relcomp{{\alggt}}{{\algge}}} \,\subseteq\, {\alggt}$
  we can remove all intermediate $\algge$-steps giving rise 
  to an infinite $\alggt$-sequence, contradicting well-foundedness of $\alggt$.
\end{proof}

\begin{lemma}\label{lem:decr}
Let $\quadruple{\aset}{\interpret{\cdot}}{{\alggt}}{{\algge}}$ be an 
extended well-founded monotone partial $\asig$-algebra and let $\atrs$ and $\btrs$ 
be TRSs over $\asig$  such that $\alggt$ is a partial model for $\atrs$, and $\algge$ is a partial model for $\btrs$. 
Furthermore, assume for $s \in \ter{\asig,\setemp}$ that $\defd{\interpret{s}}$.
Then we have the implications:
\begin{enumerate}
  \item $s \red_{\atrs} t \implies \interpret{s} \alggt \interpret{t}$, and
  \item $s \red_{\btrs} t \implies  \interpret{s} \algge \interpret{t}$.
\end{enumerate}
\end{lemma}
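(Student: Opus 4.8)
The plan is to prove both implications by induction on the structure of the term $s$, or equivalently on the position at which the rewrite step takes place, using the closedness and monotonicity of the interpretations $\interpret{f}$ together with the partial-model hypotheses. The two items are completely analogous, so I would carry out the argument once, parametrically in a relation $\algnu \in \{\alggt, \algge\}$, under the assumption that $\algnu$ is a partial model for the relevant TRS (item~1 with $\algnu = {\alggt}$ and $R$; item~2 with $\algnu = {\algge}$ and $S$); Remark~\ref{rem:relterm}-style bookkeeping is not needed here.

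First I would treat the \emph{root case}: suppose $s = \contextfill{\acontext}{\subst{\asubst}{\ell}}$ with empty context, i.e.\ $s = \subst{\asubst}{\ell}$ and $t = \subst{\asubst}{r}$ for some rule $\ell \to r$ and substitution $\asubst$. Since $\defd{\interpret{s}}$, the standard substitution lemma for term interpretations (which holds verbatim in the partial setting, because every subexpression of a defined expression is defined) gives $\interpret{s} = \interpreta{\ell}{\alpha}$ where $\alpha(x) \defdby \interpret{\subst{\asubst}{x}}$ for $x \in \vars{\ell}$; in particular $\defd{\interpreta{\ell}{\alpha}}$. The partial-model hypothesis then yields $\interpreta{\ell}{\alpha} \algnu \interpreta{r}{\alpha}$, and since this asserts definedness of the right-hand side, another application of the substitution lemma (now for $r$, using $\vars{r} \subseteq \vars{\ell}$) gives $\interpreta{r}{\alpha} = \interpret{\subst{\asubst}{r}} = \interpret{t}$. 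Hence $\interpret{s} \algnu \interpret{t}$ in the root case.

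Next, the \emph{context case}: suppose the redex sits strictly below the root, so $s = \funap{f}{s_1,\ldots,s_i,\ldots,s_n}$ and $t = \funap{f}{s_1,\ldots,t_i,\ldots,s_n}$ with $s_i \red t_i$ by the same rule, the other arguments unchanged. Since $\defd{\interpret{s}}$, all of $\interpret{s_1},\ldots,\interpret{s_n}$ and $\funap{\interpret{f}}{\interpret{s_1},\ldots,\interpret{s_n}}$ are defined. By the induction hypothesis applied to $s_i \red t_i$ (valid since $\defd{\interpret{s_i}}$) we get $\interpret{s_i} \algnu \interpret{t_i}$. Now I invoke that $\interpret{f}$ is \emph{closed} with respect to $\algnu$: from $\defd{\funap{\interpret{f}}{\ldots,\interpret{s_i},\ldots}}$ and $\interpret{s_i} \algnu \interpret{t_i}$ we obtain $\defd{\funap{\interpret{f}}{\ldots,\interpret{t_i},\ldots}}$, i.e.\ $\defd{\interpret{t}}$. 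Then \emph{monotonicity} of $\interpret{f}$ (both sides now known defined) gives $\funap{\interpret{f}}{\ldots,\interpret{s_i},\ldots} \algnu \funap{\interpret{f}}{\ldots,\interpret{t_i},\ldots}$, that is, $\interpret{s} \algnu \interpret{t}$. This completes the induction.

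The only genuinely delicate point — the one I would be most careful about — is the handling of \emph{definedness} throughout: in the partial setting one cannot assume $\interpret{t}$ is defined a priori, and the whole statement is only meaningful because $\algnu$-relatedness of two expressions carries definedness of both (as stipulated after Definition~\ref{def:monotone}). The argument is arranged precisely so that definedness of the reduct is \emph{derived} at each step — from the partial-model clause at the root (where the asymmetry ``$\ell$ defined $\Rightarrow r$ defined'' of Definition~\ref{def:local:model}/\ref{def:dec} is exactly what is used) and from closedness in the context case. Everything else — the substitution lemma, the induction on term structure — is routine and I would not belabour it.
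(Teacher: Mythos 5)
Your proof is correct and follows essentially the same route as the paper's: establish definedness of the redex from $\defd{\interpret{s}}$, apply the partial-model clause to get $\interpreta{\ell}{\alpha} \algge \interpreta{r}{\alpha}$, and then propagate through the context via closedness and monotonicity. The paper compresses the context induction and the substitution-lemma bookkeeping into a single sentence, whereas you spell them out; the content is identical.
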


\begin{proof}
The proofs of (i) and (ii) are identical, we just prove (ii). 
Let $s \red_{\btrs} t$, that is, we have 
a rule $\ell \to r \in \btrs$, substitution $\asubst$
and  context $\acontext$ such that
$s = \contextfill{\acontext}{\subst{\asubst}{\ell}}$
and
$t = \contextfill{\acontext}{\subst{\asubst}{r}}$. 
Since $\defd{\interpret{s}}$ and $\subst{\asubst}{\ell}$ is a subterm of $s$,
we also have $\defd{\interpret{\subst{\asubst}{\ell}}}$,
so $\interpreta{\ell}{\alpha} \algge \interpreta{r}{\alpha}$, as
$\algge$ is a partial model for $\btrs$. Then using 
 closedness and monotonicity of the interpretations $\interpret{f}$
 of all function symbols $f \in \asig$ we obtain $\interpret{s} \algge \interpret{t}$.
\end{proof}

We give a complete characterization of local relative termination
in terms of extended monotone partial algebras.
\begin{theorem}\label{thm:snrel}
  Let $\atrs$ and $\btrs$ be TRSs over $\asig$, 
  and $\termset \subseteq \ter{\asig,\setemp}$.
  Then $\SNrels{\atrs}{\btrs}{\termset}$ holds if and only if
  there is an extended well-founded monotone partial
  $\asig$-algebra $\aalg = \quadruple{\aset}{\interpret{\cdot}}{{\alggt}}{{\algge}}$
  such that the set $\termset$ is defined,
  $\alggt$ is a partial model for $\atrs$,
  and
  $\algge$ is a partial model for $\btrs$.
\end{theorem}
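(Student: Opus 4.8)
The plan is to prove the two implications separately, with the "if" direction following almost immediately from Lemma~\ref{lem:decr}, and the "only if" direction requiring the construction of a suitable term model.

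\medskip\noindent\textbf{The ``if'' direction.} Suppose we are given an extended well-founded monotone partial $\asig$-algebra $\aalg = \quadruple{\aset}{\interpret{\cdot}}{{\alggt}}{{\algge}}$ with $\termset$ defined, $\alggt$ a partial model for $\atrs$, and $\algge$ a partial model for $\btrs$. I would argue by contradiction: if $\SNrels{\atrs}{\btrs}{\termset}$ fails, then by Remark~\ref{rem:relterm} there is a term $t_1 \in \termset$ admitting an infinite $\red_{\atrs \cup \btrs}$-sequence $t_1 \red t_2 \red \cdots$ with infinitely many $\red_\atrs$ steps. Since $t_1 \in \termset$ we have $\defd{\interpret{t_1}}$, and since rewriting cannot turn a defined term into an undefined one (the partial-model condition only constrains the left-hand side, and closedness/monotonicity of the interpretations propagate definedness through contexts — this is exactly the content used in the proof of Lemma~\ref{lem:decr}), every $t_i$ satisfies $\defd{\interpret{t_i}}$. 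Applying Lemma~\ref{lem:decr} step by step, we get an infinite $(\alggt \cup \algge)$-sequence $\interpret{t_1} \mathrel{(\alggt \cup \algge)} \interpret{t_2} \mathrel{(\alggt \cup \algge)} \cdots$ in which a $\red_\atrs$ step yields a $\alggt$ step; hence this sequence contains infinitely many $\alggt$ steps, contradicting $\SNrel{{\alggt}}{\,{\algge}}$, which holds by definition of an extended well-founded monotone partial $\asig$-algebra.

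\medskip\noindent\textbf{The ``only if'' direction.} This is the main work. Assuming $\SNrels{\atrs}{\btrs}{\termset}$, I would build a term model on the carrier $\aset \defdby \famil{\atrs \cup \btrs}{\termset}$, the family of $\termset$ with respect to $\red_{\atrs \cup \btrs}$ (Definition~\ref{def:local:family}) — i.e.\ all subterms of $(\atrs\cup\btrs)$-reducts of terms in $\termset$. The interpretation $\interpret{f}$ sends a tuple $(t_1,\dots,t_n) \in \aset^n$ to $f(t_1,\dots,t_n)$ when that term again lies in $\aset$, and is undefined otherwise; for variable assignments this makes $\interpreta{t}{\alpha}$ essentially the substitution instance $\subst{\alpha}{t}$ when it stays inside $\aset$. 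The relation $\alggt$ is taken to be $\red_\atrs \cdot \smred_{\atrs \cup \btrs}$ restricted to $\aset$ (a step of $\atrs$ followed by arbitrary $\atrs\cup\btrs$ reduction, equivalently the relation $\rel{\red_\atrs}{\red_{\atrs\cup\btrs}}$ generated inside the family), and $\algge$ is $\smred_{\atrs \cup \btrs}$ restricted to $\aset$. One then checks: (1) $\aset$ is non-empty (assuming $\termset$ non-empty; the degenerate case $\termset = \setemp$ is handled trivially) and $\termset$ is defined by construction; (2) $\aset$ is closed under taking subterms and under $\red_{\atrs\cup\btrs}$, which makes each $\interpret{f}$ closed and monotone with respect to both $\alggt$ and $\algge$ — closedness because if $f(\dots,a,\dots) \in \aset$ and $a \red b$ then $f(\dots,b,\dots)$ is a reduct of an element of $\aset$ hence in $\aset$, and monotonicity because rewriting a subterm is itself a rewrite step of the whole; (3) $\alggt$ is a partial model for $\atrs$ and $\algge$ for $\btrs$: if $\defd{\interpreta{\ell}{\alpha}}$ then $\subst{\alpha}{\ell} \in \aset$ and $\subst{\alpha}{\ell} \red_\atrs \subst{\alpha}{r}$ (resp.\ $\red_\btrs$), which is a $\alggt$- (resp.\ $\algge$-) step; (4) $\SNrel{{\alggt}}{\,{\algge}}$ holds — an infinite $(\alggt\cup\algge)$-sequence with infinitely many $\alggt$ steps unfolds into an infinite $\red_{\atrs\cup\btrs}$-sequence starting from an element of $\aset$ with infinitely many $\red_\atrs$ steps; since every element of $\aset$ is a subterm of an $(\atrs\cup\btrs)$-reduct of some $t \in \termset$, and subterms of non-relatively-terminating terms would contradict $\SNrels{\atrs}{\btrs}{\termset}$ (using that relative non-termination of a subterm lifts to the whole term), this is impossible.

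\medskip\noindent\textbf{Main obstacle.} The delicate point is step~(4), verifying $\SNrel{{\alggt}}{\,{\algge}}$ for the term model — more precisely, arguing that no element of the family $\famil{\atrs \cup \btrs}{\termset}$ admits an infinite $\atrs\cup\btrs$-reduction with infinitely many $\atrs$-steps. One needs that relative non-termination is preserved by taking superterms (if $C[s]$ is obtained and $s$ relatively loops, so does $C[s]$) and that a reduct of a term in $\termset$ inherits relative termination from $\termset$; the latter is immediate from Remark~\ref{rem:relterm} since any infinite reduction from a reduct extends to one from $\termset$, but the bookkeeping combining ``subterm of'' with ``reduct of'' and tracking that $\atrs$-steps are not lost under these operations requires care. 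Everything else — non-emptiness, definedness of $\termset$, closedness and monotonicity, and the partial-model conditions — is routine given the closure properties of the family $\aset$. Finally, by Remark~\ref{rem:relterm} ($\SNrs{\atrs}{\termset} \Leftrightarrow \SNrels{\atrs}{\setemp}{\termset}$) the same construction with $\btrs = \setemp$ yields Theorem~\ref{thm:sn}, as claimed in its proof.
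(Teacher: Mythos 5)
Your proposal is correct and follows essentially the same route as the paper: the ``if'' direction is the identical contradiction argument via Lemma~\ref{lem:decr}, and the ``only if'' direction uses the same term model on the carrier $\famil{\atrs\cup\btrs}{\termset}$ with $\alggt$ and $\algge$ the restrictions of $\relcomp{\red_{\atrs}}{\mred_{\atrs\cup\btrs}}$ and $\mred_{\atrs\cup\btrs}$. The only (immaterial) difference is that the paper verifies $\SNrel{{\alggt}}{\,{\algge}}$ by first proving well-foundedness of $\alggt$ and then invoking ${\relcomp{{\alggt}}{{\algge}}}\subseteq{\alggt}$ with Lemma~\ref{lem:compat}, whereas you unfold the infinite sequence directly; your explicit handling of the degenerate case $\termset=\setemp$ is a small point the paper leaves implicit.
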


\begin{proof}
  For the `only if'-part assume that $\SNrels{\atrs}{\btrs}{\termset}$ holds.
  Let $\aalg = \quadruple{\aset}{\interpret{\cdot}}{{\alggt}}{{\algge}}$ 
  where $\aset \defdby \famil{\atrs \cup \btrs}{\termset}$
  and the interpretation of a function symbol  $f \in \asig$ is defined by
  $\funap{\interpret{f}}{t_1,\ldots,t_n} \defdby \funap{f}{t_1,\ldots,t_n}$
  if $\funap{f}{t_1,\ldots,t_n} \in \aset$,
  and $\undefd{\funap{\interpret{f}}{t_1,\ldots,t_n}}$ otherwise.
  The relations ${\algge}$ and ${\alggt}$ are defined by 
  ${\algge} \defdby {{\mred_{\atrs \cup \btrs}} \meet {(\aset \times \aset)}}$
  and ${\alggt} \defdby 
  {({\relcomp{\red_{\atrs}}{\mred_{\atrs \cup \btrs}}}) \meet {(\aset \times \aset)}}$.
  
  We verify that $\aalg$ is an extended well-founded monotone partial $\asig$-algebra.
  Suppose ${\alggt}$ would not be well-founded.
  Then there exists $t \in \famil{\atrs \cup \btrs}{\termset}$ admitting an infinite 
  $\relcomp{\red_{\atrs}}{\mred_{\atrs \cup \btrs}}$ rewrite sequence,
  contradicting $\SNrels{\atrs}{\btrs}{\termset}$.
  We have $\relcomp{{\alggt}}{{\algge}} \,\subseteq\, {\alggt}$ by definition,
  and consequently $\SNrel{\alggt}{\,\algge}$ by Lemma~\ref{lem:compat}.
  For $f \in \asig$ we show that $\interpret{f}$ is closed and monotone
  with respect to $\alggt$ (for $\algge$ the reasoning is the same).
  Consider  $s, t \in \aset$ with $s \alggt t$. %
  Whenever 
  $\defd{\funap{\interpret{f}}{\ldots,s,\ldots}}$
  we have  also $\defd{\funap{\interpret{f}}{\ldots,t,\ldots}}$
  (since the family $\famil{\atrs \cup \btrs}{\termset}$ is closed under rewriting),
  and hence $\funap{\interpret{f}}{\ldots,s,\ldots} \alggt \funap{\interpret{f}}{\ldots,t,\ldots}$
  as a consequence of the closure of rewriting %
  under contexts.
  Hence $\aalg$ is an extended well-founded monotone partial $\asig$-algebra.

  The set $\termset$ is defined, 
  since for every term $s \in \termset$ we have $\defd{\interpret{s}}$ by definition.
  It remains to be proved that $\alggt$ is a partial model for $\atrs$, and $\algge$ a partial model for $\btrs$.
  We only consider $\alggt$, as the reasoning for $\algge$ is the same.
  Let ${\ell \to r} \in \atrs$ %
  and $\alpha \funin \avars \to \aset$
  such that $\defd{\interpreta{\ell}{\alpha}}$.
  Then
  $\interpreta{\ell}{\alpha} = \subst{\alpha}{\ell} \red_{\atrs} \subst{\alpha}{r} 
            = \interpreta{r}{\alpha}$.
  Then  $\interpreta{\ell}{\alpha} \alggt \interpreta{r}{\alpha}$ because both
  $\interpreta{\ell}{\alpha} \in \aset$
  and $\interpreta{r}{\alpha} \in \aset$.

  For the `if'-part assume that
  $\aalg \defdby \quadruple{\aset}{\interpret{\cdot}}{{\alggt}}{{\algge}}$
  fulfilling the requirements of the theorem is given.
  Assume that $\SNrels{\atrs}{\btrs}{\termset}$ would not hold.
  Then there exists $t_0 \in \termset$
  which admits an infinite ${\red_{\atrs}} \cup {\red_{\btrs}}$ rewrite sequence
  $t_0 \to t_1 \to \ldots$ containing an infinite number of $\red_{\atrs}$-steps.
  By Lemma~\ref{lem:decr} this sequence then would give rise to 
   an infinite ${\alggt} \cup {\algge}$ sequence:
  $\interpret{t_0} \mathrel{({\alggt} \cup {\algge})} \interpret{t_1} 
                                      \mathrel{({\alggt} \cup {\algge})} \ldots$
  containing infinitely many $\alggt$-steps,
  contradicting $\SNrel{\alggt}{\,\algge}$.
\end{proof}

\begin{example}
  We consider a simple example to illustrate the method:%
  \begin{align*}
    \atrs &= \{a \to b\} & \btrs &= \{b \to b,\; \funap{f}{b} \to \funap{f}{a}\} & \termset &= \{a\}
  \end{align*}
  Global relative termination $\SNrel{\atrs}{\btrs}$ does not hold, e.g. not on $\funap{f}{a}$. However on $\termset$
  the rule $a \to b$ is terminating relative to the other rules.
  We can prove this using the extended well-founded monotone partial
  $\asig$-algebra $\aalg = \quadruple{\{0,1\}}{\interpret{\cdot}}{{>}}{{\ge}}$.
  The interpretations are given by:
  $\interpret{a} = 1$, $\interpret{b} = 0$ and $\undefd{\funap{\interpret{f}}{x}}$ for all $x \in \aset$.
  Then $\termset$ is defined,
  $\alggt$ is a partial model for $\atrs$ ($\interpret{a} = 1 > 0 = \interpret{b}$),
  and $\algge$ is a partial model for $\btrs$ ($\interpret{b} = 0 \ge 0 = \interpret{b}$ and $\undefd{\interpret{\funap{f}{b}}}$).
  Hence we conclude $\SNrels{\atrs}{\btrs}{\termset}$ by an application of Theorem~\ref{thm:snrel}.
\end{example}

See further Example~\ref{ex:SK} in Section~\ref{sec:qmodel} for a non-trivial example.

\section{Stepwise Removal of Rules}\label{sec:stepwise}
\noindent For termination proofs it is common practice to weaken 
the proof obligation stepwise by removing rules.
The idea is to find interpretations such that a part $\atrs' \subseteq \atrs$ of the rules
is decreasing ($\alggt$) and the remaining rules are weakly decreasing ($\algge$).
Then %
for termination of $\atrs$ it suffices to prove termination of 
the rules in the complement $\atrs \setminus \atrs'$.
We would also like to have this possibility for proofs of local termination.
However, for local termination we cannot simply remove (and then forget about) 
the strictly decreasing rules, 
as the following example illustrates.

\begin{example}
  Consider the set $\termset = \{a\}$ in
the TRS with the following rules:
  \begin{align*}
    a &\to b & b &\to b%
  \end{align*}
  \noindent
  We define a monotone partial $\asig$-algebra $\triple{\nat}{\sinterpret}{{>}}$
  by $\interpret{a} = 1$ and $\interpret{b} = 0$.
  Then the rule $a \to b$ is decreasing ($>$ is a partial model) since $\interpret{a} > \interpret{b}$,
  and for $b \to b$ we have $\interpret{b} = \interpret{b}$.
  However, removing the strictly decreasing rule  $a \to b$ is not sound,
 since the resulting TRS is terminating on $\termset$.
\end{example}

Let us briefly elaborate on the following theorem which enables us to
 remove rules stepwise.
Assume that the goal is proving that $\atrs$ is terminating relative to $\btrs$
 on $\termset$, that is, $\SNrels{\atrs}{\btrs}{\termset}$.
We start with zero knowledge: $\SNrels{\setemp}{\atrs \cup \btrs}{\termset}$.
We search for an interpretation that makes a part $\atrs' \subseteq \atrs$ of the rules
decreasing ($\alggt$) and the remaining rules in $\atrs \cup \btrs$ weakly decreasing ($\algge$).
Then 
the rules in $\atrs'$ can only be applied finitely often:
$\SNrels{\atrs'}{((\atrs \setminus \atrs') \cup \btrs)}{\termset}$.
\emph{But how to proceed?}
As we have seen above, we cannot simply forget about the rules $\atrs'$,
but need to take into account their influence on the family 
$\famil{\atrs \cup \btrs}{\termset}$.
A possible and theoretically complete solution would be to require 
these rules to be weakly decreasing ($\algge$).
However, for practical applicability this requirement seems too strict
as it imposes heavy restrictions on the termination order.
We propose a different approach,
which allows the `removed' rules $\atrs'$ to change arbitrarily, even increase,
the interpretation of the rewritten terms,
as long as rewriting defined terms yields defined terms again.
For this purpose we introduce a relation $\algnu$ on $\aset$,
which is a partial model for the already removed rules, and
thereby guarantees that these rules preserve definedness.

\begin{theorem}\label{thm:stepwise}
  Let $\atrs$, $\atrs'$ and $\ctrs$
  be TRSs over $\asig$, and $\termset \subseteq \ter{\asig,\setemp}$ a set of terms
  such that $\SNrels{\ctrs}{(\atrs \cup \atrs')}{\termset}$ holds.
  Then $\SNrels{(\ctrs \cup \atrs')}{\atrs}{\termset}$
  holds if and only if
  there exists an extended well-founded monotone partial
  $\asig$-algebra $\aalg = \quadruple{\aset}{\interpret{\cdot}}{{\alggt}}{{\algge}}$
  and a relation $\algnu$ on $\aset$
  such that:
  \begin{enumerate}[(1)]
    \item the set $\termset$ is defined,
    \item $\triple{\aset}{\interpret{\cdot}}{{\algnu}}$
          is a monotone partial $\asig$-algebra, and
    \item
      $\alggt$, $\algge$ and $\algnu$
      are partial models for
      $\atrs'$, $\atrs$ and $\ctrs$, respectively.
  \end{enumerate}
\end{theorem}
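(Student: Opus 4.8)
The plan is to establish the two implications by adapting, respectively, the two halves of the proof of Theorem~\ref{thm:snrel}; the only novelty is that one must now juggle three relations $\alggt$, $\algge$, $\algnu$ together with the two nested relative-termination hypotheses.

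\emph{The `if'-direction} I would prove by contradiction. Assume $\aalg = \quadruple{\aset}{\interpret{\cdot}}{{\alggt}}{{\algge}}$ and $\algnu$ satisfy (1)--(3), and suppose $\SNrels{(\ctrs \cup \atrs')}{\atrs}{\termset}$ fails; then some $t_0 \in \termset$ starts an infinite $\red_{\atrs \cup \atrs' \cup \ctrs}$-rewrite sequence containing infinitely many $\red_{\atrs' \cup \ctrs}$-steps. The first key step is to invoke the standing hypothesis $\SNrels{\ctrs}{(\atrs \cup \atrs')}{\termset}$: it forbids infinitely many $\red_{\ctrs}$-steps, so the sequence has only finitely many of them, hence infinitely many $\red_{\atrs'}$-steps, and a tail $t_N \to t_{N+1} \to \cdots$ using only rules of $\atrs \cup \atrs'$. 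The second step is definedness propagation along the whole sequence: $\defd{\interpret{t_0}}$ because $\termset$ is defined, and since $\algnu$, $\alggt$, $\algge$ are partial models for $\ctrs$, $\atrs'$, $\atrs$ while all interpretations are closed and monotone with respect to each of these relations (by (2) and the definition of an extended monotone partial algebra), the argument of Lemma~\ref{lem:decr} applies to each of the three relations; hence every $t_i$ is defined and each step of the tail induces a $\alggt$- or $\algge$-step between defined interpretations. This produces an infinite $({\alggt} \cup {\algge})$-sequence with infinitely many $\alggt$-steps, contradicting $\SNrel{{\alggt}}{\,{\algge}}$, which holds because $\aalg$ is an \emph{extended} well-founded monotone partial $\asig$-algebra (Remark~\ref{rem:relterm}).

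\emph{The `only if'-direction} I would settle by reusing the term-algebra construction from the proof of Theorem~\ref{thm:snrel}. Set $\aset \defdby \famil{\atrs \cup \atrs' \cup \ctrs}{\termset}$; interpret each $f \in \asig$ by $\funap{\interpret{f}}{t_1,\ldots,t_n} \defdby \funap{f}{t_1,\ldots,t_n}$ when $\funap{f}{t_1,\ldots,t_n} \in \aset$ and leave it undefined otherwise; and put ${\alggt} \defdby ({\relcomp{\red_{\atrs'}}{\mred_{\atrs}}}) \meet (\aset \times \aset)$, ${\algge} \defdby {\mred_{\atrs}} \meet (\aset \times \aset)$, and ${\algnu} \defdby {\mred_{\ctrs}} \meet (\aset \times \aset)$. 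Then $\termset$ is defined since $\termset$ and all subterms of its members lie in $\aset$; closedness and monotonicity of the interpretations with respect to each of the three relations follow from closure of rewriting under contexts together with closure of $\aset$ under rewriting; for a rule $\ell \to r$ of $\atrs'$, $\atrs$, $\ctrs$ with $\defd{\interpreta{\ell}{\alpha}}$ one has $\interpreta{\ell}{\alpha} = \subst{\alpha}{\ell} \red \subst{\alpha}{r} = \interpreta{r}{\alpha}$ (the step taken in $\atrs'$, $\atrs$, $\ctrs$ respectively), so $\alggt$, $\algge$, $\algnu$ are the required partial models; and $\SNrel{{\alggt}}{\,{\algge}}$ holds because an infinite $({\alggt} \cup {\algge})$-sequence with infinitely many $\alggt$-steps unfolds to an infinite $\red_{\atrs \cup \atrs'}$-rewrite sequence from some $a_0 \in \aset$ with infinitely many $\red_{\atrs'}$-steps, and embedding this sequence into the term of $\termset$ from which $a_0$ descends contradicts $\SNrels{(\ctrs \cup \atrs')}{\atrs}{\termset}$.

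The routine bookkeeping -- closedness/monotonicity checks, the identification $\interpreta{\ell}{\alpha} = \subst{\alpha}{\ell}$ on the term algebra, and the move between $\termset$ and $\famil{\atrs \cup \atrs' \cup \ctrs}{\termset}$ -- is exactly as in Theorem~\ref{thm:snrel}. The genuinely delicate point, and where I expect the real work to be, is the `if'-direction: since we impose no well-foundedness or relative-termination condition on $\algnu$, the rules of $\ctrs$ may raise the interpretation arbitrarily and so must be eliminated from the infinite rewrite sequence before passing to interpretations. It is precisely the standing hypothesis $\SNrels{\ctrs}{(\atrs \cup \atrs')}{\termset}$ that makes the $\ctrs$-steps finite, whereas the partial-model requirement on $\algnu$ is exactly what keeps the surviving tail terms defined; getting this division of labour right is the crux.
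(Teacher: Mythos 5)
Your proof is correct and follows essentially the same route as the paper, which likewise treats this as an extension of the proof of Theorem~\ref{thm:snrel}: the `if'-direction uses the partial-model property of $\algnu$ to propagate definedness through the $\ctrs$-steps and the hypothesis $\SNrels{\ctrs}{(\atrs \cup \atrs')}{\termset}$ to cut off the finitely many $\ctrs$-steps before passing to an infinite $({\alggt}\cup{\algge})$-sequence. The only cosmetic difference is in the `only if'-direction, where the paper simply invokes Theorem~\ref{thm:snrel} for $\SNrels{(\ctrs\cup\atrs')}{\atrs}{\termset}$ and sets ${\algnu} \defdby {\alggt}$, whereas you rebuild the family algebra with ${\algnu} \defdby {\mred_{\ctrs}} \meet (\aset\times\aset)$ — both work.
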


\begin{proof}
  Straightforward extension of the proof of Theorem~\ref{thm:snrel}.
  The `only if'-part follows immediately by taking ${\algnu} \defdby {\alggt}$.
  For the `if'-part
  consider an infinite reduction $t_1 \to t_2 \to \ldots$ with $t_1 \in \termset$.
  Then since $\algnu$ is a partial model for $\ctrs$,
  we conclude $\myall{i \in \nat}{\defd{\interpret{t_i}}}$.
  Moreover, as a consequence of $\SNrels{\ctrs}{(\atrs \cup \atrs')}{\termset}$
  we can cut off the prefix of the sequence containing the finitely many $\ctrs$ steps.
\end{proof}

\begin{example}\label{ex:fac2}
  We reconsider Example~\ref{ex:fac}, and prove termination of $\atrs$ on $T$. 
  The usage of Theorem~\ref{thm:stepwise} allows for a simpler stepwise termination proof.
  In particular, for removing the rules for $\prog{fac}$
  we can employ the standard interpretation $\mul$ as $\cdot$ and $\add$ as $+$.
  Let $\aalg = \quadruple{\nat}{\sinterpret}{{>}}{{\ge}}$ where $>$ is the natural order on $\nat$,
  and $\sinterpret$ is given by:
  \begin{align*}
    \interpret{0} &\defdby 0 &
    \funap{\interpret{\fs}}{n} &\defdby n+1 &
    \funap{\interpret{\prog{fac}}}{n} &\defdby \factorial{(n+2)} \\
    \bfunap{\interpret{\mul}}{n}{m} &\defdby n\cdot m &
    \bfunap{\interpret{\add}}{n}{m} &\defdby n + m&
    \undefd{\bfunap{\interpret{\fm}}{n}{m}}
  \end{align*}
  for all $n,m \in \nat$.
  Then $>$ is a partial model for $\eqref{rule:fac:z}$ and $\eqref{rule:fac:s}$:
  \begin{align*}
    \interpreta{\funap{\prog{fac}}{0}}{\alpha} = 2 
      &> 1 = \interpreta{\funap{\fs}{0}}{\alpha}\\
    \interpreta{\funap{\prog{fac}}{\funap{\fs}{x}}}{\alpha} = \factorial{(\funap{\alpha}{x}+3)} 
      &> (\funap{\alpha}{x}+1)\factorial{(\funap{\alpha}{x}+2)} = \interpreta{\bfunap{\mul}{\funap{\prog{fac}}{x}}{\funap{\fs}{x}}}{\alpha}
  \end{align*}
  and obviously $\ge$ is a partial model for the other rules.
  Let $U_1 = \{\eqref{rule:fac:z},\eqref{rule:fac:s}\}$, and $\atrs_1 = \atrs \setminus U_1$.
  Then by Theorem~\ref{thm:stepwise} it suffices to show $\SNrels{U_1}{\atrs_1}{\termset}$ to conclude $\SNrs{\atrs}{\termset}$.

  As second step, we remove the $\mul$ rules. 
  Let $\aalg = \quadruple{\nat}{\sinterpret}{{>}}{{\ge}}$ with:
  \begin{align*}
    \interpret{0} &\defdby 0 &
    \funap{\interpret{\fs}}{n} &\defdby n+1 &
    \funap{\interpret{\prog{fac}}}{n} &\defdby n \\
    \bfunap{\interpret{\mul}}{n}{m} &\defdby (n+1)\cdot (m+1)&
    \bfunap{\interpret{\add}}{n}{m} &\defdby n + m&
    \undefd{\bfunap{\interpret{\fm}}{n}{m}}
  \end{align*}
  for all $n,m \in \nat$.
  Recall that the rules from $U_1$ have to be taken into consideration as they have an impact on the set of reachable terms
  (otherwise the set of terms $\termset$ would consist only of normal forms).
  Nevertheless, the rule $\eqref{rule:fac:s}$ from $U_1$ is not (weakly) decreasing, that is,
  $\ge$ is not a partial model for $\eqref{rule:fac:s}$
  with respect to the above interpretation:
  \begin{align*}
    \interpreta{\funap{\prog{fac}}{\funap{\fs}{x}}}{\alpha} = \funap{\alpha}{x}+1
      &\not\ge (\funap{\alpha}{x}+1) \cdot (\funap{\alpha}{x}+2) = \interpreta{\bfunap{\mul}{\funap{\prog{fac}}{x}}{\funap{\fs}{x}}}{\alpha}
  \end{align*}
  This is also not necessary. It suffices that $U_1$ is decreasing with respect to any other relation $\algnu$
  guaranteeing that all reachable terms are defined.
  For the current example we can choose the `total' relation ${\algnu} = \{ (n,m) \where n,m \in \nat \}$
  relating all pairs of natural numbers.
  Then $\algnu$ is a partial model for $U_1$, and all $\interpret{f}$ for $f \in \asig$
  are closed and monotone with respect to $\algnu$.
  The rules $\eqref{rule:mul:zr}$, $\eqref{rule:mul:zl}$, and $\eqref{rule:mul:s}$ are decreasing ($>$ is a partial model), for all $\alpha \funin \avars \to \nat$:
  \begin{align*}
    \interpreta{\bfunap{\mul}{x}{0}}{\alpha} = \funap{\alpha}{x} + 1
      &> 0 = \interpreta{0}{\alpha} \\
    \interpreta{\bfunap{\mul}{0}{y}}{\alpha} = \funap{\alpha}{y} + 1
      &> 0 = \interpreta{0}{\alpha} \\
    \interpreta{\bfunap{\mul}{x}{\funap{\fs}{y}}}{\alpha} = (\funap{\alpha}{x} + 1) \cdot (\funap{\alpha}{y} + 2)
      &>\\ (\funap{\alpha}{x} + 1) \cdot (\funap{\alpha}{y} + 1) + \funap{\alpha}{x} &= \interpreta{\bfunap{\add}{ \bfunap{\mul}{x}{y} }{x}}{\alpha}
  \end{align*}
  The remaining rules in $\atrs_1$ are weakly decreasing (that is, $\ge$ is a partial model).
  We define $U_2 = U_1 \cup \{\eqref{rule:mul:zr}, \eqref{rule:mul:zl}, \eqref{rule:mul:s}\}$,
  and let $\atrs_2 = \atrs_1 \setminus U_2$.
  Then by Theorem~\ref{thm:stepwise} $\SNrels{U_2}{\atrs_2}{\termset}$ implies $\SNrels{U_1}{\atrs_1}{\termset}$.

  Finally, we employ the algebra $\aalg = \quadruple{\nat}{\sinterpret}{{>}}{{\ge}}$ with:
  \begin{align*}
    \interpret{0} &\defdby 0 &
    \funap{\interpret{\fs}}{n} &\defdby n+1 &
    \funap{\interpret{\prog{fac}}}{n} &\defdby n \\
    \bfunap{\interpret{\mul}}{n}{m} &\defdby n+m&
    \bfunap{\interpret{\add}}{n}{m} &\defdby n + 2m&
    \undefd{\bfunap{\interpret{\fm}}{n}{m}}
  \end{align*}
  for all $n,m \in \nat$, together with ${\algnu} = \{ (n,m) \where n,m \in \nat \}$.
  Thereby 
  $>$ is a partial model for all rules from $\atrs_2$,
  and $\algnu$ is a partial model for $U_2$.
  Hence, we conclude $\SNrels{U_2}{\atrs_2}{\termset}$, and thus $\SNrs{\atrs}{\termset}$.
\end{example}

For other applications of the theorem see Examples~\ref{ex:SK} and~\ref{ex:TM} in Section~\ref{sec:qmodel}.

\section{Via Models from Local to Global Termination}\label{sec:model}
\noindent In this section we describe an easy transformation from local to global termination
based on an adaptation of semantic labeling~\cite{Z95}.
For this purpose we generalise the concept of models from~\cite{Z95}
to partial models.
Whenever the language $\termset$ for which we are interested in termination
can be described by a partial model,
that is, $\termset = \{t \where \defd{\interpret{t}}\}$,
then semantic labeling allows for a simple, complete transformation
from local to global termination.
Here complete means that
the original system is locally terminating on $\termset$
if and only if
the transformed, labeled system is globally terminating.

We define a variant of semantic labeling
where each symbol is labeled by the tuple of the values of its arguments.
\begin{definition}\normalfont\label{def:local:labeling}
  Let $\asig$ be a signature, and let
  $\aalg \defdby \pair{\aset}{\sinterpret}$ be a partial $\asig$-algebra.
  For $t \in \ter{\asig,\avars}$ and $\alpha \funin \vars{t} \to \aset$
  such that $\defd{\interpreta{t}{\alpha}}$,
  the \emph{labeling $\dolAbel{\aalg}{t}{\alpha}$ of $t$ with respect to $\alpha$} 
  is defined as follows:
  \begin{align*}
     \dolAbel{\aalg}{x}{\alpha} &\defdby x \\
     \dolAbel{\aalg}{\funap{f}{t_1,\ldots,t_n}}{\alpha} &\defdby 
     \funap{\slAbel{f}{\interpreta{t_1}{\alpha},\ldots,\interpreta{t_n}{\alpha}}}{\dolAbel{\aalg}{t_1}{\alpha},\ldots,\dolAbel{\aalg}{t_n}{\alpha}}
     \punc.
  \end{align*}
  over the signature 
  $\lAbelsig{\aalg}{\asig} = \{\slAbel{f}{\lambda} 
             \where f \in \asig,\; \lambda \in \aset^{\arity{f}} \text{ such that } \defd{\funap{\interpret{f}}{\lambda}}\}$
\end{definition}

In order to obtain a complete transformation we need to restrict the models to their core,
that is, those elements that are interpretations of ground terms.
\begin{definition}\normalfont\label{def:local:core}
  Let $\aalg = \pair{\aset}{\sinterpret}$ be a partial $\asig$-algebra.
  Then the \emph{core $\core{\aalg} \subseteq \aset$ of $\aalg$}
  is the smallest set such that
  $\funap{\interpret{f}}{a_1,\ldots,a_n} \in \core{\aalg}$ 
  whenever $f \in \asig$ and $a_1,\ldots,a_n \in \core{\aalg}$ with $\defd{\funap{\interpret{f}}{a_1,\ldots,a_n}}$.
  We say that $\aalg$ is \emph{core} if $\aset = \core{\aalg}$.
\end{definition}
By construction of the core we have $\core{\aalg} = \{\interpret{t} \where t \in \ter{\asig,\setemp},\, \defd{\interpret{t}}\}$.
The restriction of a model to its core does not change its language, 
thus in the sequel we can without loss of generality assume that all models are core.

We have arrived at the transformation from local to global termination.
The rules are labeled as known from semantic labeling
with the exception that labeled rules are thrown away if the interpretation of their left-hand side is undefined.
\begin{definition}\normalfont\label{def:local:transform}
  Let $\atrs$ be a TRS over $\asig$,
  and $\aalg = \pair{\aset}{\sinterpret}$ a partial $\asig$-algebra.
  We define the \emph{labeling of $\atrs$} 
  as the TRS $\lAbeltrs{\aalg}{\atrs}$ over the signature $\lAbelsig{\aalg}{\asig}$ by:
  \begin{align*}
    \lAbeltrs{\aalg}{\atrs} \defdby \{\dolAbel{\aalg}{\ell}{\alpha} \to \dolAbel{\aalg}{r}{\alpha} 
    \where \ell \to r \in \atrs,\; 
           \alpha \funin \vars{\ell} \to \aset \text{ such that } \defd{\interpreta{\ell}{\alpha}}\}
   \punc.
  \end{align*}
\end{definition}
A TRS is \emph{collapsing} if it contains rules of the form $\ell \to x$ with $x \in \avars$.
Such collapsing rules can be eliminated
by replacing them with all instances $\subst{\asubst_{\!f}}{\ell} \to \subst{\asubst_{\!f}}{x}$
for every $f \in \asig$
where $\funap{\asubst_{\!f}}{x} = \funap{f}{x_1,\ldots, x_n}$ with $x_1$, \ldots, $x_n$ pairwise different, fresh variables.

\begin{theorem}\label{thm:transform}
  Let $\atrs$ be a non-collapsing TRS over $\asig$,
  and $\aalg = \pair{\aset}{\sinterpret}$
  a core partial model for $\atrs$.
  Then $\atrs$ is locally terminating on $\lang{\aalg}$ if and only if $\lAbeltrs{\aalg}{\atrs}$ is globally terminating.
\end{theorem}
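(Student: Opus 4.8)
The plan is to prove both directions by relating rewrite sequences in $\atrs$ starting from terms in $\lang{\aalg}$ with rewrite sequences in $\lAbeltrs{\aalg}{\atrs}$, via the labeling map $t \mapsto \dolAbel{\aalg}{t}{\setemp}$ (for ground $t$, where we only label terms with $\defd{\interpret{t}}$). The technical heart is a single lemma: if $s \red_{\atrs} t$ and $\defd{\interpret{s}}$, then $\defd{\interpret{t}}$ (because $\aalg$ is a partial model, so $\interpret{\subst{\asubst}{\ell}} = \interpret{\subst{\asubst}{r}}$ whenever the left side is defined, and definedness is preserved under context by the usual argument about composing partial functions), and moreover $\dolAbel{\aalg}{s}{\setemp} \red_{\lAbeltrs{\aalg}{\atrs}} \dolAbel{\aalg}{t}{\setemp}$. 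The second part I would prove by induction on the context surrounding the redex, using the key fact that since $\interpret{\ell}$ equals $\interpret{r}$ under the matching assignment $\alpha$, the labels of the function symbols strictly \emph{above} the redex are unchanged by the rewrite step — this is exactly why semantic labeling works, and it is why we need $\aalg$ to be a model rather than merely a quasi-model. For the redex itself, the labeled rule $\dolAbel{\aalg}{\ell}{\alpha} \to \dolAbel{\aalg}{r}{\alpha}$ lies in $\lAbeltrs{\aalg}{\atrs}$ precisely because $\defd{\interpreta{\ell}{\alpha}}$ holds (this uses that the subterm $\subst{\asubst}{\ell}$ of $s$ is defined), and $\dolAbel{\aalg}{\subst{\asubst}{\ell}}{\setemp}$ is an instance of $\dolAbel{\aalg}{\ell}{\alpha}$ under the labeled substitution $x \mapsto \dolAbel{\aalg}{\subst{\asubst}{x}}{\setemp}$.

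Granting this lemma, the ``only if'' direction goes as follows. Suppose $\lAbeltrs{\aalg}{\atrs}$ admits an infinite reduction $u_1 \red u_2 \red \cdots$. Stripping labels via the obvious projection $\lAbelsig{\aalg}{\asig} \to \asig$ turns each $\lAbeltrs{\aalg}{\atrs}$-step into an $\atrs$-step, yielding an infinite $\atrs$-reduction from an unlabeled term $\termset$. It remains to check that $\termset \in \lang{\aalg}$, i.e.\ $\defd{\interpret{\termset}}$ — but here I would use that $\aalg$ is \emph{core}: every label occurring in $u_1$ is of the form $\slAbel{f}{\lambda}$ with $\defd{\funap{\interpret{f}}{\lambda}}$ and each component of $\lambda$ in the core, hence realized as $\interpret{t_i}$ for ground $t_i$; a straightforward induction on $u_1$ then reconstructs a ground term whose interpretation is defined and whose stripping is $\termset$ (in fact $u_1$ itself, read bottom-up, must already be $\dolAbel{\aalg}{\termset}{\setemp}$ for the unique such $\termset$, since the labels determine and are determined by the interpretations of subterms). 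So $\atrs$ is not locally terminating on $\lang{\aalg}$.

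The ``if'' direction: suppose $\atrs$ admits an infinite reduction $t_1 \red_{\atrs} t_2 \red_{\atrs} \cdots$ with $t_1 \in \lang{\aalg}$, i.e.\ $\defd{\interpret{t_1}}$. By the lemma (iterated), $\defd{\interpret{t_i}}$ for all $i$ and $\dolAbel{\aalg}{t_1}{\setemp} \red_{\lAbeltrs{\aalg}{\atrs}} \dolAbel{\aalg}{t_2}{\setemp} \red_{\lAbeltrs{\aalg}{\atrs}} \cdots$ is an infinite $\lAbeltrs{\aalg}{\atrs}$-reduction, so $\lAbeltrs{\aalg}{\atrs}$ is not globally terminating. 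I expect the main obstacle to be handling the \emph{non-collapsing} hypothesis correctly: if $\atrs$ had a collapsing rule $\ell \to x$, the labeled rule $\dolAbel{\aalg}{\ell}{\alpha} \to x$ would forget the label that should sit on the function symbol immediately above — so the projection-and-reconstruction argument in the ``only if'' direction, and the ``labels above the redex are preserved'' argument in the lemma, would break at collapsing steps. Assuming $\atrs$ non-collapsing rules this out. (The remark after Definition~\ref{def:local:transform} about eliminating collapsing rules by instantiation is presumably what one invokes to reduce the general case to this one, but the theorem as stated only needs the non-collapsing case.) A minor secondary point to get right is that the labeling of a ground term is well-defined exactly when the term is in $\lang{\aalg}$, and that the labeled substitution used in the redex step is itself well-defined because every $\subst{\asubst}{x}$, being a subterm of the defined term $s$, is defined.
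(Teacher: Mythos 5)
Your ``if'' direction (an infinite $\atrs$-reduction starting from $t_1\in\lang{\aalg}$ lifts, via the simulation lemma, to an infinite $\lAbeltrs{\aalg}{\atrs}$-reduction of $\dolAbel{\aalg}{t_1}{\setemp}$) is the standard semantic-labelling argument and is essentially sound. The genuine gap is in the ``only if'' direction. Global termination of $\lAbeltrs{\aalg}{\atrs}$ quantifies over \emph{all} terms over the signature $\lAbelsig{\aalg}{\asig}$, including open terms and \emph{ill-labelled} ones, i.e.\ terms that are not of the form $\dolAbel{\aalg}{t}{\alpha}$ for any $t$ and $\alpha$. Your parenthetical claim that the starting term $u_1$ of an infinite labelled reduction ``must already be $\dolAbel{\aalg}{t}{\setemp}$ for the unique such $t$'' is false: if $\defd{\funap{\interpret{f}}{a}}$ and $s$ is a ground term with $\interpret{s}=a'\neq a$, then $\funap{\slAbel{f}{a}}{\dolAbel{\aalg}{s}{\setemp}}$ is a perfectly legal term of the labelled signature, yet it is the labelling of nothing, and its stripping $\funap{f}{s}$ need not lie in $\lang{\aalg}$ at all. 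So while stripping labels does turn an infinite $\lAbeltrs{\aalg}{\atrs}$-reduction into an infinite $\atrs$-reduction, you cannot conclude that the resulting reduction starts inside $\lang{\aalg}$, and local termination on $\lang{\aalg}$ is not contradicted. Open terms pose the same problem and are not addressed.

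This is exactly the point where the paper brings in a nontrivial external ingredient: it makes $\lAbelsig{\aalg}{\asig}$ many-sorted over the sorts $\aset$ (a symbol $\slAbel{f}{\lambda}$ with $\lambda=\tuple{a_1,\ldots,a_n}$ gets input sorts $a_1,\ldots,a_n$ and output sort $\funap{\interpret{f}}{a_1,\ldots,a_n}$), so that the well-sorted ground terms are precisely the correctly labelled ones, and then invokes the persistence theorem for non-collapsing many-sorted TRSs (Ohlebusch, Proposition~5.5.24) to reduce global termination to termination on well-sorted terms; coreness serves to inhabit every sort with a ground term so that variables can be instantiated away. Without some such persistence argument --- or an explicit proof that an infinite reduction of an ill-labelled or open term forces an infinite reduction of a well-labelled ground one --- this direction does not go through. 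A smaller point: your diagnosis of where non-collapsingness enters is slightly misplaced. The simulation lemma and the ``labels above the redex are preserved'' step survive collapsing rules, since the model condition $\interpreta{\ell}{\alpha}=\interpreta{x}{\alpha}$ still keeps the outer labels stable; non-collapsingness is needed precisely for the persistence step in the ``only if'' direction.
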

\begin{proof}
  We introduce types for $\lAbeltrs{\aalg}{\atrs}$ over the sorts $\aset$.
  For every symbol
  $f^{\lambda} \in \lAbelsig{\aalg}{\asig}$ with $\lambda = \tuple{a_1,\ldots,a_{\arity{f}}}$
  we define $f^\lambda$
  to have input sorts $\tuple{a_1,\ldots,a_n}$ 
  and output sort $\funap{\interpret{f}}{a_1,\ldots,a_n}$.
  Then~\cite[Proposition~5.5.24]{ohle:02}
  with non-collapsingness of $\lAbeltrs{\aalg}{\atrs}$
  yields that $\lAbeltrs{\aalg}{\atrs}$
  is terminating if and only if all well-sorted terms are terminating.
  Since $\aalg$ is core
  there exists a well-sorted ground term for every sort in $\aset$.
  Thus by application of a ground substitution we can assume that all rewrite sequences contain only ground terms,
  and the set of well-sorted ground terms is exactly the language $\lang{\aalg}$ of the model $\aalg$.
\end{proof}

To apply Theorem~\ref{thm:transform} for proving local termination of $\atrs$ on a set of terms $\termset$
we have to find a partial model $\aalg$ for $\atrs$ such that $\termset \subseteq \lang{\aalg}$.
Then global termination of $\lAbeltrs{\aalg}{\atrs}$ implies local termination of $\atrs$ on $\termset$.
If moreover we have $\family{\termset} = \lang{\aalg}$, then 
the transformation is complete, that is, the converse implication holds as well.

\begin{example}\label{ex:Smodel}
  We revisit Example~\ref{ex:S} on the $\combS$ combinator with $\termset = \{\combS^n \where n \in \nat\}$.
  We choose the partial model $\aalg = \pair{\aset}{\sinterpret}$,
  where $\aset = \{0,1,2\}$ and the interpretation is defined by:
  $\interpret{\combS} = 0$,
  $\bfunap{\interpret{\scombAt}}{0}{0} = 1$,
  $\bfunap{\interpret{\scombAt}}{1}{x} = 2$ for all $x \in \aset$,
  $\bfunap{\interpret{\scombAt}}{2}{0} = 2$,
  and $\undefd{}$ otherwise.
  Then $\termset \subseteq \lang{\aalg}$ and a short proof even shows that $\family{\termset} = \lang{\aalg}$.
  The labeling $\lAbeltrs{\aalg}{\{\combS xyz \to xz(yz)\}}$ is:
  \begin{align*}
    \bfunap{\slAbel{\scombAt}{2,0}}{\bfunap{\slAbel{\scombAt}{1,0}}{\bfunap{\slAbel{\scombAt}{0,0}}{\combS}{x}}{y}}{z} 
    &\to \bfunap{\slAbel{\scombAt}{1,1}}{\bfunap{\slAbel{\scombAt}{0,0}}{x}{z}}{\bfunap{\slAbel{\scombAt}{0,0}}{y}{z}}
    \\
    \bfunap{\slAbel{\scombAt}{2,0}}{\bfunap{\slAbel{\scombAt}{1,1}}{\bfunap{\slAbel{\scombAt}{0,0}}{\combS}{x}}{y}}{z} 
    &\to \bfunap{\slAbel{\scombAt}{1,2}}{\bfunap{\slAbel{\scombAt}{0,0}}{x}{z}}{\bfunap{\slAbel{\scombAt}{1,0}}{y}{z}}
    \\
    \bfunap{\slAbel{\scombAt}{2,0}}{\bfunap{\slAbel{\scombAt}{1,2}}{\bfunap{\slAbel{\scombAt}{0,0}}{\combS}{x}}{y}}{z} 
    &\to \bfunap{\slAbel{\scombAt}{1,2}}{\bfunap{\slAbel{\scombAt}{0,0}}{x}{z}}{\bfunap{\slAbel{\scombAt}{2,0}}{y}{z}}
    \punc.
  \end{align*}
  The other labeled rules are thrown out as their left-hand side is undefined.
  Global termination of the transformed system can be shown by the recursive path order~\cite{D82}.
\end{example}

\section{Starling and Owl}\label{sec:s}
\noindent In this section we consider TRSs with the property that
strong and weak normalisation coincide. In case the language of
normalising terms happens to be regular, we show how a tree automaton
(partial model) can be found accepting exactly the (closed)
normalising terms. We automated this procedure.  Then we label the TRS
with the obtained partial model, and employ
Theorem~\ref{thm:transform} to transform the local termination problem
for the set of normalising terms to global termination of the labelled
TRS.

Since in orthogonal, non-erasing term rewriting systems strong and weak normalisation coincide, these form a typical area where the method can be applied.
In this section we illustrate this construction with two well-known examples
from combinatory logic (CL)~\cite{cl}. We use Smullyan's bird nicknames of
the combinators~\cite{smullyan:90}.
\begin{enumerate}[(1)]
 \item The Owl, corresponding to the rewrite rule:
  \begin{align*}
    \combOwl\;xy \to y(xy)
  \end{align*}
 \item The Starling
  \begin{align*}
    \combS\;xyz \to xz(yz),
  \end{align*}
  also known as the fragment CL(S) of combinatory logic consisting of all terms solely built from application and the $\combS$-combinator.
\end{enumerate}
The termination problem of Smullyan's Owl has been solved in~\cite{jw:fpc}. Here, it serves as illustrating example.

The termination problem of CL(S) is non-trivial, and its word problem is still open.
In~\cite{DBLP:journals/iandc/Waldmann00} decidability of strong normalisation of terms in CL(S) has been shown,
and we are aiming at a formal verification of the following proposition:
\begin{proposition}[\cite{DBLP:journals/iandc/Waldmann00}]\label{prop:s}
  The set of normalising ground $\combS$-terms is a rational language.
\end{proposition}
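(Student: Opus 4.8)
The plan is to exhibit an explicit finite (deterministic, bottom-up) tree automaton over the signature $\{\combS, \scombAt\}$ whose accepted language is precisely the set of strongly normalising ground $\combS$-terms, and then invoke the well-known fact that the languages recognised by finite tree automata are exactly the regular (rational) tree languages. Concretely, I would first recall from~\cite{DBLP:journals/iandc/Waldmann00} the structural characterisation of normalising $\combS$-terms: the analysis there assigns to each ``active'' subterm a bounded amount of information (roughly, how far the term is from a redex and which arguments it is still waiting for), and this information propagates upward under application in a way that only depends on the information attached to the two immediate arguments. The key observation is that this propagation rule is \emph{finite-state}: there are only finitely many distinct ``information states'', and whether a term normalises is determined by its state.

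The main steps, in order, would be: (1) isolate from Waldmann's proof a finite set $Q$ of states together with a transition function $\delta\colon Q\times Q \pto Q$ for application and a constant state $q_{\combS}$ for the combinator $\combS$, where $\delta$ is \emph{partial} precisely because non-normalising terms should be rejected (a pair of states whose application can never normalise gets no target state, i.e.\ the term is sent to the reject ``sink''); (2) verify soundness, namely that if $t$ is strongly normalising then the bottom-up run of the automaton on $t$ succeeds and lands in an accepting state — this follows by induction on $t$ using the subterm-closure of strong normalisation and the fact that the state of $t$ is correctly computed from the states of its direct subterms; (3) verify completeness, that if the run on $t$ succeeds then $t$ is strongly normalising — this is the direction that carries the real content and amounts to re-deriving (or citing verbatim) the termination argument of~\cite{DBLP:journals/iandc/Waldmann00}, showing that the invariant encoded by the states is preserved under $\combS$-reduction and rules out infinite reductions; (4) conclude that the set of normalising ground $\combS$-terms equals $\lang{\mcl{A}}$ for this finite automaton $\mcl{A}$, hence is a rational language.

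I would in fact package steps (1)–(3) as the statement that $\mcl{A}$, read as a partial $\asig$-algebra $\aalg$ with $\interpret{\scombAt} = \delta$ and $\interpret{\combS} = q_{\combS}$, is a \emph{partial model} for the $\combS$-rule in the sense of Definition~\ref{def:local:model}, together with the property that $\lang{\aalg}$ is exactly the set of normalising terms; since $\combS xyz \to xz(yz)$ is orthogonal and non-erasing, strong and weak normalisation coincide, and normalisation is closed under taking subterms and (for the defined states) under reduction — which is precisely what a partial model for the rule guarantees. That said, the honest view is that Proposition~\ref{prop:s} is essentially a restatement of the main decidability result of~\cite{DBLP:journals/iandc/Waldmann00}, so a short proof may simply observe that the decision procedure given there is in fact a finite-state bottom-up computation on the term, and that finite tree automata recognise exactly the rational languages~\cite{tata:07}.

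The hard part will be completeness, step (3): showing that the finitely many states genuinely capture non-termination, i.e.\ that no $\combS$-term outside $\lang{\mcl{A}}$ normalises and no term inside it diverges. This is delicate because $\combS$-reduction can duplicate arguments and create new redexes deep inside, so one must argue that the state abstraction is a \emph{stable} invariant under reduction and that it dominates a well-founded measure on reachable terms — exactly the tedious reduction-behaviour analysis the paper's abstract alludes to. I would therefore lean on~\cite{DBLP:journals/iandc/Waldmann00} for this core and keep my own contribution to (a) recasting its bounded bookkeeping as an explicit automaton and (b) the routine verification that the automaton's transition function is well-defined on finitely many states; the subsequent sections then show how this automaton, found automatically, is fed into the labelling transformation of Theorem~\ref{thm:transform} to obtain a machine-checkable global-termination proof.
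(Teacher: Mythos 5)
Your proposal is correct in outline and shares the paper's basic shape: exhibit a finite partial deterministic bottom-up tree automaton $\aalg$ over $\{\combS,\scombAt\}$, show that its language equals the set $N$ of normalising ground $\combS$-terms (using that weak and strong normalisation coincide for this orthogonal, non-erasing rule), and conclude rationality. Where you diverge from the paper is in how the two inclusions are discharged. You propose to extract the states from Waldmann's original reduction-behaviour analysis and to re-derive (or cite) that analysis for the hard inclusion $\lang{\aalg}\subseteq\SN$. The paper deliberately avoids this: it finds the automaton automatically by refining an approximated Nerode congruence (Algorithm~\ref{alg:nerode}, justified by Lemma~\ref{lem:cd}), verifies in Coq that $\aalg$ is a partial model and that $N\subseteq\lang{\aalg}$, and then discharges $\lang{\aalg}\subseteq\SN$ via the labelling transformation of Theorem~\ref{thm:transform}, which turns local termination on $\lang{\aalg}$ into global termination of an $1800$-rule labelled TRS that is proved terminating by an automated prover and certified (Example~\ref{starling}). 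Your route buys a self-contained mathematical argument at the cost of redoing the ``tedious analysis''; the paper's route buys a machine-checked proof and a method that generalises to any TRS where $\WN$ and $\SN$ coincide and $N$ is regular.

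One caution on your step (2): the inclusion $N\subseteq\lang{\aalg}$ does not follow by induction on $t$ from subterm-closure of $\SN$ alone --- knowing $t_1,t_2\in\lang{\aalg}$ does not by itself make $\delta$ defined on their pair of states, and asserting that the state of $t$ is ``correctly computed'' presupposes what is to be proved. The clean argument (Lemma~\ref{lem:nerode}, and the three properties the paper records for Waldmann's grammar) is that every normal form lies in $\lang{\aalg}$ and that $\lang{\aalg}$ is closed under \emph{inverse} application of the $\combS$-rule; a normalising $t$ then reduces to a normal form in $\lang{\aalg}$ and is pulled back into $\lang{\aalg}$ by expansion-closure. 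Note that the partial-model property you invoke gives closure under \emph{forward} rewriting only, which is the direction needed for Theorem~\ref{thm:transform} but not for $N\subseteq\lang{\aalg}$; so this second, backward closure property must be stated and checked separately.
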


We now turn to the construction of the partial models.

\begin{definition}
  For a tree language $L$,
  its \emph{Nerode congruence} $\sim_L$ 
  is the relation on ground terms given by
  $t_1 \sim_L t_2 \iff \forall \text{ground }C[]: C[t_1] \in L \iff C[t_2]\in L$.
\end{definition}

The next lemma follows easily by considering the Nerode congruence~\cite{tata:07}.
\begin{lemma}\label{lem:nerode}
  If a TRS $R$ has the property that
  every ground term is weakly normalising if and only if it is strongly normalising,
  and $N$ is the language of normalising ground terms, then
  \begin{enumerate}[$\bullet$]
  \item each congruence class of $\sim_N$ 
    is closed under $R$-rewriting and closed $R$-expansion,
  \item the complement of $N$ occurs 
    as one of the $\sim_N$ congruence classes.  
  \end{enumerate}
\end{lemma}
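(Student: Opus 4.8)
The plan is to verify the two bullet points of Lemma~\ref{lem:nerode} directly from the definition of the Nerode congruence $\sim_N$ and the hypothesis that weak and strong normalisation coincide on ground terms.

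First I would observe the basic closure property of $N$ itself: since weak normalisation equals strong normalisation on ground terms, a ground term $t$ is in $N$ (i.e.\ strongly normalising) if and only if every reduct of $t$ is strongly normalising, and if and only if some expansion behaves consistently. More precisely, if $t \to t'$ with $t$ ground, then $t' \in N$ iff $t \in N$: the forward direction is immediate since reducts of SN terms are SN, and the backward direction holds because if $t$ is weakly normalising then every one of its reducts is weakly normalising (pick a normalising reduction; $t'$ has at least the trivial option of being itself reducible only finitely, but more carefully: $t' \in \WN$ because $t \in \WN = \SN$ implies $t' \in \SN \subseteq \WN$). So $N$ is closed under rewriting and under expansion along ground terms. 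For the first bullet, I would then lift this to congruence classes. Let $t_1 \sim_N t_2$ and suppose $t_1 \to t_1'$. I must show $t_1' \sim_N t_2$, i.e.\ for every ground context $C[]$, $C[t_1'] \in N \iff C[t_2] \in N$. Since rewriting is closed under contexts, $C[t_1] \to C[t_1']$, so by the closure property of $N$ just established, $C[t_1'] \in N \iff C[t_1] \in N$; and $C[t_1] \in N \iff C[t_2] \in N$ by $t_1 \sim_N t_2$. Chaining these gives the result, and the argument for expansion is symmetric (if $t_1' \to t_1$ then $C[t_1'] \to C[t_1]$, same chain).

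For the second bullet, I would argue that $N$ is a union of $\sim_N$-congruence classes, hence so is its complement, and then show the complement is a \emph{single} class. That $N$ is $\sim_N$-saturated is immediate: taking $C[] = []$ in the definition of $\sim_N$ shows that $t_1 \sim_N t_2$ implies $t_1 \in N \iff t_2 \in N$. So both $N$ and its complement $\overline{N}$ are unions of congruence classes. To see $\overline{N}$ is exactly one class, I would show any two non-normalising ground terms $s_1, s_2$ are $\sim_N$-equivalent: for every ground context $C[]$, both $C[s_1]$ and $C[s_2]$ contain a non-terminating subterm, hence are themselves non-terminating (a term with a non-SN subterm is non-SN), so $C[s_1] \notin N$ and $C[s_2] \notin N$; thus the biconditional $C[s_1] \in N \iff C[s_2] \in N$ holds vacuously. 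Therefore $\overline{N}$ is a single $\sim_N$-class.

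The main obstacle — really the only subtle point — is the backward direction of the closure of $N$ under single-step rewriting, i.e.\ that $t \to t'$ and $t' \in N$ imply $t \in N$. This is where the hypothesis $\WN = \SN$ is essential and where one must be careful: from $t' \in \SN$ one wants $t \in \SN$, but a priori $t$ could have a different, non-terminating reduction not going through $t'$. The resolution is to use $\SN = \WN$: it suffices to show $t \in \WN$, and $t \to t' \in \SN \subseteq \WN$ gives a normalising reduction from $t$ (prepend the step $t \to t'$), so $t \in \WN = \SN$. Everything else is routine manipulation of the Nerode-congruence definition together with the standard facts that $\to$ is closed under contexts and that having a non-SN subterm makes a term non-SN.
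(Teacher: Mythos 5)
Your proof is correct and follows essentially the same route as the paper's (much terser) argument: the paper likewise reduces both bullets to the two key facts that $s\notin N$ implies $C[s]\notin N$ and that $t_1 \to_R t_2$ implies $t_1\in N \iff t_2\in N$, the latter relying on $\WN=\SN$ exactly as in your prepend-the-step argument. You have simply filled in the details that the paper leaves to the reader with ``The claims follow.''
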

\begin{proof}
  Note that under the assumptions on $R$,
  for each term $t\in N$ and each subterm $s$ of a term in $N$ we have $s\in N$.
  In other words, $s\notin N$ implies $C[s]\notin N$.
  Also, for all ground terms $t_1$, $t_2$ with $t_1 \to_R t_2$ 
  we have $t_1\in  N \iff t_2\in N$.
  The claims follow.
\end{proof}

In particular weak and strong normalisation of terms coincide
for every orthogonal and non-erasing TRS; this applies for CL(S) as well as Smullyan's Owl.
We note that the set of congruence classes of $\sim_N$
can in general be infinite, even for orthogonal, non-erasing TRSs.

\begin{example}
  We consider an example of an orthogonal, non-erasing TRS where the set of congruence classes of $\sim_N$ is infinite.
  Let $\atrs$ consist of the rules:
  \begin{align*}
    \funap{a}{\funap{b}{x}} &\to x & \funap{c}{\funap{c}{d}} &\to \funap{c}{\funap{c}{d}}
  \end{align*}
  over the signature $\asig = \{a,b,c,d\}$ with $d$ a constant.
  Here, terms of the from $\funap{c}{\funap{a^n}{\funap{b^n}{\funap{c}{d}}}}$
  are non-terminating, while all terms of the form $\funap{c}{\funap{a^n}{\funap{b^{m}}{\funap{c}{d}}}}$
  with $n \ne m$ are terminating.
  Hence none of the terms $\funap{b^n}{\funap{c}{d}}$ for $n \in \nat$
  can be in the same congruence class of $\sim_N$.
\end{example}

\begin{corollary}
  If the set of congruence classes of $\sim_N$ is finite,
  then the minimal complete deterministic bottom-up tree automaton
  for $N$ is finite and a model for R.%
\end{corollary}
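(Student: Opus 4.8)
The plan is to recognise the minimal automaton for $N$ as a Myhill--Nerode automaton and then read the required equations off Lemma~\ref{lem:nerode} (which is available since $\atrs$ is assumed to satisfy that weak and strong normalisation coincide on ground terms). By the Myhill--Nerode theorem for regular tree languages \cite{tata:07}, finiteness of the set of $\sim_N$-classes is equivalent to $N$ being regular, and the minimal complete deterministic bottom-up tree automaton $\mcl{A}$ for $N$ then has as its (hence finite) state set $Q$ exactly the $\sim_N$-classes, transition function $\delta_f(\,[t_1],\ldots,[t_n]\,) = [\funap{f}{t_1,\ldots,t_n}]$ for every $n$-ary $f \in \asig$ --- well defined precisely because $\sim_N$ is a congruence --- and run ending, on a ground term $u \in \ter{\asig,\setemp}$, in the state $[u]$. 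The transition maps $\delta_f$ make $Q$ into a total $\asig$-algebra $\pair{Q}{\sinterpret}$ with $\interpret{f} \defdby \delta_f$; this is what is meant by $\mcl{A}$ being a model, and for a ground term $u$ we have $\interpret{u} = [u]$.

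Next I would verify the model equation $\interpreta{\ell}{\alpha} = \interpreta{r}{\alpha}$ for every rule $\ell \to r \in \atrs$ and every $\alpha \funin \vars{\ell} \to Q$. Each state is a nonempty $\sim_N$-class, so for every $x \in \vars{\ell}$ I pick a ground term $s_x$ with $[s_x] = \funap{\alpha}{x}$ and let $\asubst$ be the substitution $x \mapsto s_x$. Since $\vars{r} \subseteq \vars{\ell}$, the terms $\subst{\asubst}{\ell}$ and $\subst{\asubst}{r}$ are ground, and $\subst{\asubst}{\ell} \red_{\atrs} \subst{\asubst}{r}$. By the first item of Lemma~\ref{lem:nerode}, the class $[\subst{\asubst}{\ell}]$ is closed under rewriting, so $\subst{\asubst}{\ell} \sim_N \subst{\asubst}{r}$, that is, $[\subst{\asubst}{\ell}] = [\subst{\asubst}{r}]$. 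The substitution lemma for algebra interpretations then gives $\interpreta{\ell}{\alpha} = \interpret{\subst{\asubst}{\ell}} = [\subst{\asubst}{\ell}] = [\subst{\asubst}{r}] = \interpret{\subst{\asubst}{r}} = \interpreta{r}{\alpha}$, which is the desired equation; together with the first paragraph this shows that $\mcl{A}$ is a model for $\atrs$.

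I do not expect a deep obstacle here --- the statement is a corollary --- and the only points needing care are entirely standard parts of Myhill--Nerode theory in the sense of \cite{tata:07}: that the minimal complete automaton has precisely the $\sim_N$-classes as states (so every state is reachable by a ground term, which is what licenses the choice of the $s_x$) and that its run on $u$ equals $[u]$. The one small conceptual step is to notice that ``congruence classes are closed under $\red_{\atrs}$'' (Lemma~\ref{lem:nerode}) is exactly the model condition restricted to ground instances, and that it propagates to arbitrary $Q$-assignments because, $\mcl{A}$ being the Myhill--Nerode automaton, every state is the image of a ground term under $\sinterpret$.
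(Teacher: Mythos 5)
Your proposal is correct and follows exactly the route the paper intends: the paper states this as an immediate corollary of Lemma~\ref{lem:nerode} without spelling out a proof, and what you have written is precisely the standard Myhill--Nerode argument (states are the finitely many $\sim_N$-classes, every state is the class of a ground term) combined with the first item of that lemma (classes closed under rewriting) to obtain the model equation $\interpreta{\ell}{\alpha} = \interpreta{r}{\alpha}$ via ground representatives and the substitution lemma. No gaps.
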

Thus, if the set of congruence classes of $\sim_N$ is finite, then the set of normalising terms is a regular language.
Assume that we are lucky and the set of congruence classes is finite.
\emph{How can we find the regular automaton accepting the set of normalising terms?}

A manual analysis and construction of the automaton as in~\cite{DBLP:journals/iandc/Waldmann00} can be tedious and error-prone.
The reference contains a hand-made tree grammar
(top-down non-de\-ter\-min\-is\-tic tree automaton $A$) and claims:
\begin{enumerate}[$\bullet$]
\item the $S$ rule is locally terminating on $L(A)$,
\item $L(A)$ contains all normal forms,
\item $A$ is closed under inverse application of the $S$ rule.
\end{enumerate}
\noindent
Starting from that grammar, we can indeed compute
a bottom-up minimal deterministic~tree automaton $B$ with $L(B)=L(A)$
(strangely, it has 39 states, and not 43, as claimed in the reference).

We propose a different, automatable approach for finding a regular automaton accepting the language of all normalising terms.
The idea is to employ the definition of the Nerode congruence for `guessing' the congruence classes.
Here we use the word `guess' in place of `compute' since we
need to check whether a term $\contextfill{C}{s}$ is terminating.
This property is in general undecidable. We can, however, make an educated guess by choosing a large enough $d$
and checking whether $\contextfill{C}{s}$ admits a rewrite sequence of length $d$ with respect to some strategy $\leadsto$.
Note that the strategy $\leadsto$ can be chosen arbitrarily since we assume that weak and strong normalisation coincide.
\begin{definition}\cite{terese:03}
  A \emph{strategy $\leadsto$} for a TRS $R$ is a relation ${\leadsto} \subseteq {\to_R}$
  on $\ter{\asig,\avars}$ having the same normal forms as $\to_R$.
  A strategy $\leadsto$ is called \emph{deterministic} if every term $t$ has at most one reduct $s$, that is, $t \leadsto s$.
\end{definition}

The following algorithm
searches for a partial model for the language of normalising terms.
The algorithm depends on 
a strategy $\leadsto$ for $R$ and
parameters $c, d \in \nat$ 
where 
$c$ is the maximal depth of contexts~$C$,
and
$d$ is the length of $\leadsto$-reductions used to guess whether a term is normalising.

Instead of the full Nerode congruence of the set of $R$-normalizing terms,
which might be undecidable,
we use an decidable equivalence relation $\sim$ on terms 
given by $s\sim t$ iff for each context $C[]$ of height $\le c$,
either both $C[s]$ and $C[t]$ have a $\leadsto$-derivation of length at least $d$,
or both don't.

\begin{algorithm}\label{alg:nerode}
  Starting from the full relation $\sim_0$ that relates all pairs of terms,
  we compute successive refinements ${\sim_0} \supset {\sim_1} \supset {\ldots}$.
  Each $\sim_i$ is given as a finite set of representatives 
  $T_i=\{t_{i,1},\ldots,t_{i,|T_i|}\}$
  with $k \neq l \Rightarrow t_{i,k}\not\sim t_{i,l}$,
  where $\sim$ is the Nerode-like relation defined above.
  The $\sim_i$-equivalence class of non-normalizing terms 
  is not explicitly represented. 
  By Lemma~\ref{lem:nerode}, this is no loss of information.
  The full relation $\sim_0$ is given by $T_0=\emptyset$.
  We set $T_{i+1} = T_{i} \cup \{s\}$ 
  where $s =f(s_1,\ldots,s_a)$, 
  for a choice of function symbol $f$ of arity $a$,
  and terms $(s_1,\ldots,s_a)\in T_i^a$,
  such that the $\leadsto$-derivation of $s$ has length $<d$,
  and $s \not\sim t$ for each $t\in T_i$.
  The algorithm stops if no such $s$ can be found.
\end{algorithm}

We remark %
that $T_1$ consists of one element,
which is a term containing a nullary symbol only.
Each $\sim_i$ computed by this algorithm constitutes a partial algebra
(on the carrier $T_i$),
since each step of the algorithm defines one part of
the interpretation of a function symbol.

The goal is that the output algebra is a partial model for $R$,
exactly capturing the Nerode congruence. 
This may fail, for two reasons.
If $d$ is chosen too small, then a normalising term $\contextfill{C}{s}$ may
mistakenly be considered non-normalising.
If $c$ is too small, then terms may accidentally be identified, although
they behave differently when put into larger contexts.

Nevertheless, it can be shown that if the language of normalising terms
is regular, then there exist appropriate parameters $c$ and $d$ such that
the algorithm will compute the correct partial model, see Lemma~\ref{lem:cd}.
The case of having chosen $c$ or $d$ too small can be detected after running the algorithm as follows.
Let $\aalg$ be the algebra computed by the algorithm.
It can be effectively checked whether $\aalg$ is a partial model for $R$,
and whether all undefined terms $\undefd{\interpret{t}}$ contain a redex with respect to $R$.
Then it automatically follows that all undefined terms are non-normalising.
Finally we can employ Theorem~\ref{thm:transform} to
transform the termination problem 
for all defined terms $\defd{\interpret{t}}$ 
into an equivalent global termination problem of $\lAbeltrs{\aalg}{\atrs}$.
If we find a termination proof for $\lAbeltrs{\aalg}{\atrs}$, then
the partial model $\aalg$ is correct and accepts exactly the language of normalising terms.
\begin{lemma}\label{lem:cd}
  Let $R$ be a TRS such that every ground term is weakly normalising if and only if it is strongly normalising.
  If the language $N$ of normalising terms
  is regular, then there exist appropriate parameters $c$ and $d$ such that
  Algorithm~\ref{alg:nerode} computes the correct partial model accepting exactly $N$.
\end{lemma}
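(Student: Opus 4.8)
The plan is to show that, for suitable $c$ and $d$, every run of Algorithm~\ref{alg:nerode} halts with a copy of the minimal complete deterministic bottom-up tree automaton for $N$ from which the sink state has been deleted. Since $N$ is regular, $\sim_N$ has finitely many classes, so this automaton $M$ is finite; as its states are the $\sim_N$-classes and each $\sim_N$-class is closed under $R$-rewriting (Lemma~\ref{lem:nerode}), $M$ is moreover a model for $R$. By Lemma~\ref{lem:nerode} one state $q_\bot$ is the class of the non-normalising terms, and since $s\notin N$ implies $\contextfill{C}{s}\notin N$, this state is absorbing. Restricting $M$ to its remaining states yields a core partial $\asig$-algebra $\aalg^{\ast}$ which, again using closure of classes under rewriting, is a partial model for $R$ with $\lang{\aalg^{\ast}}=N$; write $m$ for its number of elements, one per $\sim_N$-class contained in $N$. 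This $\aalg^{\ast}$ is the ``correct partial model accepting exactly $N$'' of the statement (up to the choice of representative terms used as carrier elements), so it suffices to choose $c,d$ making the algorithm converge to it.

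First I would record a fact about the strategy: under the hypothesis on $R$, every non-normalising ground term admits $\leadsto$-derivations of every length. Indeed, a maximal \emph{finite} $\leadsto$-derivation from such a term would end in a $\leadsto$-normal form, which is a $\to_R$-normal form, so the term would be weakly --- hence, by hypothesis, strongly --- normalising, a contradiction; thus every maximal $\leadsto$-derivation from it is infinite. Consequently, for \emph{any} $d$ the test ``$\contextfill{C}{s}$ has a $\leadsto$-derivation of length $\ge d$'' never misclassifies a non-normalising term, and the only possible error is a \emph{normalising} term whose $\leadsto$-derivations happen to be longer than $d$ (normalising terms are strongly normalising, so their $\leadsto$-derivations have bounded length). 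Now I choose the parameters. By a standard pumping argument on the trim minimal automaton $M$ there is a $c$, depending only on the number of states of $M$, such that for all ground terms $t_1,t_2$: $t_1\sim_N t_2$ if and only if, for every context $C$ of height $\le c$, $\contextfill{C}{t_1}\in N$ is equivalent to $\contextfill{C}{t_2}\in N$. Next note that any set of at most $m+1$ ground terms closed under taking subterms consists of terms of size at most some bound $B$ depending only on the maximal arity in $\asig$ and on $m$: order the set by size, observe its least element is a constant, and each element has size at most $1+(\text{maximal arity})\cdot(\text{size of the previous element})$. Finally let $B'$ bound the size of $\contextfill{C}{u}$ over all $u$ with $|u|\le B$ and all contexts $C$ of height $\le c$, let $D$ bound the lengths of $\leadsto$-derivations issuing from the finitely many normalising ground terms of size $\le B'$, and put $d:=D+1$.

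With $c,d$ so fixed, I would prove by induction on the step number that the set $T_i$ produced by the algorithm is always closed under taking subterms, consists of normalising terms lying in pairwise distinct $\sim_N$-classes --- hence $|T_i|\le m$ --- and therefore contains only terms of size $\le B$. For the inductive step a newly chosen $s=f(s_1,\ldots,s_a)$ with $s_1,\ldots,s_a\in T_i$ has size $\le B$; it is normalising, for otherwise it would have arbitrarily long $\leadsto$-derivations whereas the algorithm requires its $\leadsto$-derivation to have length $<d$; and it lies in a fresh $\sim_N$-class, because a context $C$ of height $\le c$ witnessing $s\not\sim t$ gives terms $\contextfill{C}{s},\contextfill{C}{t}$ of size $\le B'$, on which the ``$\leadsto$-derivation of length $\ge d$'' test agrees with non-normalisation, so that exactly one of $\contextfill{C}{s},\contextfill{C}{t}$ lies in $N$, i.e.\ $s\not\sim_N t$. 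Since $|T_i|\le m$ the algorithm halts, say at $T_n$. If $T_n$ omitted some $\sim_N$-class contained in $N$, then, $\aalg^{\ast}$ being core, there would be a symbol $f$ and representatives $t_{j_1},\ldots,t_{j_a}\in T_n$ such that $f(t_{j_1},\ldots,t_{j_a})$ is normalising (of size $\le B$, hence with $\leadsto$-derivations of length $<d$) and lies in the omitted class (hence $\not\sim$ every element of $T_n$); this would be a legal move, contradicting that the algorithm had stopped. Hence $T_n$ has exactly $m$ elements, one per $\sim_N$-class inside $N$, and unravelling how the algorithm assigns the interpretations shows that $\interpret{f}(s_1,\ldots,s_a)$ is defined exactly when $f(s_1,\ldots,s_a)\in N$, that the resulting algebra is a partial model for $R$ (again by closure of classes under rewriting, Lemma~\ref{lem:nerode}), and that its language is $N$; so the output is a copy of $\aalg^{\ast}$, as desired.

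The delicate point --- and the only real work --- is the apparent circularity between the choices of $c$ and $d$: the $\leadsto$-length test is reliable only on terms that are not too large, yet which terms the algorithm actually feeds into it is under control only once one already knows the run behaves correctly. This is resolved by the simultaneous induction above, whose engine is the a-priori size bound $B$ on subterm-closed sets of at most $m+1$ terms; it keeps the collection of terms ever examined finite and bounded in terms of $\asig$ and $N$ alone, so that one finite $d$ works throughout. The fact about strategies disposes of the non-normalising side for free, and what remains is essentially bookkeeping about bottom-up tree automata; Theorem~\ref{thm:transform} is not needed for this lemma, though it is precisely what makes the resulting partial model usable.
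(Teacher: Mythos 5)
Your proposal is correct and follows essentially the same strategy as the paper's proof: choose $c$ large enough that bounded-height contexts separate all $\sim_N$-classes, and choose $d$ to exceed the $\leadsto$-derivation lengths of all normalising terms the algorithm can ever test, the latter being possible because the terms fed to the test stay within an a-priori bounded finite set. Your write-up is considerably more explicit than the paper's (the one-sided-error observation about non-normalising terms, the subterm-closure size bound in place of the paper's height bound $n+1$, the induction maintaining the invariants, and the argument that the algorithm does not halt before covering every class of $N$), but these are elaborations of the same argument rather than a different route.
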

\begin{proof}
  If the languages $N$ is regular, then the set of congruence classes of $\sim_N$ is finite.
  Let $n = \setsize{\ter{\asig,\setemp}/_{\sim_N}}$,
  and let $T \subseteq \ter{\asig,\setemp}$ be the set of all ground terms of height $\le n + 1$.
  Then $T$ contains at least one representative $t_D \in T$ for every $D \in \ter{\asig,\setemp}/_{\sim_N}$.
  For every pair $\pair{t_{D_1}}{t_{D_2}}$ of representatives with $t_{D_1} \ne t_{D_2}$
  we pick a `discriminating' context $\acontext$ such that $\contextfill{\acontext}{t_{D_1}} \in N \not\Leftrightarrow \contextfill{\acontext}{t_{D_2}} \in N $.
  Let $\mathcal{C}$ be the set of these (finitely many) contexts.
  We choose for $c$ the maximal depth of all contexts in $\acontext \in \mathcal{C}$,
  and for $d$ the maximal length of a $\leadsto$-reduction of all normalising $\contextfill{\acontext}{t} \in N$
  with $\acontext \in \mathcal{C}$ and $t \in T$.
  Then the choice of $d$ guarantees that terms $\contextfill{\acontext}{t}$ will
  not accidentally be identified as non-terminating,
  and the choice of $c$ guarantees that all non-equivalent terms in $T_i$ will be distinguished.
\end{proof}
We have implemented Algorithm~\ref{alg:nerode}; the Haskell source can be downloaded from:
\begin{center}
  \uRl{http://infinity.few.vu.nl/local/}
\end{center}
We have applied the algorithm on Smullyan's Owl and CL(S),
obtaining in both cases the minimal partial algebra
accepting the language of all normalising terms.
Further details, including the respective partial models, are given below.

\begin{example}[\textbf{Smullyan's Owl}]\mbox{} \label{owl}
Smullyan's Owl serves as illustrating example.
The set of normalising Owl-terms has been found in~\cite{jw:fpc}.
The Owl corresponds to the following rewrite rule:
\begin{align*}
  \combOwl\;xy \to y(xy)
\end{align*}
or, equivalently, in first order notation:
\begin{align}
  \combAt{\combAt{\combOwl}{x}}{y} \to \combAt{y}{\combAt{x}{y}} \label{rule:owl}
\end{align}
Applied to Rule~\eqref{rule:owl}, Algorithm~\ref{alg:nerode} computes the partial model $\aalg = \pair{\{0,1\}}{\sinterpret}$
where $\interpret{\combOwl} = 0$, and 
the interpretation of $\scombAt$ is given in Table~\ref{tab:owl}.

\begin{table}[h!]
\begin{center}
  \begin{tabular}{|c|c|c|c|}
    \hline
      & 0 & 1\\
    \hline
    0 & 1 & 1\\
    \hline
    1 & 1 & -\\
    \hline
  \end{tabular}
  \caption{Interpretation $\funap{\interpret{\scombAt}}{x,y}$ for the Owl with $x$ on the left and $y$ on the top.}
  \label{tab:owl}
\end{center}
\end{table}

\noindent Examples for terms $t \in \lang{\aalg}$, that is, normalising terms, 
are 
\begin{align*}
  \combOwl \combOwl \combOwl \ldots \combOwl, && \combOwl (\combOwl\combOwl\ldots \combOwl) \combOwl\ldots \combOwl\text{, and}
  && \combOwl (\combOwl (\combOwl \combOwl) \combOwl \combOwl) \combOwl \combOwl \combOwl
\end{align*}
For an example of a non-normalising term take $\combOwl \combOwl (\combOwl \combOwl)$.
In words, the set of undefined (non-normalising) terms can be described as follows:
a term is undefined if it contains two distinct occurrences of $\combOwl \combOwl$.
Note that the term $\combOwl \combOwl \combOwl \ldots \combOwl = (\ldots ((\combOwl \combOwl) \combOwl) \ldots) \combOwl$
contains only one occurrence of $\combOwl\combOwl$ (or $\bfunap{\scombAt}{\combOwl}{\combOwl}$ in first-order notation).

First, we check that $\aalg$ is a partial model for $R$:
\begin{align*}
  \interpreta{\combOwl\;xy}{\alpha} &= 1 = \interpreta{y(xy)}{\alpha} &&\text{ for $\funap{\alpha}{x} = 0$, $\funap{\alpha}{y} = 0$}\\
  \undefd{\interpreta{\combOwl\;xy}{\alpha}} & \text{ and } \undefd{\interpreta{y(xy)}{\alpha}} &&\text{ for $\funap{\alpha}{x} = 0$, $\funap{\alpha}{y} = 1$}\\
  \interpreta{\combOwl\;xy}{\alpha} &= 1 = \interpreta{y(xy)}{\alpha} &&\text{ for $\funap{\alpha}{x} = 1$, $\funap{\alpha}{y} = 0$}\\
  \undefd{\interpreta{\combOwl\;xy}{\alpha}} & \text{ and } \undefd{\interpreta{y(xy)}{\alpha}} &&\text{ for $\funap{\alpha}{x} = 1$, $\funap{\alpha}{y} = 1$}
\end{align*}
Second, we use induction on the term structure to show that every undefined term contains a redex.
Let $t \in \ter{\asig,\setemp}$ be a term such that $\undefd{\interpret{t}}$.
Then by definition of $\sinterpret{}$ the term $t$ is of the form $t = \combAt{t_1}{t_2}$
and either $\interpret{t_1} = \interpret{t_2} = 1$, or $\undefd{\interpret{t_1}}$, or $\undefd{\interpret{t_2}}$.
In the latter two cases it suffices to apply the induction hypothesis to $t_1$ or $t_2$, respectively.
Thus, let~$\interpret{t_1} = \interpret{t_2} = 1$.
We use induction on the term structure of $t_1$.
Again, by definition of $\sinterpret{}$ the term $t_1$ is of the form $t_1 = \combAt{t_1'}{t_2'}$
with $\interpret{t_1'} = 0$, or $\interpret{t_1'} = 1$.
If $\interpret{t_1'} = 0$, then $t_1' = \combOwl$ and $t = \bfunap{\scombAt}{\bfunap{\scombAt}{\combOwl}{t_2'}}{t_2}$, and hence $t$ contains a redex.
For $\interpret{t_1'} = 1$ we finish by applying the second induction hypothesis.

Third, we prove termination for all defined terms.
An application of Theorem~\ref{thm:transform} yields the following labelled TRS:
\begin{align*}
  \bfunap{\slabel{\scombAt}{1,0}}{\bfunap{\slabel{\scombAt}{0,0}}{\combOwl}{x}}{y} 
    &\to \bfunap{\slabel{\scombAt}{0,1}}{y}{\bfunap{\slabel{\scombAt}{0,0}}{x}{y}}
    &&\text{ for $\funap{\alpha}{x} = 0$, $\funap{\alpha}{y} = 0$}\\
  \bfunap{\slabel{\scombAt}{1,0}}{\bfunap{\slabel{\scombAt}{0,1}}{\combOwl}{x}}{y} 
    &\to \bfunap{\slabel{\scombAt}{0,1}}{y}{\bfunap{\slabel{\scombAt}{1,0}}{x}{y}}
    &&\text{ for $\funap{\alpha}{x} = 1$, $\funap{\alpha}{y} = 0$}
\end{align*}
Termination of this system can easily be proven;
for example AProVE~\cite{Aprove04} finds a termination proof using the recursive path order.

Thus, indeed, $\aalg$ is a partial model accepting exactly the normalising $\combOwl$-terms.
\end{example}

\begin{example}[\textbf{The set of normalising S-terms}]\label{starling}
For CL(S), Algorithm~\ref{alg:nerode} returns the partial model $\aalg = \pair{\aset}{\sinterpret}$
where $\aset = \{0,1,\ldots,37\}$, $\interpret{\combS} = 4$, and 
the interpretation of $\scombAt$ is given in Table~\ref{tab:s}.
Indeed, it can be checked that $\aalg$ is equivalent
to the grammar given in \cite{DBLP:journals/iandc/Waldmann00}.

\begin{table}[h!]
  \scalebox{.5}{\begin{tabular}{|r|r|r|r|r|r|r|r|r|r|r|r|r|r|r|r|r|r|r|r|r|r|r|r|r|r|r|r|r|r|r|r|r|r|r|r|r|r|r|}\hline& 0& 1& 2& 3& 4& 5& 6& 7& 8& 9& 10& 11& 12& 13& 14& 15& 16& 17& 18& 19& 20& 21& 22& 23& 24& 25& 26& 27& 28& 29& 30& 31& 32& 33& 34& 35& 36& 37\\ \hline 0& 36& 36& 36& 27& 1& 36& 37& 37& 37& 37& 37& 37& 37& 37& 37& 37& 37& 37& 37& 37& 37& 37& 37& 37& 37& 37& 37& 37& 37& 37& 37& 37& 37& 37& 37& 37& 37& 37\\ \hline 1& 34& 35& 35& 32& 2& 36& 37& 37& 37& 36& 37& 36& 36& 37& 37& 37& 37& 37& 37& 37& 37& 37& 37& 37& 37& 37& 37& 37& 37& 37& 37& 37& 37& 37& 37& 37& 37& 37\\ \hline 2& 36& 36& 36& 33& 2& 37& 37& 37& 37& 37& 37& 37& 37& 37& 37& 37& 37& 37& 37& 37& 37& 37& 37& 37& 37& 37& 37& 37& 37& 37& 37& 37& 37& 37& 37& 37& 37& 37\\ \hline 3& 2& 2& 2& 1& 0& 9& 8& 10& 8& 9& 10& 12& 12& 14& 14& 16& 16& 24& 18& 19& 20& 21& 22& 26& 24& 25& 26& 27& 28& 29& 30& 31& 32& 33& 34& 35& 36& 37\\ \hline 4& 6& 7& 7& 5& 3& 11& 18& 18& 18& 13& 18& 15& 15& 18& 18& 18& 18& 28& 28& 23& 26& 26& 28& 28& 28& 28& 28& 28& 28& 28& 28& 28& 28& 28& 28& 28& 28& 28\\ \hline 5& 32& 32& 32& 17& 19& 27& 33& 33& 33& 33& 33& 33& 33& 33& 33& 33& 33& 33& 33& 33& 31& 31& 33& 33& 33& 33& 33& 33& 33& 31& 31& 31& 33& 33& 36& 36& 36& 37\\ \hline 6& 34& 34& 34& 29& 20& 36& 36& 36& 36& 36& 36& 36& 36& 36& 36& 36& 36& 36& 36& 36& 36& 36& 36& 36& 36& 36& 36& 36& 36& 36& 36& 36& 36& 36& 36& 36& 36& 37\\ \hline 7& 35& 35& 35& 30& 21& 36& 36& 36& 36& 36& 36& 36& 36& 36& 36& 36& 36& 36& 36& 36& 36& 36& 36& 36& 36& 36& 36& 36& 36& 36& 36& 36& 36& 36& 36& 36& 36& 37\\ \hline 8& 36& 36& 36& 31& 20& 37& 37& 37& 37& 37& 37& 37& 37& 37& 37& 37& 37& 37& 37& 37& 37& 37& 37& 37& 37& 37& 37& 37& 37& 37& 37& 37& 37& 37& 37& 37& 37& 37\\ \hline 9& 36& 36& 36& 33& 19& 37& 37& 37& 37& 37& 37& 37& 37& 37& 37& 37& 37& 37& 37& 37& 37& 37& 37& 37& 37& 37& 37& 37& 37& 37& 37& 37& 37& 37& 37& 37& 37& 37\\ \hline 10& 36& 36& 36& 31& 21& 37& 37& 37& 37& 37& 37& 37& 37& 37& 37& 37& 37& 37& 37& 37& 37& 37& 37& 37& 37& 37& 37& 37& 37& 37& 37& 37& 37& 37& 37& 37& 37& 37\\ \hline 11& 36& 36& 36& 22& 25& 36& 37& 37& 37& 36& 37& 36& 36& 37& 37& 37& 37& 37& 37& 36& 36& 36& 37& 37& 37& 37& 37& 37& 37& 37& 37& 37& 37& 37& 37& 37& 37& 37\\ \hline 12& 36& 36& 36& 33& 25& 37& 37& 37& 37& 37& 37& 37& 37& 37& 37& 37& 37& 37& 37& 37& 37& 37& 37& 37& 37& 37& 37& 37& 37& 37& 37& 37& 37& 37& 37& 37& 37& 37\\ \hline 13& 36& 36& 36& 31& 30& 36& 37& 37& 37& 36& 37& 36& 36& 37& 37& 37& 37& 37& 37& 36& 36& 36& 37& 37& 37& 37& 37& 37& 37& 37& 37& 37& 37& 37& 37& 37& 37& 37\\ \hline 14& 36& 36& 36& 31& 30& 37& 37& 37& 37& 37& 37& 37& 37& 37& 37& 37& 37& 37& 37& 37& 37& 37& 37& 37& 37& 37& 37& 37& 37& 37& 37& 37& 37& 37& 37& 37& 37& 37\\ \hline 15& 36& 36& 36& 31& 31& 36& 37& 37& 37& 36& 37& 36& 36& 37& 37& 37& 37& 37& 37& 36& 36& 36& 37& 37& 37& 37& 37& 37& 37& 37& 37& 37& 37& 37& 37& 37& 37& 37\\ \hline 16& 36& 36& 36& 31& 31& 37& 37& 37& 37& 37& 37& 37& 37& 37& 37& 37& 37& 37& 37& 37& 37& 37& 37& 37& 37& 37& 37& 37& 37& 37& 37& 37& 37& 37& 37& 37& 37& 37\\ \hline 17& 36& 36& 36& 35& 36& 37& 37& 37& 37& 37& 37& 37& 37& 37& 37& 37& 37&& 37& 37&&&& 37&& 37& 37&& 37&&&&&&&&&\\ \hline 18& 37& 37& 37& 37& 36& 37& 37& 37& 37& 37& 37& 37& 37& 37& 37& 37& 37& 37& 37& 37& 37& 37& 37& 37& 37& 37& 37& 37& 37& 37& 37& 37& 37& 37& 37& 37& 37& 37\\ \hline 19& 37& 37& 37& 36& 27& 37& 37& 37& 37& 37& 37& 37& 37& 37& 37& 37& 37& 37& 37& 37&&& 37& 37& 37& 37& 37& 37& 37&&&& 37& 37&&&&\\ \hline 20&&&& 37& 32&&&&&&&&&&&&&&&&&&&&&&&&&&&&&&&&&\\ \hline 21&&&& 37& 33&&&&&&&&&&&&&&&&&&&&&&&&&&&&&&&&&\\ \hline 22& 37& 37& 37& 36& 37& 37& 37& 37& 37& 37& 37& 37& 37& 37& 37& 37& 37&& 37&&&&& 37&&& 37&& 37&&&&&&&&&\\ \hline 23& 36& 36& 36& 36& 36& 36& 37& 37& 37& 36& 37& 36& 36& 37& 37& 37& 37& 37& 37& 37& 37& 37& 37& 37& 37& 37& 37& 37& 37& 37& 37& 37& 37& 37& 37& 37& 37& 37\\ \hline 24& 36& 36& 36& 36& 36& 37& 37& 37& 37& 37& 37& 37& 37& 37& 37& 37& 37&& 37& 37&&&& 37&& 37& 37&& 37&&&&&&&&&\\ \hline 25& 37& 37& 37& 37& 36& 37& 37& 37& 37& 37& 37& 37& 37& 37& 37& 37& 37& 37& 37& 37&&& 37& 37& 37& 37& 37& 37& 37&&&& 37& 37&&&&\\ \hline 26& 36& 36& 36& 36& 36& 37& 37& 37& 37& 37& 37& 37& 37& 37& 37& 37& 37& 37& 37& 37& 37& 37& 37& 37& 37& 37& 37& 37& 37& 37& 37& 37& 37& 37& 37& 37& 37& 37\\ \hline 27& 37& 37& 37& 37& 37& 37& 37& 37& 37& 37& 37& 37& 37& 37& 37& 37& 37&& 37& 37&&&& 37&& 37& 37&& 37&&&&&&&&&\\ \hline 28& 37& 37& 37& 37& 37& 37& 37& 37& 37& 37& 37& 37& 37& 37& 37& 37& 37& 37& 37& 37& 37& 37& 37& 37& 37& 37& 37& 37& 37& 37& 37& 37& 37& 37& 37& 37& 37& 37\\ \hline 29&&&& 37& 34&&&&&&&&&&&&&&&&&&&&&&&&&&&&&&&&&\\ \hline 30&&&& 37& 36&&&&&&&&&&&&&&&&&&&&&&&&&&&&&&&&&\\ \hline 31&&&& 37& 37&&&&&&&&&&&&&&&&&&&&&&&&&&&&&&&&&\\ \hline 32& 37& 37& 37& 37& 36& 37& 37& 37& 37& 37& 37& 37& 37& 37& 37& 37& 37&& 37&&&&& 37&&& 37&& 37&&&&&&&&&\\ \hline 33& 37& 37& 37& 37& 37& 37& 37& 37& 37& 37& 37& 37& 37& 37& 37& 37& 37&& 37&&&&& 37&&& 37&& 37&&&&&&&&&\\ \hline 34&&&&& 35&&&&&&&&&&&&&&&&&&&&&&&&&&&&&&&&&\\ \hline 35&&&&& 36&&&&&&&&&&&&&&&&&&&&&&&&&&&&&&&&&\\ \hline 36&&&&& 37&&&&&&&&&&&&&&&&&&&&&&&&&&&&&&&&&\\ \hline 37&&&&&&&&&&&&&&&&&&&&&&&&&&&&&&&&&&&&&&\\ \hline \end{tabular}
}
  \caption{Transition table for $\interpret{\scombAt}(x,y)$ with $x$ left, $y$ top.}
  \label{tab:s}
\end{table}

\noindent We have formally verified (using the proof assistant Coq~\cite{coq}) 
that $\aalg$ is a partial model for CL(S) and that the language of $\aalg$ contains all normalising terms.

For proving that CL(S) is terminating on the language of $\aalg$
we have transformed the local into a global termination problem using Definition~\ref{def:local:transform}.
The resulting TRS contains 1800 rules which are globally terminating,
as can be shown using the DP transformation with SCC decomposition~\cite{AG00}
together with simple projections and the subterm \mbox{criterion~\cite{HM04}}.
The termination proof can be found automatically, and formally verified using
the current versions of CeTA~(1.05)~\cite{ceta} and TTT2~(1.0)~\cite{TTT09}.

\end{example}

\section{The RFC Method}\label{sec:rfc}

\newcommand{\sRFC}{\operatorname{RFC}}
\newcommand{\RFC}{\funap{\sRFC}}
\newcommand{\rhs}{\funap{\operatorname{rhs}}}

\noindent We show that the method proposed in Section~\ref{sec:model}
is not only useful for local termination,
but can fruitfully be employed for global termination as well.
In~\cite{dershowitz:81}, Dershowitz reduces global termination
of right-linear TRSs to local termination on the set $\RFC{R}$, 
called the right-hand sides of forward closures of $R$.
The set $\RFC{R} \subseteq \ter{\asig,\avars}$ is a subset of all terms,
weakening the proof obligation, and often allowing for simpler termination proofs.
Previously,
the only automated method employing this transformation for proving global termination
has been the method of match-bounded string rewriting \cite{Matchbounds04}.
In the present paper we advocate an alternative approach.

We propose a combination of the $\sRFC$-method with the transformation from Section~\ref{sec:model}.
More precisely, we first reduce the global termination problem to a local termination problem on $\RFC{R}$,
and then we transform this problem back into a global termination problem.
We show that this method can successfully be applied to obtain proofs for global termination;
see further Example~\ref{ex:rfc} for a rewrite system that remained unsolved in the termination competition~\cite{term:comp:08}.

A string rewriting systems (SRS) $R$
is a TRS $R$ where all symbols $f \in \asig$  of the signature are unary.
We then use words $a_1\ldots a_n$ to denote terms $\funap{a_n}{\ldots\funap{a_1}{x}}$.

For SRSs $R$ the set $\RFC{R}$ can be defined as follows:

\begin{definition}[\cite{dershowitz:81}]
  Let $R$ be a SRS over $\Sigma$.
  The \emph{right-hand sides of forward closures of~$R$}, denoted $\RFC{R}$,
  are defined as the smallest set $F \subseteq \asig^*$ such that:
  \begin{enumerate}[$\bullet$]
    \item $\rhs{R} \subseteq F$,
    \item if $u \in F$ and $u \to v$, then $v \in F$ (rewriting), and
    \item if $u\ell_1 \in F$ and $\ell_1\ell_2 \to r \in R$ with $\ell_1 \neq \wordemp$, then $u r \in F$ (right extension).
  \end{enumerate}
\end{definition}

\noindent
We have the following well-known theorem:
\begin{theorem}[\cite{dershowitz:81}]
  A string rewriting system $R$ is terminating on $\Sigma^*$ if and only if $R$ is terminating on $\RFC{R}$.
\end{theorem}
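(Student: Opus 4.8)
The `only if' direction is immediate: $\RFC{R}\subseteq\asig^{*}$, and termination on a set of terms is inherited by every subset.

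For the `if' direction the plan is to argue by contraposition: assuming that $R$ admits an infinite rewrite sequence somewhere on $\asig^{*}$, I would manufacture an infinite rewrite sequence all of whose words lie in $\RFC{R}$. The first move is to normalise a given infinite derivation $D\colon w_{0}\to w_{1}\to w_{2}\to\cdots$ by a minimality argument. Choose $w_{0}$ of minimal length among all words that start an infinite derivation. Then every letter of $w_{0}$ must eventually occur in a contracted redex of $D$: a letter whose residual is never part of a contracted redex remains a passive single-letter passenger throughout $D$, so deleting it from $w_{0}$ and from each of its residuals yields a valid infinite derivation from a shorter word, contradicting minimality. Moreover, since a letter of $w_{0}$ is destroyed by the very step in which its residual first enters a contracted redex, and $w_{0}$ is finite, only finitely many steps of $D$ contract a redex touching a residual of $w_{0}$; beyond the last such step, say step $m$, the word $w_{m+1}$ consists entirely of material created during the first $m$ steps, and the tail $w_{m+1}\to w_{m+2}\to\cdots$ is still infinite.

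The heart of the proof is then to \emph{follow the created material}. Starting from the right-hand side $r_{0}$ introduced by the first step (so $r_{0}\in\rhs{R}\subseteq\RFC{R}$), one tracks inside each $w_{i}$ the descendant block of $r_{0}$ — enlarged so as to also include the descendants of every right-hand side introduced by a step whose redex overlapped the current block — and reroutes the infinite tail of $D$ through $\RFC{R}$. A contracted redex lying inside the current block is handled by closure of $\RFC{R}$ under rewriting; a contracted redex straddling the right border of the block, overlapping a nonempty suffix $\ell_{1}$ of the block and a prefix $\ell_{2}$ of the material to its right with $\ell_{1}\ell_{2}\to r$ the applied rule, is exactly what the right-extension clause is designed to absorb. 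The essential point, which the normalisation above secures, is that no step of the rerouted derivation ever needs context to the \emph{left} of the block; since $\RFC{R}$ provides right extension but not left extension, this is precisely the situation in which the construction goes through. This is Dershowitz's forward-closure argument~\cite{dershowitz:81} specialised to string rewriting, which is trivially right-linear.

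I expect the main obstacle to be the bookkeeping for straddling redexes, together with the residual-tracking that defines the descendant block across steps: one must verify that whenever a contracted redex overlaps both the block and the material on its right, the overlapped suffix of the block really does occur as the prefix $\ell_{1}$ of an applicable rule $\ell_{1}\ell_{2}\to r$, so that the right-extension clause applies, and that iterating these extensions keeps the constructed word inside $\RFC{R}$ while the derivation remains infinite. Once this overlap analysis is set up correctly, the remaining steps are routine.
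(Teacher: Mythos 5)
The paper itself gives no proof of this theorem --- it is imported verbatim from Dershowitz's 1981 paper --- so your sketch has to stand on its own. The easy direction and the normalisation of the infinite derivation are fine: minimality of $|w_0|$ does force every letter of $w_0$ to be consumed eventually (a never-consumed letter can be deleted from $w_0$ and all its residuals, since every contracted redex is a contiguous factor lying entirely on one side of it), and hence after finitely many steps the word consists entirely of created material.

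The gap is the claim that this normalisation ``secures that no step of the rerouted derivation ever needs context to the \emph{left} of the block.'' It does not. The block you track consists of the descendants of the right-hand side $r_0$ created by the \emph{first} step, and that first redex may sit in the middle of $w_0$: everything to its left, and everything later created there, stays to the left of the block, and a contracted redex can perfectly well straddle the block's left border (e.g.\ with rules $ab \to c$ and $dc \to \dots$ starting from $dab$: after the first step the block is $c$ and the next redex is $dc$). Since $\mathrm{RFC}(R)$ is closed under rewriting and \emph{right} extension only, such a step cannot be absorbed, and your rerouting breaks down exactly there; note that your own list of ``main obstacles'' only mentions redexes straddling the \emph{right} border, which is a symptom of this oversight. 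The usual repair is to begin tracking not at the first step but at the step that consumes the \emph{leftmost} letter of the minimal word $w_0$: at that moment the contracted redex is a prefix of the current word, so its right-hand side --- and hence the tracked block, under internal rewriting, right extension, and disjoint steps to its right --- remains a prefix forever, making left overlaps structurally impossible. Even then one must still handle the case in which only finitely many subsequent steps touch the (prefix) block, so that the infinite tail lives entirely in the suffix to its right; closing that case needs a further minimality or well-founded-induction argument, and that is where the substance of Dershowitz's proof lies. As written, your argument fails on the left-overlap case.
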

\noindent

The set $\RFC{R}$ can be (over-)approximated by using the system 
\[
R_\# = \{ u \# \to r \# \mid (u\cdot v \to r)\in R, u\neq \epsilon, v\neq\epsilon\}
\]
over $\Sigma_\#=\Sigma\cup\{\#\}$,
where $\#$ acts as an end marker.
Then $\RFC{R}\# = \funap{(R \cup R_\#)^*}{\rhs{R}\#}$,
and we can reduce the global termination problem of $R$
to local termination of $R \cup R_\#$ on $\rhs{R}\#$.
More generally we have the following observation:
whenever $M \supseteq \rhs{R}\#$ is closed w.r.t. $R\cup R_\#$, then $\RFC{M}\# \subseteq M$.
The closure under rewriting can be proven by giving a partial model $\aalg$
($M$ is the language of a partial $\Sigma_\#$-algebra $\aalg$).

\begin{example}\label{ex:rfc}
Take $\Sigma=\{a,b,c\}$ and 
\[
R=\{a\to\epsilon, b\to\epsilon,cc\to a, ba\to cacbb\}.
\]
This is the mirrored version of \uRl{SRS/Waldmann07b/size-12-alpha-3-num-223}
which has not been solved automatically in previous termination competitions.
We present a partial $\Sigma_\#$-algebra $\aalg$ with 3 elements $\aset = \{1,2,3\}$
and interpretations of function symbols:
\[a: 1\mapsto 1, 2\mapsto 2; b : 1\mapsto 1, 
c:1\mapsto 2, 2\mapsto 1, \#:1\mapsto 3, 2\mapsto 3,
\]
and $b(2)$ as well as all transitions from $3$ are undefined.
Note: we consider the right end of the string to be the top symbol of the term.
It can be checked that $\aalg$ is a partial model for $R\cup R_\#$,
and its language contains $\rhs{R}\#$.
As a consequence we have $\lang{\aalg} \subseteq \RFC{R}\#$.
Formally, for the existence of ground terms, we add a fresh constant $e$ 
with interpretation $\interpret{e} = 1$.
This constant does not harm the property of $\aalg$ being a partial model for $R\cup R_\#$,
and does not affect the termination behaviour:
since SRSs are linear, $R\cup R_\#$ is terminating on $\rhs{R}\#$
if and only if $R\cup R_\#$ is terminating on 
$\{\funap{a_n}{\ldots\funap{a_1}{e}} \where a_1\ldots a_n \in \rhs{R}\#\}$.

We obtain the following labelled system:
\[R_\aalg=\{ a_1\to \epsilon, a_2\to\epsilon, b_1\to \epsilon, 
c_1 c_2 \to a_1, c_2 c_1 \to a_2, b_1 a_1\to c_1 a_2 c_2 b_1 b_1 \},
\]
termination of which is equivalent to termination of $R$.
Indeed $R_\aalg$ is easily seen to be terminating.
E.g., Torpa~\cite{torpa:05} finds the following termination proof:
\begin{indentation}{.7cm}{0cm}
\begin{verbatim}
[A] Choose polynomial interpretation
      a1 c1: lambda x.x+1,
      rest identity
    remove: a1  -> 
    remove: c2 c1  -> a2 
[AC] Reverse every lhs and rhs and choose polynomial 
     interpretation:
      a1 and c1: lambda x.10x, 
      rest lambda x.x+1
    remove: a2  -> 
    remove: b1 a1  -> c1 a2 c2 b1 b1 
    remove: b1  -> 
    remove: c1 c2  -> a1 
Terminating since no rules remain.
\end{verbatim}
\end{indentation}

\noindent
For automating this method,
the challenge is to find a partial model
such that the resulting labelled total termination problem
is easier than the original one.
In particular, the domain of the partial algebra
must be a proper subset of the full algebra ($\Sigma^*$).
In our example, the domain excludes 
all words containing the factor $bcb$.
\qed
\end{example}

\section{Monotone-models for Local Termination}\label{sec:qmodel}
\noindent In Sections~\ref{sec:sn}--\ref{sec:stepwise} we have devised
a characterisation of local termination in terms of monotone partial
algebras.  While this gives the general method, for the purpose of
obtaining automatable methods we strive for fruitful classes of these
algebras.  For global termination, instances of monotone algebras are
well-known.  This raises the natural question whether we can transform
a given monotone algebra for global termination in such a way that we
obtain a partial monotone algebra for local termination.

In this section we present one such approach.
We combine monotone partial models with (ordinary) monotone algebras.
The monotone partial models are roughly deterministic tree automata
that are closed under rewriting;
they describe the language of term on which we proof termination.
We search for such an automaton that accepts the starting language $\termset$
together with a monotone algebra such that the rewrite rules decrease
on the language of the automaton.
In this way monotone algebras for global termination carry over to local termination,
and we obtain an automatable method that 
is applicable for proofs of local termination.

First we give the definition of extended $\samumap$-monotone algebras as known from 
global termination of context-sensitive TRSs, see~\cite{luca:98,EWZ06}.
A mapping $\samumap \funin \asig \to \powerset{\nat}$
is called a \emph{replacement map (for $\asig$)} if for all $f \in \asig$ we have
$\amumap{f} \subseteq \{1,\ldots,\arity{f}\}$.
Let $\pair{\aset}{\sinterpret}$ be a $\asig$-algebra and $\samumap$ a replacement map.
For symbols $f \in \asig$ we say that the interpretation $\interpret{f} \funin \aset^{\arity{f}} \to \aset$
is \emph{$\samumap$-monotone} with respect to $\alggt$
if for every $a, b \in \aset$ and $i \in \amumap{f}$ with $a \alggt b$ we have:
\(\funap{f}{\underbrace{\ldots}_{i-1},a,\underbrace{\ldots}_{\arity{f}-i}} \alggt \funap{f}{\ldots,b,\ldots}\punc.\)

\begin{definition}\normalfont\label{def:globalalg}
  Let $\samumap$ be a replacement map for $\asig$.

  \noindent An \emph{extended well-founded $\samumap$-monotone $\asig$-algebra 
  $\quadruple{\aset}{\interpret{\cdot}}{{\alggt}}{{\algge}}$}
  is a $\asig$-algebra $\pair{\aset}{\interpret{\cdot}}$
  with two binary relations $\alggt$, $\algge$ on $\aset$ for which
  the following conditions hold:
  \begin{enumerate}[(i)]
    \item $\SNrel{{\alggt}}{\,{\algge}}$, and
    \item for every $f \in \asig$ the function $\interpret{f}$ is $\samumap$-monotone with respect to $\alggt$ and $\algge$.
  \end{enumerate}
\end{definition}

A partial model $\aalg = \triple{\aset}{\sinterpret}{{\qge}}$
may contain elements $a \in \aset$
for which $\interpret{t} = a$ implies that $t$ is a normal form.
For a given partial model the set of these, which we denote by $\nf{\aset}{\atrs}$, can be computed (see below Definition~\ref{def:nfalg}).
We can exploit this knowledge as follows:
if a certain argument of a symbol $f \in \asig$
is always a normal form, then its interpretation $\interpret{f}$
does not need to be monotonic for this argument position.
The following definition gives an algorithm for computing the set $\nf{\aset}{\atrs}$.
Elements that are interpretations $\interpreta{\ell}{\alpha}$ of left-hand sides in $\atrs$
cannot belong to this set. Moreover if $a \not\in \nf{\aset}{\atrs}$
and $b = \funap{\interpret{f}}{\ldots,a,\ldots}$ then we conclude $b \not\in \nf{\aset}{\atrs}$.
This is formalised as follows:
\begin{definition}\normalfont\label{def:nfalg}
  Let $\atrs$ be a TRS over the signature $\asig$, and $\aalg = \pair{\aset}{\sinterpret}$ a partial $\asig$-algebra.
  The \emph{normal forms $\nf{\aset}{\atrs}$ of $\aalg$}
  are the largest set $\nf{\aset}{\atrs} \subseteq \core{\aalg}$ such that
  $\interpreta{\ell}{\alpha} \not\in \nf{\aset}{\atrs}$
  for every $\ell \to r \in \atrs$ and every $\alpha \funin \vars{\ell} \to \core{\aalg}$,
  and
  $\funap{\interpret{f}}{a_1,\ldots,a_n} \not\in \nf{\aset}{\atrs}$ 
  for every $f \in \asig$, $a_i \not\in \nf{\aset}{\atrs}$ and $a_1,\ldots,a_n \in \core{\aalg}$.
\end{definition}
Then by construction we obtain the following lemma:
\begin{lemma}\label{lem:nfalg}
  $\nf{\aset}{\atrs}$ consists of all $a \in \core{\aalg}$ 
  for which 
  every term $t \in \ter{\asig,\setemp}$ with $\interpret{t} = a$ is a normal form with respect to $\atrs$.
\end{lemma}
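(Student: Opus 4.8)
The plan is to prove the two inclusions $S \subseteq \nf{\aset}{\atrs}$ and $\nf{\aset}{\atrs} \subseteq S$, where $S$ consists of those $a \in \core{\aalg}$ such that every ground term $t$ with $\interpret{t} = a$ is an $\atrs$-normal form. Two facts will be used throughout: the identity $\core{\aalg} = \{\interpret{t} \where t \in \ter{\asig,\setemp},\, \defd{\interpret{t}}\}$ noted after Definition~\ref{def:local:core}; and the substitution lemma for partial $\asig$-algebras, namely $\interpret{\subst{\asubst}{t}} \peq \interpreta{t}{\alpha}$ where $\funap{\alpha}{x} = \interpret{\funap{\asubst}{x}}$, which follows by an easy induction on $t$ (composition of partial functions is strict, so both sides are defined exactly when all the $\interpret{\funap{\asubst}{x}}$ for $x$ occurring in $t$ are). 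I will also use that the family of subsets of $\core{\aalg}$ satisfying the two conditions of Definition~\ref{def:nfalg} is closed under arbitrary unions, so that the ``largest such set'' in that definition really exists; consequently, to get $S \subseteq \nf{\aset}{\atrs}$ it suffices to check that $S$ itself meets those two conditions.

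For that, consider first a rule $\ell \to r \in \atrs$ and an assignment $\alpha \funin \vars{\ell} \to \core{\aalg}$ with $\defd{\interpreta{\ell}{\alpha}}$ (the undefined case being vacuous). Picking for each $x \in \vars{\ell}$ a ground term $s_x$ with $\interpret{s_x} = \funap{\alpha}{x}$ and taking the substitution $\asubst \funin x \mapsto s_x$, the term $\subst{\asubst}{\ell}$ is ground, satisfies $\interpret{\subst{\asubst}{\ell}} = \interpreta{\ell}{\alpha}$ by the substitution lemma, and is itself a redex, hence not a normal form; thus $\interpreta{\ell}{\alpha} \notin S$. Next, let $f \in \asig$, let $a_1,\ldots,a_n \in \core{\aalg}$ with $\defd{\funap{\interpret{f}}{a_1,\ldots,a_n}}$, and suppose $a_i \notin S$ for some $i$. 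Then there is a ground term $t_i$ with $\interpret{t_i} = a_i$ that is not a normal form, and for $j \ne i$ we pick a ground term $t_j$ with $\interpret{t_j} = a_j$; the ground term $\funap{f}{t_1,\ldots,t_n}$ then has interpretation $\funap{\interpret{f}}{a_1,\ldots,a_n}$ and contains the redex lying inside $t_i$, so $\funap{\interpret{f}}{a_1,\ldots,a_n} \notin S$. Hence $S$ satisfies both conditions, and $S \subseteq \nf{\aset}{\atrs}$.

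For the converse, I would show the contrapositive of $\nf{\aset}{\atrs} \subseteq S$: every ground term $t$ with $\defd{\interpret{t}}$ that is \emph{not} an $\atrs$-normal form has $\interpret{t} \notin \nf{\aset}{\atrs}$. This goes by induction on $t$. If $t$ is an instance $\subst{\asubst}{\ell}$ of a left-hand side $\ell \to r \in \atrs$, set $\funap{\alpha}{x} = \interpret{\funap{\asubst}{x}}$ for $x \in \vars{\ell}$; these values are defined (being interpretations of subterms of the defined term $t$) and hence lie in $\core{\aalg}$, so by the first condition of Definition~\ref{def:nfalg} we get $\interpret{t} = \interpreta{\ell}{\alpha} \notin \nf{\aset}{\atrs}$. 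Otherwise $t = \funap{f}{t_1,\ldots,t_n}$ with the redex occurring inside some $t_i$, so $t_i$ is not a normal form; by the induction hypothesis $\interpret{t_i} \notin \nf{\aset}{\atrs}$, and since each $\interpret{t_j} \in \core{\aalg}$, the second condition of Definition~\ref{def:nfalg} yields $\interpret{t} = \funap{\interpret{f}}{\interpret{t_1},\ldots,\interpret{t_n}} \notin \nf{\aset}{\atrs}$.

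I do not expect a genuine obstacle: the argument is essentially bookkeeping. The only points that need a little care are the partiality discipline — subterms of a defined term are defined, so the interpretations appearing in the argument genuinely belong to $\core{\aalg}$ and the substitution lemma is applicable — and making precise the phrase ``the largest set satisfying the conditions'', which the union-closure remark in the first paragraph settles.
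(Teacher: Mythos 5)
Your proof is correct, and it is exactly the elaboration the paper leaves implicit: the paper offers no proof at all, merely asserting the lemma holds ``by construction'' from Definition~\ref{def:nfalg}. Your two inclusions (showing the set $S$ of ``semantic normal forms'' satisfies the two closure conditions, hence sits inside the largest such set, and conversely pushing non-normal ground terms out of $\nf{\aset}{\atrs}$ by structural induction) are the natural unfolding of that construction, and your care with partiality, the substitution lemma, and the existence of the largest set via union-closure is all sound.
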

As mentioned above the interpretations do not need to be monotonic in argument positions
which are normal forms. We formalise this by defining a replacement map 
for the labelling $\lAbeltrs{\aalg}{\atrs}$ of $\atrs$ which does not contain
argument positions that are in normal form.
\begin{definition}\normalfont\label{def:qtransform}
  Let $\atrs$ be a TRS over $\asig$,
  and $\aalg = \pair{\aset}{\sinterpret}$ a partial $\asig$-algebra.
  Let the replacement map $\nfamumap{\atrs}$ be defined for 
  every symbol $f^{\lambda} \in \lAbelsig{\aalg}{\asig}$ with $\lambda = \tuple{a_1,\ldots,a_{\arity{f}}}$ as follows:
  $\funap{\nfamumap{\atrs}}{f^{\lambda}} = \{1,\ldots,\arity{f}\} \setminus \{i \where a_i \in \nf{\aset}{\atrs}\}$.
\end{definition}

As an instance of Theorem~\ref{thm:stepwise}
we obtain a method for stepwise rule removal for local termination that is based
on a combination of monotone partial models 
and extended monotone algebras.

\begin{theorem}\label{thm:quasicomb}
  Let $\atrs$, $\atrs'$ and $\ctrs$ 
  be TRSs over $\asig$, and $\termset \subseteq \ter{\asig,\setemp}$ a set of terms such that
  $\SNrels{\ctrs}{\atrs \cup \atrs'}{\termset}$ holds.
  Furthermore let 
  $\aalg = \triple{\aset}{\sinterpret}{{\qge}}$
  be a monotone partial model for $\atrs \cup \atrs' \cup \ctrs$ with $\termset \subseteq \lang{\aalg}$,
  and $\balg = \quadruple{\bset}{\interpret{\cdot}_\balg}{{\alggt}}{{\algge}}$
  an extended well-founded $\nfamumap{\atrs \cup \atrs'}$-monotone $(\lAbelsig{\aalg}{\asig})$-algebra
  such that:
  \begin{enumerate}[\em(1)]
   \item $\pair{\bset}{{\alggt}}$ is a model for $\lAbeltrs{\aalg}{\atrs'}$,
   \item $\pair{\bset}{{\algge}}$ is a model for $\lAbeltrs{\aalg}{\atrs}$, and
  \item for all $f \in \asig$, $\vec{a}_1\, a\, \vec{a}_2\in \aset^{\arity{f}}$, $a \qge a' \in \aset$, and $b_1,\ldots,b_{\arity{f}} \in \bset$:
        \[\funap{\interpret{\slAbel{f}{\vec{a}_1\, a\, \vec{a}_2}}_\balg}{b_1,\ldots,b_{\arity{f}}}
          \algge
          \funap{\interpret{\slAbel{f}{\vec{a}_1\, a'\, \vec{a}_2}}_\balg}{b_1,\ldots,b_{\arity{f}}}
          \punc.
        \]
  \end{enumerate}
  Then $\SNrels{(\ctrs \cup \atrs')}{\atrs}{\termset}$ holds.
\end{theorem}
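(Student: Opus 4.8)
The plan is to derive Theorem~\ref{thm:quasicomb} from Theorem~\ref{thm:stepwise}, feeding the latter a \emph{combined} partial $\asig$-algebra built from $\aalg$ and $\balg$ in the spirit of semantic labelling. Let $\calg = \pair{\cset}{\interpret{\cdot}_\calg}$ have as carrier $\cset$ the set of all pairs $(a,b)$ with $a = \interpret{t}$ and $b = \interpret{\funap{\sdolAbel{\aalg}}{t}}_\balg$ for some ground $t$ with $\defd{\interpret{t}}$, where $\funap{\sdolAbel{\aalg}}{t}$ is the semantic labelling of $t$ (Definition~\ref{def:local:labeling}), and put $\funap{\interpret{f}_\calg}{(a_1,b_1),\ldots,(a_n,b_n)} \defdby (\funap{\interpret{f}}{a_1,\ldots,a_n},\, \funap{\interpret{\slAbel{f}{(a_1,\ldots,a_n)}}_\balg}{b_1,\ldots,b_n})$ when $\defd{\funap{\interpret{f}}{a_1,\ldots,a_n}}$, and undefined otherwise. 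Then $\interpret{\cdot}_\calg$ is compositional on ground terms and $\interpret{t}_\calg$ is defined exactly when $\interpret{t}$ is (as $\balg$ is total), so $\termset \subseteq \lang{\aalg}$ yields that $\termset$ is defined in $\calg$. Since replacing $\alggt$ and $\algge$ by their reflexive--transitive closures preserves both $\SNrel{{\alggt}}{\,{\algge}}$ and $\nfamumap{\atrs\cup\atrs'}$-monotonicity, I may also assume that $\alggt$ and $\algge$ are quasi-orders.

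For the three relations required by Theorem~\ref{thm:stepwise} I would not take the naive product relations, but the \emph{least} relations $\algnu$, $\alggt_\calg$, $\algge_\calg$ on $\cset$ that contain all rule instances $(\interpret{\subst{\asubst}{\ell}}_\calg, \interpret{\subst{\asubst}{r}}_\calg)$ for $\ell \to r$ in $\ctrs$, in $\atrs'$, in $\atrs$ respectively (with $\defd{\interpret{\subst{\asubst}{\ell}}}$), and that are closed under the operations $\interpret{f}_\calg$. Monotonicity of each $\interpret{f}_\calg$ with respect to all three relations is then built in; closedness holds because every generated pair has the shape $(\interpret{\contextfill{\acontext}{\subst{\asubst}{\ell}}}_\calg, \interpret{\contextfill{\acontext}{\subst{\asubst}{r}}}_\calg)$ for a single context $\acontext$ and one rule, and rewriting a defined term with $\atrs\cup\atrs'\cup\ctrs$ yields a defined term again (the reasoning of Lemma~\ref{lem:decr}, using that $\aalg$ is a monotone partial model for $\atrs\cup\atrs'\cup\ctrs$). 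Thus $\triple{\cset}{\interpret{\cdot}_\calg}{{\algnu}}$, $\triple{\cset}{\interpret{\cdot}_\calg}{{\alggt_\calg}}$ and $\triple{\cset}{\interpret{\cdot}_\calg}{{\algge_\calg}}$ are monotone partial $\asig$-algebras, and --- since every assignment into $\cset$ comes from a ground substitution --- $\alggt_\calg$, $\algge_\calg$, $\algnu$ are partial models for $\atrs'$, $\atrs$, $\ctrs$. What remains is $\SNrel{{\alggt_\calg}}{\,{\algge_\calg}}$, which I would reduce to $\SNrel{{\alggt}}{\,{\algge}}$ by projecting onto the second ($\balg$-)component.

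The heart of the argument, and the step I expect to be the main obstacle, is to show that this projection sends every pair of $\algge_\calg$ into $\algge$ and every pair of $\alggt_\calg$ into $\relcomp{{\alggt}}{{\algge}}$. At a base rule instance this is precisely conditions~(2) and~(1), since $\funap{\sdolAbel{\aalg}}{\subst{\asubst}{\ell}}$ rewrites at the root to $\funap{\sdolAbel{\aalg}}{\subst{\asubst}{r}}$ by a rule of $\lAbeltrs{\aalg}{\atrs}$ resp.\ $\lAbeltrs{\aalg}{\atrs'}$. When a pair is lifted through one layer $\interpret{f}_\calg$, the $\balg$-component changes twice: the argument in position $i$ moves (along $\algge$ resp.\ $\relcomp{{\alggt}}{{\algge}}$ by the induction hypothesis), and the label of $f$ changes from $\lambda$ to $\lambda'$ with $\lambda_i \qge \lambda'_i$, because the first components move along $\qge$ by monotonicity of $\aalg$. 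The label change is absorbed by condition~(3); the danger is that $\nfamumap{\atrs\cup\atrs'}$-monotonicity of $\interpret{\slAbel{f}{\lambda}}_\balg$ may fail in position $i$. Here Lemma~\ref{lem:nfalg} is needed: the source of the lifted pair is $\interpret{\contextfill{\acontext}{\subst{\asubst}{\ell}}}_\calg$, so its $i$-th argument is a term containing a redex, hence not a normal form, hence $\lambda_i \notin \nf{\aset}{\atrs\cup\atrs'}$ and $i \in \funap{\nfamumap{\atrs\cup\atrs'}}{\slAbel{f}{\lambda}}$; carrying out the argument move \emph{first} (by $\samumap$-monotonicity of $\interpret{\slAbel{f}{\lambda}}_\balg$) and the label change afterwards (by condition~(3)), and using transitivity of $\algge$, closes the induction. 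Consequently an infinite $({\alggt_\calg}\cup{\algge_\calg})$-sequence with infinitely many $\alggt_\calg$-steps projects to an infinite $({\alggt}\cup{\algge})$-sequence with infinitely many $\alggt$-steps, contradicting $\SNrel{{\alggt}}{\,{\algge}}$. As all hypotheses of Theorem~\ref{thm:stepwise} are then met, we conclude $\SNrels{(\ctrs\cup\atrs')}{\atrs}{\termset}$.
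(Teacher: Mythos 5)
Your proof is correct, and it follows the same overall route as the paper: both feed Theorem~\ref{thm:stepwise} a combined partial $\asig$-algebra built from $\aalg$ and $\balg$ in the style of semantic labelling. The difference lies in how the orders on the combined carrier are realised. The paper takes the full product $\aset\times\bset$ and defines ${\alggt_\calg}$, ${\algge_\calg}$ componentwise, guarded by the condition $a_1\notin\nf{\aset}{\atrs\cup\atrs'}$ on the first component, with $\algnu$ comparing first components only; it then declares the remaining verification ``straightforward''. You instead restrict the carrier to interpretations of ground terms, generate the three relations as the least context-compatible extensions of the respective rule instances, and obtain $\SNrel{{\alggt_\calg}}{\,{\algge_\calg}}$ by projecting onto the $\balg$-component. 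Your formulation pays off precisely at the point the paper glosses over: lifting a step through one layer $f(\ldots,\cdot,\ldots)$ changes both the argument and the label of $f$, so the second component only drops along $\relcomp{{\alggt}}{{\algge}}$ (argument move by $\nfamumap{\atrs\cup\atrs'}$-monotonicity, then label change by condition~(3)), not along $\alggt$ itself. With the paper's literal product order this means $\interpret{f}_\calg$ is not monotone with respect to ${\alggt_\calg}$ exactly as written unless one first closes $\alggt$ under postcomposition with $\algge$; your generated-relation-plus-projection argument absorbs this automatically, at the harmless cost of the WLOG passage to reflexive--transitive closures (which preserves $\SNrel{{\alggt}}{\,{\algge}}$, the monotonicity conditions, and conditions~(1)--(3)). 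Both arguments hinge on the same two ingredients, which you identify correctly: Lemma~\ref{lem:nfalg} guarantees that the relevant argument position carries a term containing an $(\atrs\cup\atrs')$-redex and hence lies in the replacement map of the labelled symbol, and condition~(3) absorbs the change of label induced by the $\qge$-step in the first component.
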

\begin{proof}%
  We construct an
  extended well-founded monotone partial
  $\asig$-algebra $\calg = \quadruple{\cset}{\interpret{\cdot}_\calg}{{\alggt}_\calg}{{\algge}_\calg}$
  fulfilling the requirements of Theorem~\ref{thm:stepwise}.
  Let $\cset = \aset \times \bset$,
  and define 
  $\pair{a_1}{b_1} \alggt_\calg \pair{a_2}{b_2} 
   \Longleftrightarrow a_1 \not\in \nf{\aset}{\atrs \cup \atrs'} \;\&\; a_1 \qge a_2 \;\&\; b_1 \alggt b_2$
  and
  $\pair{a_1}{b_1} \algge_\calg \pair{a_2}{b_2} 
   \Longleftrightarrow a_1 \not\in \nf{\aset}{\atrs \cup \atrs'} \;\&\; a_1 \qge a_2 \;\&\; b_1 \algge b_2$.
  Note that the $\nfamumap{\atrs \cup \atrs'}$-monotonicity is implemented
  by excluding elements $\pair{a_1}{b_1}$ with $a_1 \in \nf{\aset}{\atrs \cup \atrs'}$ from being sources of $\alggt \cup \algge$ steps.
  Then for each $f \in \asig$:
  $\funap{\interpret{f}_\calg}{\pair{a_1}{b_1},\ldots,\pair{a_{\arity{f}}}{b_{\arity{f}}}}
  \!=\!\pair{\funap{\interpret{f}_\aalg}{a_1,\ldots,a_{\arity{f}}}}
         {\funap{\interpret{\slAbel{f}{a_1,\ldots,a_{\arity{f}}}}_\balg}{b_1,\ldots,b_{\arity{f}}}}$
  if $\defd{\funap{\interpret{f}_\aalg}{a_1,\ldots,a_{\arity{f}}}}$, and $\undefd{}$ otherwise.
  Finally, we define the relation $\algnu$ on $\cset$ by 
  $\pair{a_1}{b_1} \algnu \pair{a_2}{b_2} \Longleftrightarrow a_1 \qge a_2$.
  Now it is straightforward to check that all requirements of Theorem~\ref{thm:stepwise}
  are fulfilled, and we conclude $\SNrels{(\ctrs \cup \atrs')}{\atrs}{\termset}$.
\end{proof}

Let us briefly elaborate on the theorem.
As an instance of Theorem~\ref{thm:stepwise}, Theorem~\ref{thm:quasicomb}
is applicable for proving local termination as well as local relative termination.
We start without knowledge $\SNrels{\setemp}{\atrs \cup \btrs}{\termset}$ and stepwise `remove' rules,
more precisely, we move rules from the right side to the left side of the slash `$/$'.
If we reach the goal $\SNrels{\atrs}{\btrs}{\termset}$, 
then the proof has been successful.

The use of partial monotone partial models for $\atrs \cup \atrs' \cup \ctrs$ 
with $\termset \subseteq \lang{\aalg}$
guarantees that the language we consider is closed under rewriting.
The set $\atrs'$ is the set of strictly decreasing rules that we are aiming to remove.
The $\nfamumap{\atrs \cup \atrs'}$-monotone $\lAbelsig{\aalg}{\asig}$-algebra $\balg$
then has the task to make all labelled rules stemming from $\atrs'$ strictly decreasing ($\alggt$),
and from $\atrs$ weakly decreasing ($\algge$).
Then we conclude that $\atrs' \cup \ctrs$ is terminating relative to 
$\atrs$ on $\termset$.

\begin{example}[Klop, see \cite{barendregt:84}, Exercise 7.4.7]
\label{ex:SK}
Example~\ref{ex:S} can be generalised 
to include the combinator $\combK$,
which has the reduction rule $\combK x y \to x$.
The initial language of flat $\combS,\combK$-terms is $\termset = (\combS|\combK)^*$; 
for example $\combS \combS \combK \combS = (((\combS \combS) \combK) \combS)$.
The partial model presented in Example~\ref{ex:Smodel}
can be extended to a monotone partial model for this generalised example 
by fixing $\interpret{\combK} = 0$
and $2 > 0$, $2 > 1$.
Note that this is not a model due to
$\interpreta{\combK x y}{\alpha} = 2 > 0 = \interpreta{x}{\alpha}$ 
for $\alpha = \mylam{z}{0}$.
For the complete proof, employing this model, we refer to: %
\begin{center}
\uRl{{\tt http://infinity.few.vu.nl/local/}}.
\end{center}
\end{example}

The second example illustrates the stepwise rule removal.
\begin{example}\label{ex:TM}{%
  \newcommand{\free}{1}      %
  \newcommand{\sright}{R}
  \newcommand{\sleft}{L}
  \newcommand{\slleft}{F}
  \newcommand{\wall}{\Box}   %
  \newcommand{\blank}{0}     %
  \newcommand{\start}{S}
  \newcommand{\tm}{\msf{M}}
  \newcommand{\finish}{\mit{finish}}
  We use a Turing-machine-like TRS which does the following.
  Starting with its head between two symbols $\free$, the tape
  containing  a finite string of $\free$'s and further blanks ($\blank$),
  it initially puts two boxes $\wall$ left and right of its head
  and afterwards alternately runs left and right between the boxes,
  each time moving them one position further, until 
  the blanks are reached:
  \begin{align*}
    &\free\free\wall\free\sright\free\free\free\free\free\wall\free\free
    \mred \free\free\wall\free\free\free\free\free\free\sright\wall\free\free\\
    \red\ &\free\free\wall\free\free\free\free\free\free\sleft\free\wall\free
    \mred \free\free\wall\sleft\free\free\free\free\free\free\free\wall\free\\
    \red\ &\free\wall\free\sright\free\free\free\free\free\free\free\wall\free \mred \ldots
  \end{align*}
  This is implemented by the TRS $\atrs$ consisting of the following rules:
  \begin{align*}
    \free\start\free &\to \wall\sright\wall &
    \sright\free &\to \free\sright &
    \sright\wall\free &\to \sleft\free\wall &
    \sright\wall\blank &\to \slleft\wall\blank \\
    &&
    \free\sleft &\to \sleft\free &
    \free\wall\sleft &\to \wall\free\sright &
    \blank\wall\sleft &\to \blank\wall\sright \\
    &&
    \free\slleft &\to \slleft\free &
    \free\wall\slleft &\to \wall\free\sright &
    \blank\wall\slleft &\to \finish
  \end{align*}
  where all symbols apart form $\finish$ (which is a constant) are unary,
  but have been written without parenthesis for the purpose of compactness.
  Note that the construction of the TRS is similar to
 the standard translation of Turing machines
  to string rewriting systems as given in~\cite{terese:03}.

  While the Turing machine is terminating on every input,
  the TRS $\atrs$ fails to be globally terminating.
  The reason is that $\atrs$ allows for configurations with multiple heads working at the same time on the same tape:
  \begin{gather*}
   \blank\wall\sright\wall\free\slleft\wall\blank 
   \red \blank\wall\sleft\free\wall\slleft\wall\blank 
   \red \blank\wall\sleft\wall\free\sright\wall\blank
   \red^2 \blank\wall\sright\wall\free\slleft\wall\blank \to \ldots
  \end{gather*}
  We will prove that $\atrs$ is locally terminating on all terms
  containing arbitrary occurrences of the symbols $\blank$, $\free$
  and at most one occurrence of $\start$, that is,
  the language given by $\termset = \{\blank,\free\}^* \,\start\, \{\blank,\free\}^* \finish$.
  As the first step we remove the rules $\free\start\free \to \wall\sright\wall$ and $\blank\wall\slleft \to \finish$.
  We do this by using a monotone model $\aalg$ consisting of only one element, accepting all terms.
  We combine this model with the $\lAbelsig{\aalg}{\asig}$-algebra $\balg$
  where $\bset = \nat$
  and $\funap{\interpret{\free}_\balg}{x} = \funap{\interpret{\blank}_\balg}{x} = x+1$,
  all other symbols are interpreted as $\mylam{x}{x}$.
  This makes the above two rules decreasing ($\alggt$ is a model for them).

  In the second step, %
  we use a partial model $\aalg = \triple{\aset}{\sinterpret}{{\qge}}$ where $\aset = \{0,1\}$,
  $0 \ge 0$, $1 \ge 1$ (but not $1 \ge 0$),
  $\interpret{\finish} = 0$ and the other interpretations are given in Table~\ref{tab:turing}:

  \begin{table}[h!]
  \begin{center}
  { \renewcommand{\arraystretch}{1.3}
  \begin{tabular}{|@{\hspace{1ex}\extracolsep{1ex}}c@{\hspace{1ex}\extracolsep{.4ex}}|c@{\hspace{.4ex}\extracolsep{.4ex}}|c@{\hspace{.4ex}\extracolsep{.4ex}}|c@{\hspace{.4ex}\extracolsep{.4ex}}|c@{\hspace{.4ex}\extracolsep{.4ex}}|c@{\hspace{.4ex}\extracolsep{.4ex}}|c@{\hspace{.4ex}\extracolsep{.4ex}}|c@{\hspace{.4ex}\extracolsep{.4ex}}|}
    \hline
    $x$ & $\funap{\interpret{\free}}{x}$ & $\funap{\interpret{\wall}}{x}$ & $\funap{\interpret{\sright}}{x}$ & $\funap{\interpret{\sleft}}{x}$ & $\funap{\interpret{\slleft}}{x}$ & $\funap{\interpret{\blank}}{x}$ & $\funap{\interpret{\start}}{x}$ \\
    \hline
    $0$ & $0$ & $1$ & $\undefd{}$ & $\undefd{}$ & $\undefd{}$ & $0$         & $0$          \\
    \hline
    $1$ & $1$ & $0$ & $1$         & $1$         & $1$         & $\undefd{}$ & $\undefd{}$ \\
    \hline
  \end{tabular}}
  \caption{Symbol interpretations.}
  \label{tab:turing}
  \end{center}
  \end{table}

  \noindent As required by the theorem $\aalg$ is a monotone partial model for $\atrs$ including the two removed rules
  $\ctrs = \{\free\start\free \to \wall\sright\wall,\; \blank\wall\slleft \to \finish\}$
  (without them $\termset$ would consist of normal forms).
  We use this partial model together with the extended well-founded monotone $\lAbelsig{\aalg}{\asig}$-algebra 
  $\balg= \quadruple{\nat}{\interpret{\cdot}_\balg}{{\alggt}}{{\algge}}$ 
  where $\alggt$ and $\algge$ are the usual orders $>$ and $\ge$ on $\nat$, respectively.
  The interpretation $\interpret{\cdot}_\balg$ is
  $\interpret{\finish}_\balg = 7$,
  $\funap{\interpret{\slAbel{\free}{0}}_\balg}{x} = 2\cdot x + 1$,
  $\funap{\interpret{\slAbel{\free}{1}}_\balg}{x} = 2\cdot x$,
  $\funap{\interpret{\slAbel{\wall}{0}}_\balg}{x} = \funap{\interpret{\slAbel{\wall}{1}}_\balg}{x} = x$,
  $\funap{\interpret{\slAbel{\sright}{1}}_\balg}{x} = 2\cdot x$,
  $\funap{\interpret{\slAbel{\sleft}{1}}_\balg}{x} = 2\cdot x + 1$,
  $\funap{\interpret{\slAbel{\slleft}{1}}_\balg}{x} = 2\cdot x$,
  $\funap{\interpret{\slAbel{\blank}{0}}_\balg}{x} = 2\cdot x$, and
  $\funap{\interpret{\slAbel{\start}{0}}_\balg}{x} = 5\cdot x + 6$.
  Then $\atrs'$ consists of the following rules:
  $\sright\wall\free \to \sleft\free\wall$, 
  $\free\sleft \to \sleft\free$, 
  $\free\wall\sleft \to \wall\free\sright$, 
  $\blank\wall\sleft \to \blank\wall\sright$, and
  $\free\wall\slleft \to \wall\free\sright$.
  Then $\pair{\balg}{{\alggt}}$ is a model for $\lAbeltrs{\aalg}{\atrs'}$.
  For instance consider the rule $\sright\wall\free \to \sleft\free\wall$.
  The labelling
  $\slAbel{\sright}{1}\slAbel{\wall}{0}\slAbel{\free}{0} \to \slAbel{\sleft}{1}\slAbel{\free}{1}\slAbel{\wall}{0}$
  is in $\lAbeltrs{\aalg}{\atrs'}$ 
  and its interpretation in $\balg$ is:
  $\funap{\slAbel{\sright}{1}\slAbel{\wall}{0}\slAbel{\free}{0}}{x} 
   = 4\cdot x + 2 
   > 4\cdot x + 1
   = \funap{\slAbel{\sleft}{1}\slAbel{\free}{1}\slAbel{\wall}{0}}{x}$.
  The labelling
  $\slAbel{\sright}{0}\slAbel{\wall}{1}\slAbel{\free}{1} \to \slAbel{\sleft}{0}\slAbel{\free}{0}\slAbel{\wall}{1}$
  is not in $\lAbeltrs{\aalg}{\atrs'}$ since its left-hand side is undefined with respect to $\aalg$,
  thus we can ignore this rule.
  Analogously it can be verified 
  $\pair{\balg}{{\algge}}$ is a model for $\lAbeltrs{\aalg}{\atrs \setminus \atrs'}$.
  Since $>$ is the empty relation on $\aset$
  the third condition of Theorem~\ref{thm:quasicomb} holds trivially.

  The three remaining rules
  $\sright\free \to \free\sright$, $\free\slleft \to \slleft\free$, and $\sright\wall\blank \to \slleft\wall\blank$
  are even globally terminating.
  This corresponds to taking a model which has only one state and accepts all terms
  together with the corresponding termination order which proves global termination.
  Hence we have proven $\SNrs{\atrs}{\termset}$ by three consecutive applications of Theorem~\ref{thm:quasicomb}.
}\end{example}

Finally, we give a theorem that allows us to remove rules and forget about them.
We need to be sure that these rules do not influence the family, that is, the set of reachable terms.
This is guaranteed if all terms in the family are normal forms with 
respect to these rules.

\begin{theorem}\label{thm:qremove}
  Let $\atrs$, $\atrs'$ and $\btrs$
  be TRSs over $\asig$, and $\termset \subseteq \ter{\asig,\setemp}$.
  Let 
  $\aalg = \triple{\aset}{\sinterpret}{{\qge}}$
  be a monotone partial model for $\atrs \cup \atrs' \cup \btrs$ with $\termset \subseteq \lang{\aalg}$
  such that for all rules $\ell \to r \in \atrs'$ and $\alpha \funin \vars{\ell} \to \aset$
  we have $\undefd{\interpreta{\ell}{\alpha}}$ (the left-hand side is undefined).
  Then $\SNrels{\atrs}{\btrs}{\termset}$ implies $\SNrels{\atrs \cup \atrs'}{\btrs}{\termset}$.
\end{theorem}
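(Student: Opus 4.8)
The plan is to establish the contrapositive: from an infinite $(\atrs \cup \atrs')$-over-$\btrs$ rewrite sequence starting inside $\termset$ I would extract an infinite $\atrs$-over-$\btrs$ sequence starting inside $\termset$, which contradicts $\SNrels{\atrs}{\btrs}{\termset}$. Concretely, I would assume that $\SNrels{\atrs}{\btrs}{\termset}$ holds but $\SNrels{\atrs \cup \atrs'}{\btrs}{\termset}$ fails. By Remark~\ref{rem:relterm} there are then a term $t_0 \in \termset$ and an infinite rewrite sequence $t_0 \to t_1 \to t_2 \to \cdots$ with respect to $\red_{\atrs \cup \atrs' \cup \btrs}$ that contains infinitely many $\red_{\atrs \cup \atrs'}$-steps.

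The first step is to show, by induction on $i$, that $\defd{\interpret{t_i}}$ for every $i$, that is, that every $t_i$ lies in $\lang{\aalg}$. For $i = 0$ this is the hypothesis $\termset \subseteq \lang{\aalg}$. For the induction step, let $t_i = \contextfill{\acontext}{\subst{\asubst}{\ell}} \to \contextfill{\acontext}{\subst{\asubst}{r}} = t_{i+1}$ with $\ell \to r \in \atrs \cup \atrs' \cup \btrs$ and $\defd{\interpret{t_i}}$. Since $\subst{\asubst}{\ell}$ is a subterm of $t_i$ it is defined, so $\interpreta{\ell}{\alpha}$ is defined for the induced assignment $\alpha$ with $\funap{\alpha}{x} = \interpret{\subst{\asubst}{x}}$; because $\qge$ is a partial model for $\atrs \cup \atrs' \cup \btrs$ this yields $\interpreta{\ell}{\alpha} \qge \interpreta{r}{\alpha}$, and in particular $\defd{\interpret{\subst{\asubst}{r}}}$. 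Propagating definedness up through the context $\acontext$ by closedness of the interpretations $\interpret{f}$ — exactly the argument in the proof of Lemma~\ref{lem:decr} — gives $\defd{\interpret{t_{i+1}}}$. Hence $\lang{\aalg}$ is closed under $\red_{\atrs \cup \atrs' \cup \btrs}$ and the whole sequence remains inside it.

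The second step uses the special assumption on $\atrs'$: a defined term can contain no $\atrs'$-redex. Indeed, if $t_i = \contextfill{\acontext}{\subst{\asubst}{\ell}}$ with $\ell \to r \in \atrs'$, then $\subst{\asubst}{\ell}$ is a subterm of the defined term $t_i$, so $\interpreta{\ell}{\alpha}$ would be defined for the induced $\alpha \funin \vars{\ell} \to \aset$, contradicting $\undefd{\interpreta{\ell}{\alpha}}$. Therefore no step $t_i \to t_{i+1}$ uses a rule of $\atrs'$, so each of the infinitely many $\red_{\atrs \cup \atrs'}$-steps is in fact a $\red_{\atrs}$-step, and the sequence is an infinite $\red_{\atrs \cup \btrs}$-sequence from $t_0 \in \termset$ with infinitely many $\red_{\atrs}$-steps. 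By Remark~\ref{rem:relterm} this contradicts $\SNrels{\atrs}{\btrs}{\termset}$, which finishes the proof.

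I expect the only delicate point to be the first step — preservation of definedness along the sequence — and it rests entirely on the deliberate asymmetry of the partial-model condition: the inequality $\interpreta{\ell}{\alpha} \qge \interpreta{r}{\alpha}$ is required only when the left-hand side is defined, which is exactly what makes $\lang{\aalg}$ closed under rewriting. This is the same observation underlying the \mbox{`only if'-direction} of Theorem~\ref{thm:snrel}, so the statement is essentially a specialisation of that reasoning; note in passing that neither well-foundedness nor monotonicity of $\qge$ is actually used here, only closedness of the interpretations. Everything else is routine bookkeeping.
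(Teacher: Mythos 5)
Your proof is correct and follows essentially the same route as the paper's (much terser) argument: the paper likewise notes that $\famil{\atrs \cup \atrs' \cup \btrs}{\termset} \subseteq \lang{\aalg}$ and that, since every left-hand side of $\atrs'$ is undefined while every reachable term is defined, all reachable terms are $\atrs'$-normal forms, so the $\atrs'$-rules can be ignored. One small correction to your closing remark: monotonicity of the interpretations with respect to $\qge$ \emph{is} needed in the definedness-propagation step --- closedness carries definedness across a single function symbol only when the two argument interpretations are $\qge$-related, and it is monotonicity that re-establishes this relation one level up so that closedness can be applied again (this is exactly why Lemma~\ref{lem:decr} invokes both properties; with closedness alone, definedness need not be preserved under rewriting in a context). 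Since the theorem's hypotheses supply monotonicity anyway, this does not affect the validity of your proof.
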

\begin{proof}
  From $\famil{\atrs \cup \atrs' \cup \btrs}{\termset} \subseteq \lang{\aalg}$
  together with $\undefd{\interpreta{\ell}{\alpha}}$ for all $\ell \to r \in \atrs'$
  and $\alpha$ it follows that the rules in $\atrs'$ are not reachable. All terms in $\family{\termset}$ are normal
  forms with respect to $\atrs'$. Hence we can ignore these rules.
\end{proof}

\begin{example}{
  \newcommand{\ssuc}{\msf{s}}
  \newcommand{\suc}{\funap{\msf{s}}}
  \newcommand{\zer}{0}
  Consider the TRS $\atrs$ consisting of the following four rules:
  \begin{align*}
    \funap{f}{\suc{\suc{x}}} &\to \funap{f}{\funap{o}{x}} &
    \funap{o}{\suc{\suc{x}}} &\to \suc{\suc{\funap{o}{x}}} &
    \funap{o}{\zer} &\to \zer &
    \funap{o}{\suc{\zer}} &\to \suc{\suc{\suc{\zer}}}
  \end{align*}
  The TRS is not terminating: $\funap{f}{\suc{\suc{\suc{\zer}}}} \red \funap{f}{\funap{o}{\suc{\zer}}} \red \funap{f}{\suc{\suc{\suc{\zer}}}} \to \ldots$. However, the function $f$ is terminating when applied to an even number,
  that is, the language $\termset = \{\funap{f}{\funap{\ssuc^{2\cdot n}}{\zer}} \where n \in \nat\}$.
  We choose $\aalg = \triple{\{0,1\}}{\sinterpret}{{\qge}}$
  where $\interpret{\zer} = 0$, $\funap{\interpret{\ssuc}}{0} = 1$, $\funap{\interpret{\ssuc}}{1} = 0$,
  $\funap{\interpret{o}}{0} = 0$, $\undefd{\funap{\interpret{o}}{1}}$, $\funap{\interpret{f}}{0} = 0$ and $\undefd{\funap{\interpret{f}}{1}}$.
  Then $\aalg$ is a monotone partial model with $\termset \subseteq \lang{\aalg}$.
  We have $\undefd{\interpreta{\funap{o}{\suc{\zer}}}{\alpha}}$ (for all $\alpha$), thus the rule
  $\funap{o}{\suc{\zer}} \to \suc{\suc{\suc{\zer}}}$ is never applicable and can be removed.
}\end{example}

\section{Conclusion and Future Work}

\noindent
We have implemented some of the methods proposed in this paper.
More information and the source code of the implementations can be found on the website:
\begin{center}
  \url{http://infinity.few.vu.nl/local/}
\end{center}
In particular, we have implemented the method from Section~\ref{sec:s}.
The program automatically finds the minimal partial model $\aalg$ for the language of normalizing $\combS$-terms,
and transforms the local termination problem into a global termination problem.
We have formally verified the model property, and that
all terms that are not in the language of $\aalg$  are non-terminating.
Global termination of the transformed system 
has been proven by TTT2~(1.0) \cite{TTT09} and formally verified by CeTA~(1.05) \cite{ceta}.
Thereby we have automated one of the central contributions of~\cite{DBLP:journals/iandc/Waldmann00}.

We intend to generalize the characterization of local termination to context-sensitive rewriting~\cite{luca:98}, using $\samumap$-monotonic, partial $\asig$-algebras; and also 
to top termination, using weakly extended, monotone, partial $\asig$-algebras~\cite{AG00,EWZ06}.

Methods using transformations from certain properties, 
like liveness properties~\cite{adam} or outermost termination~\cite{zant:outermost:08},
to termination usually give rise to local termination problems.
That is, termination is of interest only for those terms 
which are in the image of the transformation.
For example, we noted that the transformation in~\cite{zant:outermost:08} gives rise 
to a language which can be described by a partial model.
Then it suffices to show completeness of the transformation to local termination,
and employing Theorem~\ref{thm:transform} we obtain a complete transformation 
to global termination for free.
\weg{%
Noteworthy is also that using Theorem~\ref{thm:transform}
we obtain a variant of the transformation which has not been considered 
in~\cite{zant:outermost:08},
yielding the same number of rules with a larger signature.
The symbols $f^{\sharp}$ are additionally labeled with a number $i$ 
(that is, $f^{\sharp,i}$)
which indicates the number of the argument to which $\msf{down}$ has descended to.
This larger signature gives more freedom for the choice of interpretations
which could in principle strengthen the transformation.
Experiments remain to be carried out.
}%

\paragraph{Acknowledgements}
We than Vincent van Oostrom and the anonymous referees
for valuable suggestions for improving the presentation of the paper.

\bibliographystyle{alpha}
\bibliography{main}

\end{document}